\documentclass[11pt]{article}

\usepackage[margin=1in]{geometry}
\usepackage[T1]{fontenc}
\usepackage[utf8]{inputenc}
\usepackage{lmodern}
\usepackage{microtype}
\usepackage{amsmath,amssymb,amsthm,mathtools}
\usepackage{thm-restate}
\makeatletter
\IfFormatAtLeastTF{2026-06-01}{%
  \renewcommand\thmt@autorefsetup{%
    \@xa\def\csname\thmt@envname autorefname\@xa\endcsname\@xa{\thmt@thmname}%
  }%
}{}
\makeatother
\usepackage{xcolor,booktabs}
\usepackage{xurl}
\usepackage{hyperref}
\usepackage[capitalize,noabbrev]{cleveref}

\hypersetup{
  colorlinks=true,
  linkcolor=blue!55!black,
  citecolor=green!45!black,
  urlcolor=blue!60!black,
  pdfauthor={Debarati Das, Evangelos Kipouridis, Tomasz Kociumaka},
  pdftitle={Metric Weighted Edit Distance: A (3+epsilon)-Approximation in O-tilde\string_epsilon(N\string^1.6) Time}
}

\newtheorem{theorem}{Theorem}[section]
\newtheorem{lemma}[theorem]{Lemma}
\newtheorem{proposition}[theorem]{Proposition}
\newtheorem{corollary}[theorem]{Corollary}
\newtheorem{observation}[theorem]{Observation}
\crefname{observation}{Observation}{Observations}
\Crefname{observation}{Observation}{Observations}
\theoremstyle{definition}

\theoremstyle{remark}

\newcommand{\gap}{\bot}
\newcommand{\dd}{\mathinner{.\,.}}
\newcommand{\WED}{\operatorname{WED}}
\newcommand{\abs}[1]{\lvert #1\rvert}
\newcommand{\Ohtilde}{\widetilde O}

\title{\bfseries\boldmath Metric Weighted Edit Distance:\\
A \((3+\varepsilon)\)-Approximation in \(\widetilde O_\varepsilon(N^{1.6})\) Time}
\author{%
  \begin{tabular}{@{}c@{\quad}c@{\quad}c@{}}
    Debarati Das & Evangelos Kipouridis & Tomasz Kociumaka\\
    \small Pennsylvania &
      \multicolumn{2}{c@{}}{\small Max Planck Institute for Informatics}\\
    \small State University & \multicolumn{2}{c@{}}{\small Saarland Informatics Campus}\\
    \footnotesize\href{mailto:dxd5606@psu.edu}{\nolinkurl{dxd5606@psu.edu}} &
    \footnotesize\href{mailto:kipouridis@mpi-inf.mpg.de}{\nolinkurl{kipouridis@mpi-inf.mpg.de}} &
    \footnotesize\href{mailto:tomasz.kociumaka@mpi-inf.mpg.de}{\nolinkurl{tomasz.kociumaka@mpi-inf.mpg.de}}
  \end{tabular}%
}
\date{}

\begin{document}

\maketitle
\vspace{-2em}

\begin{abstract}
For every \(0<\varepsilon\le1\), we give a randomized
\((3+\varepsilon)\)-approximation to weighted edit distance when
the costs form a metric on the alphabet augmented with a gap symbol.  For
strings of total length \(N\), the running time is
\(\widetilde O(N^{8/5}/\varepsilon^{16/5})\), where \(\widetilde O\) suppresses
factors polynomial in \(\log(N/\varepsilon)\).
The dependence on \(N\) matches that of the fastest known
\((3+\varepsilon)\)-approximation for unit-cost edit distance.
The algorithm never underestimates the edit distance and achieves
the approximation guarantee with inverse-polynomial failure probability
in \(N\).  The running time bound assumes constant-time exact arithmetic operations
and metric queries, and it is independent of the numerical range of the
edit costs.

We build on three tools: the sampling framework of Chakraborty, Das,
Goldenberg, Kouck\'y, and Saks (J.\ ACM, 2020), with subsequent
refinements by Andoni (2020); Kuszmaul's removal of inexpensive characters
(ICALP 2019); and Klein's data structure for distances in planar
graphs (SODA 2005).  Our new ingredients include, among others, a decomposition of
one string into pieces of bounded length with highly structured total
deletion costs.  This decomposition lets us compare all pieces against
a small family of substrings of the other string.
\end{abstract}

\section{Introduction}
\label{sec:introduction}

Edit distance is a basic tool in sequence comparison, with applications including
error-correcting codes and biological sequence analysis~\cite{Lev66,NeedlemanWunsch70}.
For strings \(X\) and \(Y\), their edit distance is the
minimum number of character insertions, deletions, and substitutions
that transform \(X\) into \(Y\).

Weighted costs reflect symbol similarities and the importance of
inserted or deleted symbols.  We study \emph{weighted edit distance}
for a \emph{metric} \(w\) on \(\Sigma\cup\{\gap\}\), where \(\Sigma\) is the alphabet and
\(\gap\notin\Sigma\) is a gap symbol.
Substituting \(b\) for \(a\) costs \(w(a,b)\), while inserting or
deleting \(a\) costs \(w(\gap,a)=w(a,\gap)\).
The weighted edit distance \(\WED_w(X,Y)\) is the minimum total
cost of an edit sequence transforming \(X\) into \(Y\).
Unit-cost edit distance assigns distance one to every pair of
distinct symbols, including the gap.
Write \(N\coloneqq |X|+|Y|\) for the total input length.

The classical \(O(N^2)\)-time dynamic-programming algorithm arose in several early
papers~\cite{NeedlemanWunsch70,Sankoff72,Sellers74,WagnerFischer74};
the latter two already allowed nonuniform edit
costs.
Even for unit costs, Backurs and Indyk~\cite{BI18} showed that a
\emph{truly subquadratic} exact algorithm, that is, an
\(O(N^{2-\delta})\)-time algorithm for any fixed
\(\delta>0\), would contradict the Strong Exponential Time Hypothesis
(SETH).  Bringmann and K\"unnemann~\cite{BringmannKunnemann15}
established the same barrier for binary strings.

\paragraph{Approximation algorithms.}
An \(\alpha\)-approximation returns an estimate between the distance
and \(\alpha\) times the distance.  Early algorithms for unit-cost
edit distance achieved approximation ratios polynomial in \(N\)
in near-linear time \(\Ohtilde(N)\)~\cite{LMS98,BJKK04,BES06}.\footnote{The notation \(\Ohtilde\)
suppresses factors polynomial in \(\log N\), or in
\(\log(N/\varepsilon)\) when an accuracy parameter
\(0<\varepsilon\le1\) is present.  Our $\Ohtilde(\cdot)$
bounds display every polynomial dependence on \(\varepsilon^{-1}\),
and their hidden constants are independent~of~\(\varepsilon\).}
Building on earlier work~\cite{OR07}, Andoni and Onak~\cite{AO12}
obtained a \(2^{O(\sqrt{\log N\log\log N})}\)-approximation
in \(N^{1+o(1)}\) time.
Andoni, Krauthgamer, and Onak~\cite{AKO10} then obtained a
\((\log N)^{O(1/\delta)}\)-approximation in \(N^{1+\delta}\) time for every
fixed \(\delta>0\).  Chakraborty, Das, Goldenberg, Kouck\'y, and
Saks~\cite{CDGKS20} achieved the first
classical constant-factor approximation in truly subquadratic time,
namely \(\Ohtilde(N^{12/7})\),
followed by simplified
variants of the same scheme~\cite{Andoni18,Andoni20}.
Goldenberg, Rubinstein, and Saha~\cite{GRS20} further improved the scheme, resulting in a
\((3+o(1))\)-approximation in \(N^{8/5+o(1)}\) time.
Constant-factor approximation in almost-linear \(N^{1+\delta}\) time was
first obtained for sufficiently large distances~\cite{BR20,KS20},
and then for all distances by Andoni and
Nosatzki~\cite{AndoniNosatzki20}, for every fixed \(\delta>0\),
with the approximation factor depending on \(\delta\).
Mao and Rubinstein~\cite{MR26} obtained a
\((1+\varepsilon)\)-approximation for every fixed accuracy
\(\varepsilon>0\) in \(N^2/2^{\log^{\Omega(1)}N}\) time.
The above approximation algorithms are randomized; see~\cite{KS23}
for a much weaker deterministic approximation algorithm.

For weighted edit distance, Kuszmaul~\cite{Kuszmaul19}
proved that, for every fixed \(0<\delta<1\), an
\(O(N^\delta)\)-approximation can be computed in
\(\Ohtilde(N^{2-\delta})\) time.  His result is stated for
equal-length strings and allows arbitrary metric costs.
No nontrivial approximations are known for the non-metric case.
\subsection*{Our result}

We give a randomized \((3+\varepsilon)\)-approximation for metric
weighted edit distance in
\(\Ohtilde(N^{8/5}/\varepsilon^{16/5})\) time.
The algorithm allows unequal lengths and arbitrary positive real
costs, with no dependence on their numerical range.  Its estimate is
always an upper bound; randomness affects only its accuracy.

For fixed \(\varepsilon\), the exponent \(8/5\) matches the best
known for \((3+\varepsilon)\)-approximation of ordinary edit
distance~\cite{GRS20}.  Since unit costs form the discrete metric, lowering
this exponent while preserving the approximation guarantee would
also improve the best known exponent for the unit-cost problem.

We use a real RAM with separate registers for exact reals and
\(O(\log(N/\varepsilon))\)-bit integers
\cite[Section~6.1]{EricksonHoogMiltzow24}.
Real addition, subtraction, multiplication, division, and
comparison take constant time; real registers do not support
floor, digit extraction, or conversion to integers.
We assume that each metric query and each independent unbiased random bit take
constant time.  Our bounds do not capture the bit cost of representing
and manipulating arbitrary reals.

For convenience, we assume \(N\ge2\); smaller inputs are handled
exactly in constant time.  An event holds \emph{with high probability} if its probability
is at least \(1-N^{-c}\) for any prescribed constant \(c>0\).
Increasing \(c\) changes only constant factors in the running time.

\begin{restatable}[Main result]{theorem}{mainresult}
\label{cor:weighted-cdgks-scale-selection}
Let \(X\) and \(Y\) be strings of total length \(N\), and let \(w\) be a
metric on their alphabet \(\Sigma\) together with the gap symbol \(\gap\).
There is a randomized real-RAM algorithm that, given $X,Y$, an accuracy parameter $0<\varepsilon\le 1$,
and oracle access to $w$, in \(\Ohtilde(N^{8/5}/\varepsilon^{16/5})\) time returns
a number \(Q\) with the following guarantees:
\begin{enumerate}
\item Deterministically, \(Q\ge\WED_w(X,Y)\).
\item With high probability, \(Q\le(3+\varepsilon)\WED_w(X,Y)\).
\end{enumerate}
\end{restatable}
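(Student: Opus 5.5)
The plan follows the CDGKS/Andoni sampling framework, adapted to weighted costs. It first reduces to a gap (threshold) version and then recovers the theorem by scale selection. For a threshold \(k\), the goal is a procedure that, in \(\Ohtilde(N^{8/5}/\varepsilon^{16/5})\) time, returns an upper bound \(Q_k\ge\WED_w(X,Y)\) that is certified by an explicit alignment. With high probability it should also satisfy \(Q_k\le(3+\varepsilon)k\) whenever \(\WED_w(X,Y)\le k\).

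The real RAM gives no floor operation, and the cost range is unbounded, so scales cannot be enumerated as \(2^i\). Instead, the candidate thresholds are taken as \((1+\varepsilon)^i\) multiples of a crude estimate \(D\). This \(D\) is computed first, for instance via Kuszmaul's \(O(N^{\delta})\)-approximation or a trivial \(N\)-factor bound from greedy alignment costs. Only \(O(\log N/\varepsilon)\) scales then lie in \([D/N, D]\), so the number of scales is independent of the numerical range. The output is \(Q=\min_k Q_k\). Property~1 holds deterministically because every \(Q_k\) is the cost of a genuine alignment. Property~2 follows from the smallest scale \(k\ge\WED_w(X,Y)\), which satisfies \(k\le(1+\varepsilon)\WED_w(X,Y)\).

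For a fixed scale, the first step is Kuszmaul's removal of inexpensive characters. Every symbol \(a\) with \(w(a,\gap)\le\varepsilon k/N\) is deleted, which changes the distance by at most \(\varepsilon k\), using the triangle inequality through \(\gap\). The second step partitions \(X\) into pieces. Cutting at fixed lengths is the wrong choice here. Pieces are cut so that each has bounded length and a total deletion cost that falls in a geometric class \((1+\varepsilon)^j\); this is the structured decomposition the abstract mentions. The point is that a piece's deletion cost upper bounds its contribution to any alignment. So for pieces in a given class, the relevant windows of \(Y\) are those whose deletion cost is comparable, and a small family of substrings of \(Y\) suffices for comparing all the pieces.

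Then comes the CDGKS three-phase scheme. Dense pieces are compared exactly against candidate windows. Exact weighted edit distances of a piece against all windows along a diagonal band are computed with Klein's multiple-source shortest paths on the planar alignment grid graph, which replaces unit-cost diagonal tricks. Using the sampled windows, the triangle inequality propagates closeness to unsampled pieces, and this is where the factor \(3\) arises, as in GRS. Finally, a sparse dynamic program over pieces and windows combines the approximate piece-to-window costs into a global alignment, and its cost is recomputed exactly so that the output is a true upper bound.

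The main obstacle is the decomposition and the accompanying running-time analysis. Unit-cost arguments rely on window length equalling the deletion cost. Here length and cost diverge, so the dense/sparse dichotomy and the sampling rates have to be redone in terms of cost classes while keeping the \(N^{8/5}\) balance. The first attempt will be to choose piece boundaries by prefix sums of deletion costs, which behave like a reparametrised length, and then reuse the GRS parameter balancing with \(\varepsilon\)-losses from rounding to classes. The \(\varepsilon^{-16/5}\) dependence should come out of that balancing.
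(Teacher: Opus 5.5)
Your outline has the right overall shape, but there is a genuine gap: it stops exactly where the proof has to begin. The decomposition and the running-time analysis are deferred, and the suggested first attempt (treating prefix mass as a reparametrized length and reusing the GRS balancing) fails for two concrete reasons.

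\emph{Comparison cost follows character counts, not masses.} After your global removal every surviving character has mass above $\varepsilon k/N$. That still lets a window of mass $\Theta(k)$ contain $\Theta(N)$ characters. Comparing a $W$-character piece with such a window, by dynamic programming or by Klein's structure, can then cost $\Theta(WN)$ rather than $O(W^2)$. The missing idea is to shorten the \emph{target}: keep the $W+\lceil 2W/\varepsilon\rceil$ heaviest characters of a window $C$ and add the omitted mass $\lambda_C$. This is a certified $(1+\varepsilon)$-approximation simultaneously for every source of length at most $W$, because any alignment must insert all but $W$ retained characters, and that unavoidable cost pays for the $2Wt$ restoration error (\cref{lem:short-string-approximation}). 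The batched gap oracle uses the same idea with the radius-dependent threshold $\zeta r/W$. A start range of mass width $\abs I$ then spans only $O(W\abs I/(\zeta r))$ retained characters, so Klein preprocessing costs $\Ohtilde(W^2(1+\abs I/r)/\zeta)$. Searching ranges of width proportional to $r(U)$ makes the radius cancel (\cref{lem:batched-radius-test}).

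\emph{Multiplicative mass classes give the wrong error.} Suppose ``comparable deletion cost'' means a $(1+\varepsilon)$-multiplicative grid of window masses. Then each piece loses about $\varepsilon\mu(\text{piece})$ when its window end is rounded. These losses sum to $\varepsilon\mu(X)$, which is unrelated to $\WED_w(X,Y)$, since heavy characters can be matched at zero cost. The paper avoids this as follows:
\begin{itemize}
\item it localizes cheap alignments via $\abs{\mu_X(i)-\mu_Y(j)}\le D$ into rectangles of mass $O(D)$ with $\sum_s m_s=O(N)$, isolating characters heavier than $3D$ as forced unit pairs;
\item it uses the absolute resolution $\Delta=\zeta FW/(N\Lambda)$, so the per-rectangle error $R\Delta=O(\zeta mF/N)$ sums to $O(\zeta D)$;
\item it ends each row at the first boundary reaching a level $\tau$ and rounds window ends with loss $\zeta\abs{x-\tau}+\Delta$;
\item it handles $\abs{x-\tau}\gg c$ via the forced heavy partner of the row's last character (\cref{lem:local-row-replacement}).
\end{itemize}

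\emph{The scale selection also misses the stated bound.} With thresholds spaced by $1+\varepsilon$ you run the per-scale procedure $\Theta(\log N/\varepsilon)$ times, giving $\Ohtilde(N^{8/5}/\varepsilon^{21/5})$. The guarantee $Q_k\le(3+\varepsilon)k$ is too weak to allow doubling, since it only gives about $6d$. The paper instead proves $Q(D)\le(3+\varepsilon)\WED_w(X,Y)+\varepsilon D$ for every $D\ge\WED_w(X,Y)$ (\cref{thm:weighted-cdgks-upper-bound-primitive}). Then $O(\log N)$ doubled budgets, one of which satisfies $D_\star<2d$, suffice.

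\emph{The crude estimate is not free either.} No greedy bound is within an $O(N)$ factor in general. Take $X=\ell P$ and $Y=P\ell$ with a light $\ell$, $P=(bc)^n$, and $w(b,c)=g(b)+g(c)$ large. Both the all-gap cost and the position-by-position cost exceed $\WED_w(X,Y)\le 2g(\ell)$ by an unbounded factor. Kuszmaul's approximation is stated only for $\abs X=\abs Y$. The paper reproves the near-linear endpoint for unequal lengths and adds \cref{lem:coarse-gap-bracket} for consecutive candidate scales that differ by a large factor.

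\emph{The exponents are unsupported.} Your claim that GRS balancing transfers with $N^{8/5}$ and $\varepsilon^{-16/5}$ is not backed by anything in the proposal. The independent GRS-based weighted algorithm reaches only $\Ohtilde(N^{7/4}/\varepsilon^8)$. Here the exponent comes from four ingredients: the shared window family with $M=\Ohtilde(N/(\zeta^2W))$ per rectangle, Andoni-style centers with fallback labels through $R_{C_U}$, the radius-weighted sparse search, and $W=\lceil\zeta^{3/5}N^{1/5}\rceil$ balancing $m\sqrt{NW}/\zeta^{7/2}$ against $mN/(\zeta^2W^2)$.
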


For every \(0<\varepsilon\le1\), the same
\(\Ohtilde(N^{8/5}/\varepsilon^{16/5})\) bound holds on a word RAM
with \(O(\log(N/\varepsilon))\)-bit words when each edit cost is an
integer that fits in one word.  This assumes constant-time metric
queries and a rational accuracy parameter occupying \(O(1)\) words;
see Appendix~\ref{sec:finite-precision} for details.

The number of operations is independent of the
alphabet size and the numerical range of \(w\).
In particular, we neither round the
metric to integers nor assume that the input distance exceeds a
specified threshold.
In this version of the manuscript, the theorem concerns a numerical estimate of edit distance;
it does not require the algorithm to output an edit sequence.

\paragraph{Tools.}
We first develop an algorithm that takes an $O(1)$-approximation for the edit distance
as an additional input.  For this task, we build on the dense and sparse
sampling framework~\cite{CDGKS20}, which distinguishes source substrings
with many inexpensive target comparisons from those with few, and on
Andoni's use of sampled target substrings as centers in a hierarchical
search~\cite{Andoni18,Andoni20}.
We extend these ideas to weighted costs using \emph{gap mass}, the
total deletion cost of a substring, to localize comparisons and
partition source blocks at geometrically spaced mass levels.
We select a small family of target substrings to compare with these
pieces, keeping its size comparable to that in the unit-cost setting.
We combine this construction with Kuszmaul's removal of inexpensive
characters~\cite{Kuszmaul19} and Klein's data structure for
shortest-path queries in planar graphs~\cite{Klein05}.
Finally, following Kuszmaul~\cite{Kuszmaul19}, we show how to choose
suitable distance guesses without making the running time depend on
the aspect ratio of the costs.

\paragraph{Further related work.}
Sublinear-time algorithms inspect only part of the input and
distinguish instances with edit distances below a smaller threshold from those above a larger
threshold.  This line of work relates the separation between the
thresholds to the number of inspected characters and the running
time~\cite{BEKMRRS03,GKS19,KociumakaSaha20,BCFN22Gap,GKKS22,BCFK24}.
Bounded-distance algorithms exploit a small upper bound on the
distance to accelerate exact computation, both for unit
costs~\cite{Ukkonen85,Mye86,LV88,LMS98,CGK16Streaming} and for weighted
costs of at least one~\cite{DGHKS23,CKW23,GK25}.  Parallel algorithms also exploit
small distance to reduce the total work while distributing it among
processors~\cite{DDGLS23}.
Quantum algorithms study how quantum access to the input can reduce
the time needed for distance computation and
approximation~\cite{BEGHS18,GJKT24}; conditional lower bounds constrain
the achievable speedups~\cite{BPS21}.
Pattern matching with edits asks for the positions in a text where
a given pattern matches a substring within a prescribed edit
budget~\cite{Sellers80,LV89,GalilPark90,SV96,Myers99,ColeHariharan02,CKW20,CKW22}.
This direction also includes approximation~\cite{CDK19} and weighted variants~\cite{CKW25}.

Other models study how much information about the strings must be
retained or updated.  Streaming algorithms process characters as
they arrive while using limited memory~\cite{CGK16,CGK16Streaming,KPS21},
whereas sketching algorithms
summarize strings separately and subsequently estimate their distance
from the summaries.  These questions connect edit distance to
\emph{document exchange}, where a short message enables one party
to reconstruct another party's similar string~\cite{BZ16}.
Preprocessing asks whether preparing strings in advance can
accelerate subsequent comparisons~\cite{GRS20,BCFN22}.
Incremental and decremental algorithms update distances as strings
are extended or shortened at their ends~\cite{LMS98,KimPark04}.
Fully dynamic algorithms allow character insertions, deletions, and
substitutions at arbitrary positions, with the time per update as a
central measure of complexity~\cite{CKM20,KMS23,GK25,BGK25}.

\paragraph{Independent parallel work.}
Mader, Tavasoli, and Wang~\cite{MaderTavasoliWang26}
independently obtained a randomized \((3+\varepsilon)\)-approximation for metric
weighted edit distance in \(\Ohtilde(N^{7/4}/\varepsilon^8)\)
time.
Their manuscript appeared on arXiv a few days before ours.
Both approaches use the scheme of~\cite{CDGKS20}; theirs builds on \cite{GRS20} and ours follows~\cite{Andoni20}.  Our improved running time
comes from additional optimizations in choosing which substrings to
compare and reusing computations across comparisons.
These optimizations are explained in
\cref{sec:overview-cover,sec:overview-windows,sec:overview-oracles}
and formalized in
\cref{lem:alignment-rectangle-cover,lem:short-pieces-shared-windows,lem:batched-radius-test}.
By mutual agreement, subsequent versions of both manuscripts are intended to be merged.

\section{Technical overview}
\label{sec:technical-overview}

We first develop an algorithm with a supplied distance budget \(D>0\).
For every \(D\), it runs within the stated time and returns an upper
bound on \(\WED_w(X,Y)\), with the desired approximation guarantee when
\(\WED_w(X,Y)\le D<2\WED_w(X,Y)\).
The algorithm for this budget has three parts:
cover all alignments of cost at most \(D\) by rectangles, approximate
edit costs inside each rectangle, and combine these approximations
by a shortest-path computation.  \Cref{sec:overview-cover} describes
the rectangle cover, \crefrange{sec:overview-local}{sec:overview-sparse}
explain the approximation within each rectangle, whereas \cref{sec:overview-shortcuts} discusses
the global shortest-path computation.
\Cref{sec:overview-parameters} fixes the values of various parameters of our algorithm, and \cref{sec:overview-budget} explains how to remove the need for the supplied estimate $D$.

We use an error parameter \(\zeta=\Theta(\varepsilon)\), with a sufficiently small hidden constant.
In this overview, \(O_\zeta\) and \(\Theta_\zeta\) allow their
constants to depend on \(\zeta\); tildes additionally suppress
factors polynomial in \(\log N\).  The formal arguments in \crefrange{sec:model}{sec:aspect-reduction} make the accuracy dependence explicit.

\subsection{Dynamic programming and certified shortcuts}
\label{sec:overview-shortcuts}

For a string $Z$ and integers $0\le p \le q \le |Z|$, sometimes called boundaries,
we denote by \(Z[p\dd q)\) the \emph{fragment} of $Z$ that contains the characters $Z[p],\ldots,Z[q-1]$ at
positions \(p,\ldots,q-1\).
An alignment between $X$ and $Y$ is a path in the usual edit-distance grid
\([0\dd |X|]\times[0\dd |Y|]\), whose vertices are pairs of
boundaries.  An alignment visits \((i,j)\) once it consumes
the first \(i\) characters of \(X\) and the first \(j\) characters
of \(Y\), that is, it aligns prefixes \(X[0\dd i)\) and \(Y[0\dd j)\).
Each grid edge represents an insertion, deletion, substitution, or
match and has the corresponding edit cost.  The standard dynamic program computes \(\WED_w(X,Y)\) as the minimum cost of a path from
\((0,0)\) to \((|X|,|Y|)\).

A value \(z\ge\WED_w(X[a\dd b),Y[p\dd q))\) certifies that \(X[a\dd b)\) can be transformed into \(Y[p\dd q)\) at cost at most~\(z\); we represent this transformation by a \emph{certified shortcut} from \((a,p)\) to \((b,q)\) of cost \(z\) in the alignment graph.
Our \emph{shortcut graph} has the same vertex set as the edit grid.
Its edges consist of the certified shortcuts, which we store explicitly,
and all insertion and deletion edges at their original costs,
which we represent implicitly.
Every path from \((0,0)\) to \((|X|,|Y|)\) in this graph describes a
valid transformation from \(X\) to \(Y\), so its cost is an upper bound
on the true distance.
As proved in \cref{sec:weighted-shortcut-geometry}, the cheapest such
path can be found in time nearly linear in the total input length and
the number of shortcuts, without constructing the full edit grid.

\subsection{Covering the alignment grid}
\label{sec:overview-cover}

For unit costs, every alignment of cost at most \(D\) stays in the
band \(\lvert i-j\rvert\le D\) around the main diagonal.
We seek an analogous restriction for weighted costs and cover the
region that inexpensive alignments can visit with rectangles, so that
we can construct shortcuts separately within each rectangle.
We measure progress along each string by the
total deletion cost of the consumed prefix.

The \emph{gap mass} of a character is \(g(a)=w(a,\gap)\), and
\(\mu(Z)\) denotes the sum of these masses in a string \(Z\).
For a boundary \(i\) of \(Z\), write \(\mu_Z(i):=\mu(Z[0\dd i))\)
for its prefix mass.  The reverse triangle inequality gives
\(\abs{g(a)-g(b)}\le w(a,b)\).  Thus, any alignment (path in the edit-distance grid)
from \((a,p)\) to \((b,q)\), of cost \(c\), satisfies
\[
 \abs{(\mu_X(b)-\mu_X(a))-(\mu_Y(q)-\mu_Y(p))}\le c.
\]
In particular, every alignment of cost at most \(D\) is confined to
vertices $(i,j)$ with \(\abs{\mu_X(i)-\mu_Y(j)}\le D\).

The cover construction, provided in \cref{sec:alignment-cover}, returns a collection of rectangles such that every alignment of cost at most \(D\) lies within their union. Each rectangle has two projections: a source projection corresponding to a fragment of \(X\), and a target projection corresponding to a fragment of \(Y\). Every rectangle is either a unit rectangle, whose two projections each contain one character, or has total gap mass \(O(D)\) in both projections.
The source projections are disjoint, while each character of \(Y\) belongs to the target projection of only \(O(1)\) rectangles.
Consequently, the sum of the lengths of all projections is \(O(N)\), allowing us to sum the running times of the local computations.

The construction first isolates characters of gap mass greater than
\(3D\).  In an alignment of cost at most \(D\), each such character
must be paired with a character of gap mass greater than \(2D\)
whose starting prefix mass differs by at most \(D\).
There is at most one such partner, and we cover these forced pairs by unit rectangles.
Within each remaining pair of fragments, we greedily end a source
piece as soon as its mass reaches \(D\), merging a final piece of
smaller mass with its predecessor when one exists.
For each source piece $S$ in $X$, the prefix-mass bound identifies a target
fragment $T$ in $Y$ of mass \(O(D)\) so that every global alignment of cost at most $D$ aligns $S$ with a fragment of $Y$ contained in $T$; the spacing between source
cuts ensures constant overlap of the target intervals.
If a cover check rules out an alignment of cost at most \(D\),
we return an empty collection of rectangles.

\subsection{The local problem within rectangles}\label{sec:overview-local}
We model each non-unit rectangle as a pair of strings \(S\) and \(T\), with total length
\(m=|S|+|T|\le N\) and total mass \(F=\mu(S)+\mu(T)\).
The local problem we solve within the rectangle is to construct
shortcuts in the edit grid of $S$ and $T$ in \(\widetilde O_\zeta(mN^{3/5})\) time.
With high probability, the following guarantee holds simultaneously
for every target fragment \(V=T[p\dd q)\): the shortcut graph defined by these shortcuts contains a
path from \((0,p)\) to \((|S|,q)\) of cost at most
\[ (3+\varepsilon)\WED_w(S,V)+\tfrac{\varepsilon mF}{N}.\]
The algorithm for the local problem is developed in \crefrange{sec:short-pieces-shared-windows}{sec:rectangle-certification}, culminating in \cref{thm:rectangle-certification}.
\Crefrange{sec:overview-windows}{sec:overview-sparse} provide an overview of this algorithm.

The global algorithm, described in \cref{sec:approximation-algorithm},
combines the local shortcuts and the exact diagonal of each unit
rectangle, then computes a shortest path in their joint shortcut graph.
Its error analysis uses the guarantee for the
portions of an optimal alignment inside each rectangle.  The replacements preserve their
endpoints, and intervening gap edges remain available.  Since
\(F=O(D)\) in every non-unit rectangle and the sizes sum to
\(O(N)\), the additive errors sum to \(O(\varepsilon D)\) and the running times sum to \(\widetilde O_\zeta(N^{8/5})\).

\subsection{Reducing to row--window comparisons}
\label{sec:overview-windows}
Our local algorithm follows the high-level approach of \cite{CDGKS20}, with further simplifications in~\cite{Andoni20}, and multiple non-trivial adaptations to the weighted setting.
We partition the source string \(S\) into \(R\) consecutive fragments \(U_0,\ldots,U_{R-1}\), called \emph{rows}. Each row contains at most \(W\) characters, where $W$ is a parameter set in \cref{sec:overview-parameters}, and the number of rows satisfies
$$
R=\widetilde O_\zeta(m/W).
$$
We also construct a family \(\mathcal C\) of target windows, where every window is a fragment of the target string \(T\). Writing \(M:=|\mathcal C|\) for the number of windows, we have
$$
M=\widetilde O_\zeta(N/W).
$$
The row partition and window family are constructed in \(\widetilde O_\zeta(m)\) time.

To formulate the guarantees these row and window families satisfy, consider an \emph{ideal shortcut graph} obtained by adding a shortcut of cost exactly
\(\WED_w(U,V)\) for every row \(U\) and every window
\(V\in\mathcal C\).
Namely, the construction, provided in \cref{sec:short-pieces-shared-windows} and sketched below, guarantees that any alignment of \(S\) with a fragment of \(T\)
with cost \(c\) can be replaced by a path in the ideal shortcut graph
with the same entry and exit vertices and cost at most
$$
c+O\!\left(\zeta\left(c+\frac{mF}{N}\right)\right).
$$
Thus, the window family suffices for all the target fragments that may be aligned with the different source rows, even though these fragments are not known in advance.

\paragraph{Construction of rows and windows.}
We first split \(S\) into blocks of at most \(W\) characters and
subdivide them using mass levels \(\Delta,2\Delta,4\Delta,\ldots\)
up to \(\mu(S)\), where \(\Delta=\widetilde \Theta_\zeta(FW/N)\) is chosen sufficiently
small.  Since \(\mu(S)\le F\), there are only
\(O_\zeta(\log N)\) mass levels, independently of the range of character masses.
The construction proceeds as follows.
While a block's
unassigned suffix has mass at least \(\Delta\), choose the largest
level \(\tau\) not exceeding the mass, and take the shortest prefix
of mass at least \(\tau\) as the next row.
The mass left over is less than \(\tau\), so successive rows use
strictly decreasing levels; any nonempty remainder of mass below
\(\Delta\) becomes the final row.  This gives the stated row count
and gives enough budget to pay \(O(\Delta)\) additive error per row.

Next, we choose which target boundaries may start a window in $T$.
For this, it suffices to pick any inclusion-wise minimal set of boundaries such that,
for any boundary, we can move forward to a
retained boundary while skipping characters of total mass less than
\(\Delta\).  There are \(O(F/\Delta)=\widetilde O_\zeta(N/W)\) retained starts.
From each retained start, we choose candidate window
masses near every source mass level \(\tau\), using fine spacing near
\(\tau\) and progressively coarser spacing farther away.
Each candidate mass is measured from the chosen start.
If it falls inside a character, we retain the character's ending
boundary; if it falls at a boundary, retain that boundary.
Each start contributes \(\widetilde O_\zeta(1)\) windows over
all levels, giving the stated bound on \(M\).

To see why these windows suffice, consider an alignment of a row
\(U\) with a target fragment \(V\), of cost \(c\).  The mass-imbalance
bound gives \(\mu(U)+\mu(V)-c\le2\min\{\mu(U),\mu(V)\}\).
Thus, if either mass is \(O(\Delta)\), deleting \(U\) and inserting
\(V\) incurs only \(O(\Delta)\) additive loss.
Otherwise, move the target start forward as above, and let \(x\)
be the remaining target mass.  The candidate spacing lets us round
its end backward while removing mass at most \(\zeta|x-\tau|+\Delta\).
If \(|x-\tau|=O(c+\Delta)\), using this window as a shortcut and
inserting the omitted ends incurs \(O(\zeta c+\Delta)\) additive loss.

Otherwise, the mass-imbalance bound forces \(x-\tau\) to be positive
and much larger than \(c+\Delta\).  Since the row $U$ stops as soon as
its mass reaches \(\tau\), its last character has gap mass greater
than \(x-\tau-c>c\).  This character cannot be deleted and must be paired
with a heavy target character.  At the partner's ending boundary,
the allowed rounding loss is smaller than its gap mass, so rounding
cannot cross the partner and must retain this boundary.
We end the shortcut there and keep the trailing insertions at their
original cost.

\subsection{Approximating the ideal shortcut graph}
\label{sec:overview-costs}
The ideal shortcut graph has very strong approximation guarantees,
but computing all \(R\cdot M\) exact
row--window distances would be too expensive.  Instead, as discussed in \cref{sec:overview-oracles,sec:overview-centers,sec:overview-sparse}, we give every row--window pair a certified
fallback label and compute more accurate direct estimates only for selected pairs.
The resulting shortcut graph still preserves
the relevant paths of the ideal graph up to the required
approximation.

The remaining local task is to replace the exact shortcut costs by certified values while increasing all distances from the left to the right boundary of the edit grid by a factor of at most \(3+O(\zeta)\). This task is implemented independently of how the rows and windows were constructed: it applies to any partition of \(S\) into $R$ rows of at most \(W\) characters and any family of $M$ target windows. For any such choice of rows and windows, \cref{thm:row-comparisons} gives an algorithm that constructs the required certified shortcuts in
$$
 \widetilde O_\zeta\left(
 W^{3/2}\sqrt{mRM}+mW+RW^2+RM\right)
$$
time on every outcome. The construction uses the dense- and sparse-search procedures
described in \cref{sec:overview-centers,sec:overview-sparse}.
We first introduce two comparison oracles used by these procedures.

\subsection{Two comparison oracles}
\label{sec:overview-oracles}
The main difficulty in row--window comparisons is that their lengths may be highly unbalanced: a row has at most \(W\) characters, but a window may be much longer.
To handle this, for a single window \(C\),
we retain \(O_\zeta(W)\) characters with the largest gap masses, in
their original order, obtaining a short subsequence \(R_C\) of \(C\).
Let \(\lambda_C=\mu(C)-\mu(R_C)\) be the omitted mass.  A suitable
choice of the retained count ensures that, for every row \(U\),
\[
 \WED_w(U,C)\le
 Q(U,C):=\WED_w(U,R_C)+\lambda_C
 \le(1+\zeta)\WED_w(U,C).
\]
The same $R_C$ works for every row since the length of every row is at most $W$.
After linear preprocessing of the string \(T\),
we can extract $R_C$ in $\widetilde O_\zeta(W)$ time and compute
\(Q(U,C)\) in \(\widetilde O_\zeta(W^2)\)~time.

Why can the long window be shortened?  If no character is omitted,
the comparison is exact.  Otherwise, let \(t\) be the smallest gap
mass among the retained characters, so every omitted character has
gap mass at most \(t\).  Start with an alignment of \(U\) with \(C\).
If an omitted character of \(C\) was inserted, remove that insertion;
if it was paired with a character of \(U\), replace the pairing by
deleting the corresponding character of \(U\).  This produces an
alignment of \(U\) with \(R_C\).  Adding \(\lambda_C\) accounts for
inserting the omitted characters back.  An originally inserted
character causes no increase, while replacing a pairing increases
the cost by at most \(2t\), by the triangle inequality.  At most
\(W\) characters can be paired with characters of \(U\), so the
total increase is at most \(2Wt\).  On the other hand, any alignment
must insert all but at most \(W\) of the retained characters.
Retaining sufficiently many characters therefore makes this
unavoidable insertion cost pay for the increase.

The preceding construction, formalized in \cref{lem:short-string-approximation}, gives a certified $(1+\zeta)$-approximate estimate for a single chosen row–window pair. Computing such an estimate for every pair, however, would still be too expensive. The algorithm therefore uses a second, cheaper oracle to screen many windows against the same row. Given a row \(U\) and a radius \(r>0\), this gap oracle accepts every window whose distance from \(U\) is below \((1-2\zeta)r\) and rejects every window whose distance is at least \(r\). For distances between these two thresholds, either answer is allowed.

The gap test, implemented in \cref{lem:batched-radius-test}, uses the same shortening idea as the direct comparison, now at a scale determined by \(r\).
For a queried window \(V\), retain only its characters whose gap mass is greater than \(\zeta r/W\). Compare \(U\) with this retained subsequence and add the total gap mass of the omitted characters.
As in the argument above, the resulting value is a certified upper bound on \(\WED_w(U,V)\) and exceeds it by at most \(2\zeta r\). Testing whether this value is below \(r\) therefore gives the stated acceptance and rejection guarantees. A single data structure on top of \(T\) supports this filtering for every \(r>0\), so the radii need not be chosen or discretized in advance.

The gap oracle can share its work when the algorithm tests many overlapping windows.
Let \(I\) denote the range of possible starting positions, with its width \(|I|\)
measured by gap mass.  After filtering, every window that might be
accepted contains only \(O_\zeta(W)\) retained characters; otherwise,
the unavoidable insertion cost would already exceed \(r\).
Moreover, because every retained character has
gap mass greater than \(\zeta r/W\), the possible starting positions
span only \(O_\zeta(W|I|/r)\) retained characters.  Thus, all remaining
comparisons lie in one short portion of the filtered version of $T$, and they can
share a single preprocessing step.  Klein's planar distance
oracle~\cite{Klein05} gives preprocessing time
\[
 \widetilde O_\zeta\left(
 W^2\left(1+\frac{|I|}{r}\right)\right),
\]
after which each window is tested in logarithmic time.  This
preprocessing cost can alternatively be bounded by \(\widetilde O_\zeta(mW)\) since the number of retained characters trivially does not exceed $|T|\le m$.

\subsection{Radii and fallback labels}
\label{sec:overview-centers}

We next define the \emph{sampled-center procedure}, implemented in \cref{sec:target-centers}, which uses the
certified comparison value \(Q\) from the preceding subsection to
reduce the number of row--window pairs that require further
attention.
Following \cite{Andoni20}, our
procedure samples a small collection of target windows, called
\emph{centers}, and uses them to assign a radius \(r(U)\) to every row
\(U\).  For every row \(U\) and target window \(V\), it also assigns a
\emph{fallback label} \(z(U,V)\), which is the certified shortcut
cost used unless a more accurate estimate is found later.
On every outcome, these values satisfy
\[
 0\le r(U)\le \mu(U)
 \qquad\text{and}\qquad
 \WED_w(U,V)
 \le z(U,V)
 \le \WED_w(U,V)+2(1+\zeta)r(U).
\]
The construction takes an integer parameter \(1\le k\le M\), chosen in \cref{sec:overview-parameters},
and guarantees, with high probability, simultaneously for every row $U$, that at most \(k\) target windows $V$ satisfy \(\WED_w(U,V)<r(U)\).

The purpose of the radius is to identify the comparisons for which
the fallback label may be insufficient.  If
\(\WED_w(U,V)\ge r(U)\), then
\[
 z(U,V)
 \le \WED_w(U,V)+2(1+\zeta)r(U)
 \le (3+2\zeta)\WED_w(U,V).
\]
Thus, the fallback label is already sufficiently accurate for every
window at distance at least \(r(U)\) from $U$.  Only windows closer than
\(r(U)\) may require a better label, and with high probability there
are at most \(k\) such windows per row.

The procedure samples
\(\widetilde O_\zeta(M/k)\) centers uniformly among target windows.  It also
includes the empty string as a virtual center.  For every row \(U\)
and every center \(C\), compute \(Q(U,C)\).  Let \(C_U\)
be the center minimizing this value, and set
\[
 r(U):=\frac{Q(U,C_U)}{1+\zeta}.
\]
The virtual center ensures \(r(U)\le\mu(U)\) since transforming
\(U\) into the empty string costs \(\mu(U)\).

Why are only \(k\) windows closer than \(r(U)\)?  With high
probability, the sample contains one of the \(k\) windows closest to
each row.  Fix such a sampled window \(C\) for \(U\).  Since \(C_U\)
minimizes \(Q\),
\[
 r(U)
 \le \frac{Q(U,C)}{1+\zeta}
 \le \WED_w(U,C).
\]
Consequently, every window at distance less than \(r(U)\) must appear
before \(C\) among the windows ordered by distance from \(U\).
There can therefore be at most \(k\) such windows.

It remains to construct the fallback labels.  Let \(R_{C_U}\) be the short representation of the chosen
center constructed in the preceding subsection, and define
\[
 z(U,V):=
 \WED_w(U,R_{C_U})+\WED_w(R_{C_U},V).
\]
This label is certified because its two terms describe a valid
transformation from \(U\) to \(V\) through \(R_{C_U}\).  Moreover,
symmetry and the triangle inequality give
\begin{align*}
 z(U,V)
 & \le \WED_w(U,R_{C_U})+\WED_w(R_{C_U},U)+\WED_w(U,V)\\
 &= \WED_w(U,V)+2\WED_w(U,R_{C_U})\\
 &\le \WED_w(U,V)+2(1+\zeta)r(U).
\end{align*}

For each center \(C\), dynamic programming computes \(Q(U,C)\)
in \(O_\zeta(|U|W)\) time.  Since the rows partition \(S\),
all rows together cost
\(O_\zeta(W\sum_U|U|)=O_\zeta(|S|W)\) per center.
Over the \(\widetilde O_\zeta(M/k)\) sampled centers, these
comparisons take \(\widetilde O_\zeta(mWM/k)\) time.
For each chosen center \(C\), one
planar distance oracle~\cite{Klein05}, constructed in $\widetilde O_\zeta(mW)$ time, computes the distances from \(R_C\) to all target
windows, and these answers are reused by every row that chooses
\(C\).
The sampled-center procedure therefore takes
\[
 \widetilde O_\zeta\left(\frac{mWM}{k}+RM\right)
\]
time.  It gives a certified fallback label for every row--window
pair while leaving at most \(k\) close windows per row for the search
procedure described next.

\subsection{A sparse search for the close windows}
\label{sec:overview-sparse}

The sampled-center procedure leaves, for each row, at most \(k\) windows that may
need a better label, but it does not identify these
windows.  Preprocessing a full-target gap oracle for every row would
still be too expensive.  The sparse-search procedure, developed in \cref{sec:sparse-hierarchy}, locates the relevant close windows by
using their order along an alignment: as the alignment moves from one
source row to another, the corresponding target position changes
predictably unless the alignment pays a substantial cost between the
two rows.

To exploit this relation, we follow~\cite{Andoni20} and organize the consecutive source rows in a
balanced binary tree.  The leaves are the individual rows, and every
internal node represents a consecutive block of rows.  At each node,
sample $\Ohtilde_\zeta(1)$ rows from its block, choosing each row with
probability proportional to its radius.  Rows with radius zero can
be omitted as their fallback labels are exact.

For the rows sampled at the root, use the gap oracle to test every
target window.  An accepted row--window pair is called an
\emph{anchor}; it records the position of a close target window for
that row.  Now consider a sampled row \(U\) in a child node.  An anchor
stored at the parent predicts where a close window for this new row
should begin: moving between the two source rows changes the source
prefix mass by a known amount, so the target prefix mass should
change by approximately the same amount.  If the parent contains
\(t\) rows, the gap oracle searches within prefix-mass distance
\(Lt\,r(U)/\zeta\) of this prediction, where \(L=O(\log N)\)
is the number of tree levels.
The accepted pairs become anchors for the child, and the process
continues down the tree.  At each leaf, every accepted pair $(U,V)$ receives
a shortcut corresponding to the direct certified value \(Q(U,V)\).

Fix a path of cost \(c\) in the ideal shortcut graph.  Split it into row subpaths,
keeping each shortcut whole, then expand the shortcuts into optimal
alignments.  Call a row \(U\)
\emph{good} if its portion costs less than \((1-2\zeta)r(U)\) and
every tree block containing it has path cost at most
\(Lt\,r(U)/\zeta\), where \(t\) is that block's row count.
This set is fixed before sampling.  Bad (non-good) rows have total radius
at most \(c/(1-2\zeta)+\zeta c\): the first condition charges radii
to row costs, and violations of the second add at most \(\zeta c\),
since each row belongs to at most \(L\) blocks.

A \emph{useful anchor} records the path's window for a sampled
good row.  The algorithm retains all anchors, but the proof follows
these useful ones.  If every node on a root-to-leaf path samples a
good row, useful anchors propagate down that path: the prefix-mass
bound limits prediction error by the parent block's path cost,
which the sampled good row at the child bounds by its search radius.
The leaf therefore obtains a direct label for its window on the fixed ideal path.

Consider the first nodes on root-to-leaf paths that sample no good
row.  Their blocks are disjoint.  With high probability, bad
rows carry at least a \(1-\zeta\) fraction of each such block's
total radius; otherwise sampling would find a good row.
Thus, all rows requiring fallback labels, including non-close
comparisons, have total radius at most \((1+O(\zeta))c\).
Keep gap edges unchanged and use direct or fallback labels for the
shortcuts.  The resulting cost is at most
\[
 (1+\zeta)c+2(1+\zeta)(1+O(\zeta))c=(3+O(\zeta))c.
\]
A union bound over fixed shortest ideal paths gives this guarantee
simultaneously for all reachable pairs of left and right endpoints.

For the running-time analysis, recall that every window accepted
for row \(U\) has distance less
than \(r(U)\).  \Cref{sec:overview-centers} guarantees that, with high probability,
there are at most \(k\) such windows, so each sampled row produces
at most \(k\) anchors.  On this event, let \(t\) be the number of rows in
a parent node and consider a sampled child row \(U\).  Around each
parent anchor's prediction, we search an interval of gap-mass width
\(\widetilde O_\zeta(t\cdot r(U))\) and merge overlapping intervals for each
queried row before testing windows.  The batched oracle charges
\(\widetilde O_\zeta(W^2(1+|I|/r(U)))\) for an interval \(I\),
so this choice costs \(\widetilde O_\zeta(tW^2)\) per anchor:
the radius cancels.  Each node samples
\(\widetilde O_\zeta(1)\) rows, each producing at most \(k\)
anchors.  At any depth, the parent blocks are disjoint, and their
row counts sum to at most \(R\); each parent has at most two
children.  Summing across levels therefore gives
\(\widetilde O_\zeta(RkW^2)\) preprocessing below the root.

Full-target preprocessing at the root costs
\(\widetilde O_\zeta(mW)\).  Each row belongs to at most one node per
level, and merging intervals ensures that each row--window pair is
tested at most once at that node.  Thus, each pair is tested at most
\(O(\log N)\) times, giving \(\widetilde O_\zeta(RM)\) total
query time.  Combining these costs gives total running time
\[
 \widetilde O_\zeta\left(RkW^2+mW+RM\right).
\]

To enforce the running-time bound on every outcome, stop the
row-comparison construction if it exceeds its prescribed work
bound and return no shortcuts, leaving only gap edges.
This preserves certification; the cap is not reached on the
high-probability sparsity event.

\subsection{Choosing the parameters}\label{sec:overview-parameters}
To complete the description of the algorithm for a supplied budget $D$, only the running-time parameters remain to be
chosen.  The parameter
\(k\) balances the cost of sampling centers,
\(\widetilde O_\zeta(mWM/k)\), against the cost of the sparse search,
\(\widetilde O_\zeta(RkW^2)\).  Choosing
$$
 k=\Theta_\zeta\left(
 \min\left\{M,\sqrt{\tfrac{mM}{RW}}\right\}\right)
$$
gives the row-comparison time stated in \cref{sec:overview-costs}:
$$
 \widetilde O_\zeta\left(
 W^{3/2}\sqrt{mRM}+mW+RW^2+RM\right).
$$

Substituting \(R=\widetilde O_\zeta(m/W)\) and
\(M=\widetilde O_\zeta(N/W)\) gives
\(\widetilde O_\zeta(m\sqrt{NW}+mW+mN/W^2)\).
Balancing the first and last terms gives
\(W=\Theta_\zeta(N^{1/5})\) and local running time
\(\widetilde O_\zeta(mN^{3/5})\).

Finally, the window construction replaces an alignment of cost \(c\)
by a path in the ideal shortcut graph of cost
$$
 c+O\left(\zeta\left(c+\frac{mF}{N}\right)\right),
$$
and the row-comparison algorithm increases this cost by a factor of
at most \(3+O(\zeta)\).  Choosing
\(\zeta=\Theta(\varepsilon)\) sufficiently small gives the local
guarantee from \cref{sec:overview-local}.  Since the rectangle sizes sum to \(O(N)\) and every non-unit
rectangle has \(F=O(D)\), the total running time is
\(\widetilde O_\varepsilon(N^{8/5})\), and the additive errors sum to
\(O(\varepsilon D)\).  If \(d\le D<2d\), where
\(d:=\WED_w(X,Y)\), this error is \(O(\varepsilon d)\), yielding the
claimed \((3+\varepsilon)\)-approximation.
We next explain how to obtain such a budget $D$.

\subsection{Choosing the distance budget}
\label{sec:overview-budget}

To find a suitable budget, handle empty or equal strings exactly and put
\(d:=\WED_w(X,Y)>0\).  Adapting Kuszmaul's random simplification
and scale search~\cite{Kuszmaul19} to unequal lengths, we find
positive \(L\le U\) with \(U/L=O(N)\) and, with high probability,
\(L\le d\le U\).  This takes \(O(N\log^2 N)\) time independently
of the numerical range of the costs; \cref{sec:aspect-reduction}
gives the details.

Starting at \(L\), double the budget \(D\) until it reaches or exceeds
\(U\).  There are \(O(\log N)\) budgets, and assuming the interval
contains \(d\), one budget satisfies \(d\le D<2d\).  Run the algorithm
for each budget and return the smallest answer.  Every answer is
an upper bound on \(d\), even for an unsuitable budget, while the
good budget gives the desired approximation with high probability.
Thus, the reduction preserves the approximation guarantee and multiplies the running time by $O(\log N)$.

\section{Preliminaries}
\label{sec:model}
\label{sec:preliminaries}

This section fixes the notation and basic properties used throughout
the paper.  After fixing basic conventions, we describe weighted edit
distance through its alignment grid, and then introduce the gap-mass
coordinates used to locate low-cost alignments.

\subsection{Notation and conventions}

\paragraph{Strings and intervals.}
A string \(Z\) is a finite sequence of characters from an alphabet
\(\Sigma\).  Its length is denoted by \(|Z|\), and its characters
are indexed from zero: \(Z=Z[0]\cdots Z[|Z|-1]\).
We write \(\epsilon\) for the empty string.
For integers \(a,b\), we write
\[
 [a\dd b]=\{i\in\mathbb Z:a\le i\le b\};
\]
replacing either square bracket by a parenthesis excludes the
corresponding endpoint.
Intervals written with commas, such as \([a,b]\), are real intervals.
For a finite set \(A\), \(|A|\) denotes its cardinality; for a
bounded real interval \(I\), \(|I|\) denotes its length.

\paragraph{Fragments, substrings, and subsequences.}
A string \(Z\) of length \(\ell\) has character positions
\([0\dd\ell)\) and boundaries \([0\dd\ell]\).
Boundary \(i\) separates the first \(i\) characters from the remaining
\(\ell-i\) characters; boundaries \(0\) and \(\ell\) are the two ends.
For \(0\le p\le q\le\ell\), the \emph{fragment}
\(Z[p\dd q)\) is specified by the string \(Z\) and the boundaries
\(p,q\).  Its underlying string \(Z[p]\cdots Z[q-1]\) is a
\emph{substring} of \(Z\).  A fragment thus records an occurrence,
whereas a substring records only its character sequence: distinct
fragments can have the same underlying substring.
We use \(Z[p\dd q)\) for either the fragment or its underlying
substring, as determined by context.  In particular, every empty
fragment \(Z[p\dd p)\) has underlying substring \(\epsilon\).
A prefix is a fragment starting at boundary zero, and a suffix is
a fragment ending at boundary \(\ell\).
A \emph{subsequence} is obtained by retaining an
arbitrary increasing sequence of character positions; those positions
need not be consecutive.

\paragraph{Graphs and paths.}
In a finite directed graph with nonnegative edge weights, the cost of a
directed path is the sum of its edge weights.  For such a graph \(G\),
\(\operatorname{dist}_G(u,v)\) denotes the minimum cost of a directed
path from \(u\) to \(v\), and is \(+\infty\) if no such path exists.

\paragraph{Input size and asymptotics.}
Let \(X=X[0]\cdots X[n-1]\) and \(Y=Y[0]\cdots Y[m-1]\), and write
\(N=n+m\).  Inputs with \(N<2\) are handled exactly in constant time,
so we henceforth assume that \(N\ge 2\).
The accuracy parameter satisfies \(0<\varepsilon\le1\).
All logarithms are to base two unless stated otherwise.
As in the introduction, \(\Ohtilde\) suppresses factors polynomial
in \(\log(N/\varepsilon)\), or in \(\log N\) when no accuracy
parameter is present.  In the technical sections, every polynomial
dependence on \(1/\varepsilon\) is explicit, and hidden constants
are independent of \(\varepsilon\).

\paragraph{Approximation and randomness.}
For a nonnegative distance \(d\) and \(\alpha\ge1\), an
\(\alpha\)-approximation is an estimate in \([d,\alpha d]\).
Probabilities refer to the algorithm's internal random choices.
An event holds \emph{with high probability} if its probability is
at least \(1-N^{-c}\) for any prescribed constant \(c>0\).
Increasing \(c\) changes only constant factors in the running time.
Local calls use the original input-length bound \(N\) in this
convention.  A guarantee stated to hold on every outcome applies
to every choice of the random bits.

\paragraph{Computation.}
We use the real RAM described in the introduction, with exact real
registers and \(O(\log(N/\varepsilon))\)-bit integer registers
\cite[Section~6.1]{EricksonHoogMiltzow24}.
Real addition, subtraction, multiplication, division, and comparison
take constant time.  Real registers do not support floor, digit
extraction, or conversion to integers.
The floors and ceilings used to choose integer parameters are computed
by binary search over integer candidates, using exact real comparisons.
The searched ranges are polynomially bounded in \(N/\varepsilon\), so
each search takes \(\Ohtilde(1)\) time.
A metric query returns the exact cost of a supplied pair of symbols
in constant time.  Generating each independent unbiased random bit
also takes constant time.  Our running times count these operations
and exclude the bit cost of representing and manipulating arbitrary
real numbers.

\subsection{Weighted edit distance and alignments}
\label{sec:weighted-alignments}

Let \(\gap\notin\Sigma\) be the gap symbol, and let \(w\) be a metric
on \(\Sigma\cup\{\gap\}\).  Thus, \(w\) is nonnegative and symmetric,
satisfies the triangle inequality, and has \(w(a,b)=0\) if and only if \(a=b\).
For $a,b\in \Sigma$, substituting \(a\) by \(b\) costs \(w(a,b)\), while inserting or
deleting \(a\) costs
\[
 g(a):=w(\gap,a)=w(a,\gap)>0.
\]
We call \(g(a)\) the \emph{gap mass} of \(a\).

For strings \(U\) and \(V\), their weighted edit distance \(\WED_w(U,V)\)
is the minimum total cost of an edit sequence transforming \(U\)
into \(V\).  We call the first string the \emph{source} and the
second the \emph{target}.

The weighted edit-distance dynamic program~\cite{WagnerFischer74}
has the following graph interpretation.
The weighted alignment grid, also called the edit grid, of \(U\) and
\(V\) has vertex set \([0\dd |U|]\times[0\dd |V|]\).  Vertex \((i,j)\)
represents the prefixes \(U[0\dd i)\) and \(V[0\dd j)\).
For every applicable \(i,j\), the graph contains the
following directed edges:
\[
\begin{array}{rcll}
 (i,j)&\longrightarrow&(i+1,j)
     &\text{of cost \(g(U[i])\),}\\
 (i,j)&\longrightarrow&(i,j+1)
     &\text{of cost \(g(V[j])\),}\\
 (i,j)&\longrightarrow&(i+1,j+1)
     &\text{of cost \(w(U[i],V[j])\).}
\end{array}
\]
These edges represent, respectively, deleting \(U[i]\), inserting
\(V[j]\), and matching or substituting \(U[i]\) with \(V[j]\).
Insertion and deletion edges are called \emph{gap edges}; substitution
and match edges are called \emph{diagonal edges}.

An \emph{alignment} of \(U\) and \(V\) is a directed path from
\((0,0)\) to \((|U|,|V|)\), and its cost is the sum of its edge costs.
More generally, a path from \((a,p)\) to \((b,q)\) represents an
alignment of \(U[a\dd b)\) with \(V[p\dd q)\).  The minimum cost of such a
path is
\[
 \WED_w\bigl(U[a\dd b),V[p\dd q)\bigr).
\]
Because \(w\) is a metric, a sequence of edits applied successively to
the same character can be replaced by a single edit of no greater
cost.  Hence, this alignment formulation is equivalent to the usual
definition of weighted edit distance by edit sequences.

\begin{lemma}[String metric; Sellers~{\cite[Theorem~1]{Sellers74}}]
\label{lem:weighted-string-metric}
The function \(\WED_w\) is a metric on finite strings.  In particular,
for all strings \(R,S,T\),
\[
 \WED_w(R,T)
 \le
 \WED_w(R,S)+\WED_w(S,T).
\]
\end{lemma}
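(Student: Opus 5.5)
We prove the metric axioms directly from the alignment-grid formulation, with the triangle inequality as the only substantive step. Nonnegativity is immediate because every edge cost is a value of \(w\), hence nonnegative. The diagonal path of matches gives \(\WED_w(R,R)=0\).

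For definiteness, suppose \(\WED_w(U,V)=0\) and take an optimal alignment. Every gap edge costs \(g(a)>0\), so the alignment contains no gap edges. Hence \(|U|=|V|\) and the path is the main diagonal. Each diagonal edge then costs \(w(U[i],V[i])=0\), which forces \(U[i]=V[i]\) because \(w\) is a metric, so \(U=V\).

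For symmetry, reflect the grid of \(U,V\) onto the grid of \(V,U\) via \((i,j)\mapsto(j,i)\). Deletion edges become insertion edges of the same cost \(g(\cdot)\), and diagonal edges keep cost \(w(U[i],V[j])=w(V[j],U[i])\) by symmetry of \(w\). Path costs are therefore preserved.

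For the triangle inequality, the cleanest route uses the equivalence, noted above, between the alignment formulation and the edit-sequence definition. An edit sequence transforming \(R\) into \(S\) of cost \(\WED_w(R,S)\), followed by one transforming \(S\) into \(T\) of cost \(\WED_w(S,T)\), is an edit sequence from \(R\) to \(T\) whose cost is the sum. Since \(\WED_w(R,T)\) is the minimum over such sequences, the inequality follows.

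The main obstacle is justifying that equivalence, since the concatenated sequence may edit the same character repeatedly, for example substituting \(a\to b\to c\), inserting then substituting, or substituting then deleting. I would argue by composing alignments directly. Treat an alignment as a set of columns of the form \((x,\gap)\), \((\gap,y)\), or \((x,y)\). Merge the \(R\)–\(S\) alignment with the \(S\)–\(T\) alignment along the characters of \(S\), which yields three-row columns \((r,s,t)\) over \(\Sigma\cup\{\gap\}\). Drop the columns where \(r=t=\gap\). Each remaining column \((r,s,t)\) contributes \(w(r,s)+w(s,t)\ge w(r,t)\) to the composed cost, by the triangle inequality for \(w\) on \(\Sigma\cup\{\gap\}\), while the dropped columns contribute nonnegative amounts. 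Projecting onto rows one and three gives a valid alignment of \(R\) and \(T\), a monotone grid path, of cost at most \(\WED_w(R,S)+\WED_w(S,T)\). The only care needed is to order columns consistently when interleaving gap columns, which a standard merge handles.
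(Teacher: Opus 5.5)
Your proof is correct, but it takes a more self-contained route than the paper. The paper gives no proof of its own. It cites Sellers and, in the preceding paragraph, defines \(\WED_w\) as a minimum over edit sequences. It then asserts that the grid formulation is equivalent, because successive edits to the same character can be merged into a single edit of no greater cost. Under that definition, the triangle inequality is exactly the concatenation argument you give first.

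You instead establish every axiom directly for the grid-path distance. Your key step merges the \(R\)--\(S\) and \(S\)--\(T\) alignments along the characters of \(S\) into three-row columns, drops the \((\gap,s,\gap)\) columns, and applies \(w(r,s)+w(s,t)\ge w(r,t)\) columnwise, using \(w(\gap,\gap)=0\) for the phantom pairs. This proves the triangle inequality for the formulation that all later arguments in the paper actually use, without ever invoking the equivalence.

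Your composition argument is also the rigorous content behind the paper's one-line equivalence remark. Each single edit is a grid path of the same cost, so by induction your composition shows that the grid distance is at most the cost of any edit sequence. That is the nontrivial direction; the converse is just reading off a path's edits. The paper's route buys brevity by deferring to the literature. Yours gives a complete argument that handles the bookkeeping for characters that are inserted and later substituted or deleted, which the informal ``merge successive edits'' phrasing glosses over.
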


\subsection{Prefix-mass coordinates}
\label{sec:prefix-mass-coordinates}

The mass of a string \(Z\) is
\[
 \mu(Z):=\sum_{j=0}^{|Z|-1}g(Z[j]).
\]
This is the cost of deleting \(Z\), so
\(\WED_w(Z,\epsilon)=\mu(Z)\).
For a boundary \(j\) of \(Z\), define the prefix mass
\[
 \mu_Z(j):=\mu\bigl(Z[0\dd j)\bigr).
\]
Thus, \(\mu_Z(0)=0\), \(\mu_Z(\abs Z)=\mu(Z)\), and
\(\mu\bigl(Z[p\dd q)\bigr)=\mu_Z(q)-\mu_Z(p)\).
Since every character has positive gap mass, the sequence
\(\mu_Z(0),\ldots,\mu_Z(\abs Z)\) is strictly increasing.

The difference between source and target prefix masses is the weighted
analogue of the diagonal coordinate \(i-j\).  The next lemma bounds
its change along any path in an edit grid.

\begin{lemma}[Alignment potential]
\label{lem:potential}
Let \(U\) and \(V\) be strings.  Every path of cost \(c\) from
\((a,p)\) to \((b,q)\) in their edit grid satisfies
\[
 \left|
   (\mu_U(b)-\mu_U(a))-(\mu_V(q)-\mu_V(p))
 \right|
 \le c.
\]
In particular, for every \(D\ge0\), each alignment of \(U\) and \(V\)
of cost at most \(D\) is contained in the strip
\[
 \bigl\{(i,j):|\mu_U(i)-\mu_V(j)|\le D\bigr\}.
\]
\end{lemma}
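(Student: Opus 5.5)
The plan is to prove the inequality one edge at a time and then sum. Define the potential of a grid vertex as
\[
 \Phi(i,j):=\mu_U(i)-\mu_V(j).
\]
The quantity bounded in the lemma is exactly \(\abs{\Phi(b,q)-\Phi(a,p)}\). Since the path is a sequence of edges, it telescopes:
\[
 \Phi(b,q)-\Phi(a,p)=\sum_{e}\Delta_e\Phi .
\]
By the triangle inequality for absolute values, it therefore suffices to show \(\abs{\Delta_e\Phi}\le\operatorname{cost}(e)\) for every edge \(e\) of the grid. Summing these bounds over the edges of the path then gives the bound \(c\).

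The three edge types are checked directly.
\begin{enumerate}
\item A deletion edge \((i,j)\to(i+1,j)\) increases \(\Phi\) by \(g(U[i])\), which equals its cost.
\item An insertion edge \((i,j)\to(i,j+1)\) decreases \(\Phi\) by \(g(V[j])\), which again equals its cost.
\item A diagonal edge \((i,j)\to(i+1,j+1)\) changes \(\Phi\) by \(g(U[i])-g(V[j])\), while its cost is \(w(U[i],V[j])\).
\end{enumerate}
For the diagonal case, write \(g(x)=w(x,\gap)\) and apply the triangle inequality of \(w\) on \(\Sigma\cup\{\gap\}\): \(w(U[i],\gap)\le w(U[i],V[j])+w(V[j],\gap)\), together with the symmetric inequality. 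Together these give the reverse triangle inequality \(\abs{g(U[i])-g(V[j])}\le w(U[i],V[j])\). This is the only step that uses metricity, and it is the only one that needs any thought.

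For the ``in particular'' part, fix an alignment of cost at most \(D\) and any vertex \((i,j)\) on it. The prefix of the alignment from \((0,0)\) to \((i,j)\) is a path whose cost is at most \(D\), because all edge costs are nonnegative. Applying the first part with \((a,p)=(0,0)\) and using \(\mu_U(0)=\mu_V(0)=0\) gives \(\abs{\mu_U(i)-\mu_V(j)}\le D\). Hence every vertex of the alignment lies in the strip.
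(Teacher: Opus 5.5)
Your proof is correct and follows the paper's own argument essentially step for step. Like the paper, you track the potential \(\mu_U(i)-\mu_V(j)\), bound its change on each edge by that edge's cost (using the reverse triangle inequality \(\lvert g(U[i])-g(V[j])\rvert\le w(U[i],V[j])\) for diagonal edges), and obtain the strip containment by applying the bound to prefixes that start at \((0,0)\).
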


\begin{proof}
Track the potential \(\mu_U(i)-\mu_V(j)\) along the path.  A deletion or an
insertion changes this potential in absolute value by exactly the cost
of the corresponding edge.  A diagonal edge changes it by
\(g(U[i])-g(V[j])\).  By the reverse triangle inequality for \(w\),
\[
 \bigl|g(U[i])-g(V[j])\bigr|
 =
 \bigl|w(U[i],\gap)-w(V[j],\gap)\bigr|
 \le w(U[i],V[j]),
\]
which is the cost of that edge.  Therefore, the absolute change in the
potential along the entire path is at most the sum of its edge costs.
For an alignment of cost at most \(D\), apply this bound to each
prefix and use \(\mu_U(0)=\mu_V(0)=0\) to obtain the strip containment.
\end{proof}

\subsection{Threshold subsequences and local instances}
\label{sec:threshold-subsequences}
\label{par:threshold-subsequences}

For a string \(Z\) and a threshold \(t>0\), let \(Z^{>t}\) denote the
subsequence obtained by retaining exactly the characters \(Z[j]\) with
\(g(Z[j])>t\), in their original order.  Thus, passing from \(Z\) to
\(Z^{>t}\) removes every character of gap mass at most \(t\).
Unless stated otherwise, fragment boundaries always refer to the
original, unfiltered string.

Fix an accuracy parameter \(0<\varepsilon\le 1\).  Throughout the
local constructions, we also fix
\[
 \zeta:=c_\zeta\varepsilon,
 \qquad 0<\zeta\le \tfrac1{16},
\]
where \(c_\zeta>0\) is a sufficiently small absolute constant.

The local constructions in
\cref{sec:short-pieces-shared-windows,sec:row-comparisons,sec:rectangle-certification}
receive two strings \(S,T\) and an ambient length bound
\[
 N\ge \abs S+\abs T,
 \qquad N\ge 2.
\]
When a local construction is applied to substrings of the original
input, the parameter \(N\) retains its original global value.  In
particular, \(N\), rather than \(\abs S+\abs T\), determines the
failure-probability scale and the parameter choices used by the local
algorithm.

\section{Computing a path through certified shortcuts}
\label{sec:weighted-shortcut-geometry}

We represent certified substring comparisons by shortcuts in the
edit grid.  The following proposition is a direct weighted extension
of the shortest-path algorithm in~\cite[Section~4]{CDGKS20}.
It keeps all gap edges implicit.
We call a shortcut \((a,p)\to(b,q)\), with \(a<b\) and
\(p\le q\), \emph{certified} if its label \(z\), used as its edge
cost, satisfies
\[
 z\ge\WED_w(X[a\dd b),Y[p\dd q)).
\]

The optimization reduces to a maximum-weight chain problem in two
coordinates.  The proof below follows the reduction from costs to
benefits and the sweep in~\cite[Section~4]{CDGKS20}, replacing
character-coordinate differences by prefix-mass differences.

\begin{proposition}[Shortest paths through certified shortcuts]
\label{prop:implicit-weighted-shortcut-consumer}
Let \(G\) consist of all vertices of the edit grid of \(X\) and \(Y\),
all gap edges at their weighted gap costs, and the following \(M\) shortcuts:
\[
 e:(a_e,p_e)\to(b_e,q_e),\qquad
 a_e<b_e,\quad p_e\le q_e,
\]
with labels \(z_e\ge0\).
The distance from the source \((0,0)\) to the sink \((|X|,|Y|)\) in \(G\)
can be computed in \(O(N+M\log N)\) time and \(O(N+M)\)
space, with all gap edges implicit.  If every label is a
certified upper bound on its substring distance, the result is a
certified upper bound on \(\WED_w(X,Y)\).
\end{proposition}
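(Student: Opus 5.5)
Any path from \((0,0)\) to \((|X|,|Y|)\) in \(G\) uses some sequence of shortcuts, connected by monotone staircases of gap edges. I will first show that such a staircase from \((b,q)\) to \((a',p')\) costs exactly
\[
 (\mu_X(a')-\mu_X(b))+(\mu_Y(p')-\mu_Y(q)),
\]
for every route between those points. This requires \(b\le a'\) and \(q\le p'\).

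Next I pass from costs to benefits. Consider a path using shortcuts \(e_1,\dots,e_k\), where each \(e_{i+1}\) starts weakly after \(e_i\) ends in both coordinates. Its cost is
\[
 \mu(X)+\mu(Y)-\sum_i\beta_e,\qquad
 \beta_e:=\bigl(\mu_X(b_e)-\mu_X(a_e)\bigr)+\bigl(\mu_Y(q_e)-\mu_Y(p_e)\bigr)-z_e .
\]
Minimizing the cost is therefore the same as finding a maximum-weight chain of shortcuts under the partial order \(b_e\le a_{f}\), \(q_e\le p_{f}\). Shortcuts with \(\beta_e\le 0\) may be discarded. The empty chain covers the all-gap path.

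To compute the chain, I compute prefix masses in \(O(N)\) time and sort the shortcut start and end events by the \(X\)-coordinate, using bucketing in \(O(N+M)\) time. I then sweep over increasing \(X\)-boundaries, maintaining a max-Fenwick or segment tree indexed by \(Y\)-boundary, which uses \(O(N)\) space.

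For each shortcut \(e\), let \(B(e)\) be the best chain value ending with \(e\). It is computed when the sweep reaches \(a_e\):
\[
 B(e)=\beta_e+\max\{0,\ \text{query of the prefix }[0\dd p_e]\}.
\]
The value \(B(e)\) is inserted at index \(q_e\) when the sweep reaches \(b_e\). At each \(X\)-boundary, insertions are processed before queries, so that \(b_e=a_f\) is allowed. Since \(a_e<b_e\), a shortcut is never queried against itself. Each shortcut then costs \(O(\log N)\) time, which gives the stated bound.

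Certification follows because every path in \(G\) is a concatenation of gap edges and shortcuts. Each such edge transforms the corresponding fragments at cost at most its weight, and concatenating these transformations gives an edit sequence from \(X\) to \(Y\). Hence \(\WED_w(X,Y)\) is at most the computed distance.

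The main obstacle is checking that the staircase-cost identity and the chain decomposition are exact when shortcuts share endpoints or have \(p_e=q_e\). I will handle this with the weak inequalities above, together with the insert-before-query tie rule.
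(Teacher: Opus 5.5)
Your proposal is correct and follows the paper's proof essentially step for step. Both use the same cost-to-benefit reduction to a maximum-weight chain under \(b_e\le a_f\), \(q_e\le p_f\) (with the empty chain giving the all-gap path), the same sweep over source coordinates with a prefix-maximum segment tree on target coordinates (ends processed before starts, and \(a_e<b_e\) ensuring each value is computed before it is activated), and the same certification by expanding shortcuts into alignments. Your \(\max\{0,\cdot\}\) in \(B(e)\) plays the role of the paper's zero-initialized maxima, so the returned distance is \(\mu(X)+\mu(Y)\) minus the prefix maximum at \(|Y|\) after the last event.
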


\begin{proof}
For a shortcut \(e\), define its benefit by
\[
 \operatorname{ben}(e):=
 (\mu_X(b_e)-\mu_X(a_e))+(\mu_Y(q_e)-\mu_Y(p_e))-z_e.
\]
The source--sink distance in \(G\) equals
\begin{equation}
\label{eq:weighted-shortcut-chain-value}
 \mu(X)+\mu(Y)-
 \max_{\substack{e_1,\ldots,e_k\\
                  b_{e_i}\le a_{e_{i+1}},\;
                  q_{e_i}\le p_{e_{i+1}}}}
 \sum_{i=1}^k\operatorname{ben}(e_i),
\end{equation}
where the maximum ranges over chains of shortcuts and includes the
empty chain with value zero.
An all-gap path has cost \(\mu(X)+\mu(Y)\).  Traversing a shortcut's
rectangle using gap edges costs
\((\mu_X(b_e)-\mu_X(a_e))+(\mu_Y(q_e)-\mu_Y(p_e))\), so the shortcut saves
exactly \(\operatorname{ben}(e)\).  A sequence of shortcuts can
occur on a directed path precisely when their endpoints satisfy the
two displayed compatibility inequalities: gap edges fill every
coordinate gap between them.  Summing their savings proves
\eqref{eq:weighted-shortcut-chain-value}.  Replacing a negative-benefit
shortcut by gap edges preserves reachability and reduces cost.

To find the maximum, bucket the shortcut starts and ends by their
source coordinates and sweep \(x\) through \([0\dd |X|]\) in increasing
order.  A prefix-maximum
data structure on target coordinates stores the best benefit of a
chain whose last shortcut ends at an already processed vertex.
The query \(\operatorname{prefixmax}(p)\) returns the largest stored
value at a target coordinate at most \(p\).  A point-maximum update
stores the larger of the current and supplied values at its coordinate.
At coordinate \(x\), first activate all shortcuts ending there.
Then, for each shortcut \(e\) starting there, compute
\[
 F_e:=\operatorname{ben}(e)+\operatorname{prefixmax}(p_e),
\]
and schedule a point-maximum update with value \(F_e\) at coordinate
\(q_e\) when the sweep reaches \(b_e\).  Since \(a_e<b_e\),
this value is computed before it is activated.  Processing ends before
starts permits consecutive shortcuts to share a source boundary.
Initialize all maxima to zero; after the final sweep event, the prefix
maximum at \(|Y|\) is the desired chain value.

A segment tree implements each query and update in
\(O(\log N)\) time and uses \(O(N)\) space.  Prefix masses and
event buckets take \(O(N+M)\) time and space.  Finally, replacing
each certified shortcut by an alignment of its substring pair turns
any source--sink path into an alignment of no greater cost.  This proves
certification of the returned value.
\end{proof}

\section{Covering inexpensive alignments}
\label{sec:alignment-cover}

This section constructs a collection of rectangles whose union contains
every alignment of cost at most a given distance budget \(D>0\).
The rectangles have limited overlap and, except for unit rectangles,
gap mass \(O(D)\) in each projection.  These properties allow us to bound
the total error and running time of the subsequent local computations.

For fragments \(X[a\dd b)\) and \(Y[\ell\dd r)\), the
alignment-grid rectangle
$[a\dd b]\times[\ell\dd r]$
is the subgraph induced by the vertices
$\{(i,j):a\le i\le b,\ \ell\le j\le r\}$.
Its source and target projections are \(X[a\dd b)\) and \(Y[\ell\dd r)\),
and their gap masses are the two mass dimensions of the rectangle.
A \emph{unit rectangle} is \(1\)-by-\(1\): it spans one character on each side,
regardless of their masses.  We also allow degenerate rectangles whose source or target projection
is empty.  Such a rectangle contains only insertion edges when its
source projection is empty, and only deletion edges when its target
projection is empty.
Containment of an alignment means containment of every edge of its path.

\begin{lemma}[Rectangle cover]
\label{lem:alignment-rectangle-cover}
Given strings \(X,Y\) and a budget \(D>0\), a deterministic algorithm in
\(O(N\log N)\) time returns a collection \(\mathcal B\) of
alignment-grid rectangles such that:
\begin{enumerate}
\item Each rectangle is \(1\)-by-\(1\), or both of its mass
      dimensions are \(O(D)\).
\item Each character of \(X\) belongs to the source projection of
      at most one rectangle.
\item Each character of \(Y\) belongs to the target projection of
      at most \(O(1)\) rectangles.
\item Every $X$-to-$Y$ alignment of cost at most \(D\) is contained in the
      union of the rectangles.
\end{enumerate}
\end{lemma}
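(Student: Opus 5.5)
The construction follows the outline in \cref{sec:overview-cover}, and every geometric claim will rest on \cref{lem:potential}. First compute the prefix masses \(\mu_X\) and \(\mu_Y\) in linear time.

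\textbf{Step 1: heavy characters.} Call a character of \(X\) \emph{heavy} if its gap mass exceeds \(3D\). Consider an alignment of cost at most \(D\). It cannot delete a heavy \(X[i]\), so it must substitute \(X[i]\) with some \(Y[j]\), using the edge \((i,j)\to(i+1,j+1)\). Since \(w(X[i],Y[j])\le D\), the reverse triangle inequality gives \(g(Y[j])>2D\). By \cref{lem:potential}, both \(|\mu_X(i)-\mu_Y(j)|\le D\) and \(|\mu_X(i+1)-\mu_Y(j+1)|\le D\) hold.

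Characters of \(Y\) with mass greater than \(2D\) have disjoint mass intervals \([\mu_Y(j),\mu_Y(j+1))\), each longer than \(2D\). Hence at most one such \(j\) has start \(\mu_Y(j)\) within \(D\) of \(\mu_X(i)\), since two such starts would differ by more than \(2D\). We find this \(j\) by binary search, and if none exists we return \(\emptyset\) as a cover check. The unit rectangles \([i\dd i+1]\times[j\dd j+1]\) cover all forced pairs. Different heavy \(i\) yield different \(j\), because alignments are monotone, so these pairs can be made consistent: if the chosen \(j\)'s are not strictly increasing in \(i\), no cheap alignment exists and we return \(\emptyset\).

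The heavy pairs split \(X\) and \(Y\) into consecutive gap pairs \((X[i_s+1\dd i_{s+1}),\,Y[j_s+1\dd j_{s+1}))\). Every cheap alignment passes through \((i_s+1,j_s+1)\) and \((i_{s+1},j_{s+1})\), so it stays in the product of each gap pair. Heavy characters of \(Y\) that are not forced are left inside the gap pairs; this is harmless, as explained in Step 2.

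\textbf{Step 2: light pieces.} Within a gap pair, all source characters have mass at most \(3D\). Greedily cut the source fragment into pieces \(X[a\dd b)\) whose mass first reaches at least \(D\), and merge a final piece of mass less than \(D\) into its predecessor. Each piece therefore has mass in \([D,7D)\), or it is the unique piece, with mass less than \(D\).

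For each piece, take the target fragment \(Y[\ell\dd r)\) of the gap pair, where \(\ell\) is the smallest boundary with \(\mu_Y(\ell)\ge \mu_X(a)-D\) and \(r\) is the largest boundary with \(\mu_Y(r)\le\mu_X(b)+D\), both clipped to the gap pair. The rectangle \([a\dd b]\times[\ell\dd r]\) then contains every edge of a cheap alignment whose source coordinates lie in \([a,b]\), by the strip containment in \cref{lem:potential} applied at both endpoints of each edge. Edges leaving a source boundary lie in exactly one piece's span, and insertion edges at a cut boundary lie in both neighbouring rectangles, so claim~4 follows. Source projections are disjoint, which is claim~2.

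\textbf{Obstacle: claims 1 and 3.} The hard part is bounding the target mass and the overlap, because target characters can be heavy and a boundary character can jump far. The prefix-mass range \([\mu_X(a)-D,\mu_X(b)+D]\) has length at most \(9D\). Since \(\ell\) and \(r\) are chosen inside this range, \(\mu(Y[\ell\dd r))\le 9D\), which gives claim~1.

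For overlap, the source pieces have masses at least \(D\), except when there is only one. So consecutive ranges \([\mu_X(a)-D,\mu_X(b)+D]\) of one gap pair advance by at least \(D\) each while having length \(O(D)\). Any fixed target character lies in \(O(1)\) of them, which gives claim~3; gap pairs are disjoint in \(Y\) apart from the excluded unit characters.

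Edge cases need care: empty source fragments become degenerate rectangles with insertion edges only, and empty targets are handled similarly. Binary searches over prefix masses give the \(O(N\log N)\) running time.
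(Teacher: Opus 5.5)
Your construction follows the paper's closely: forced heavy pairs become unit rectangles, source pieces are cut greedily at mass \(D\), and target ranges are cut out by prefix masses. As written, however, it has a gap exactly where you call unforced heavy target characters harmless. Step~2 never justifies this, and it is false.

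Because you isolate only source characters heavier than \(3D\), a gap pair may contain target characters of arbitrary mass. Then \(\ell\le r\) can fail: a single character \(Y[r]\) with \(\mu_Y(r)<\mu_X(a)-D\) and \(\mu_Y(r+1)>\mu_X(b)+D\) forces \(\ell=r+1\). For example, take \(X=\mathtt{aaa}\) with \(g(\mathtt a)=D\) and \(Y=\mathtt{cb}\) with \(g(\mathtt c)=D/10\) and \(g(\mathtt b)=10D\). No source character is heavy, the third piece is \(X[2\dd 3)\), and you get \(\ell=2>r=1\). So \([a\dd b]\times[\ell\dd r]\) is not a rectangle. Your bound \(\mu(Y[\ell\dd r))\le 9D\), justified by \(\ell\) and \(r\) lying in the range, then has nothing to apply to.

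Your edge cases have a related hole. A gap pair with empty source and nonempty target yields a degenerate rectangle whose target mass can be arbitrarily large, for instance one heavy unforced \(Y\) character between two consecutive forced pairs. This violates claim~1.

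Both holes are closed by infeasibility checks, which the lemma allows because claim~4 is vacuous when no alignment of cost at most \(D\) exists.
\begin{itemize}
\item If \(\ell>r\), or one of the defining sets is empty within the gap pair, return \(\varnothing\). Any alignment of cost at most \(D\) visits vertices \((a,j)\) and \((b,j')\) with \(j\le j'\) inside the gap pair, and \cref{lem:potential} gives \(\ell\le j\le j'\le r\).
\item For a gap pair that is empty on exactly one side, return \(\varnothing\) if its mass exceeds \(D\), since a cheap alignment must cross it using gap edges only.
\end{itemize}

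The paper instead also runs the symmetric partner search for target characters heavier than \(3D\), so every component character has mass at most \(3D\). It then widens the slack to \(2D\). Now \(\ell_s>r_s\) would make two consecutive target prefix masses differ by more than \(4D\), which is impossible, so only the emptiness checks remain. With either repair, the rest of your argument goes through: coverage with slack \(D\), the \(O(D)\) target mass, and constant overlap from pieces of mass at least \(D\).
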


\begin{proof}
Whenever a check below certifies \(\WED_w(X,Y)>D\), discard any
rectangles constructed so far and return the empty collection
\(\mathcal B=\varnothing\).  This collection satisfies the first three
properties.  The coverage property is vacuous because no alignment
has cost at most \(D\).

\emph{Isolating heavy characters.}
Consider \(X[i]\) with \(g(X[i])>3D\).  Every alignment of cost at
most \(D\) consumes it by a diagonal with some \(Y[j]\).  The
diagonal cost and \cref{lem:potential} imply
\[
 w(X[i],Y[j])\le D,\qquad g(Y[j])>2D,\qquad |\mu_X(i)-\mu_Y(j)|\le D.
\]
These inequalities give
\(\mu_Y(j)\le\mu_X(i)+D<\mu_Y(j+1)\).
Thus, the partner $Y[j]$ is uniquely determined by the target prefix masses.

For each source character heavier than \(3D\), use binary search
over all target prefix masses to find this candidate partner.
If no such interval exists or the candidate violates any of the
displayed necessary conditions, return the empty collection.
Perform the symmetric search for every target character heavier than \(3D\).
Take the union of the proposed pairs, keeping each pair of positions only once.
If the proposed pairs are not one-to-one and order-preserving,
return the empty collection.

For each proposed pair \((i,j)\), output the unit rectangle
\([i\dd i+1]\times[j\dd j+1]\).  The fragments before, between,
and after the paired positions form corresponding source--target
fragment pairs, which we call \emph{components} in this proof.
Omit components empty on both sides.  For a component empty on
exactly one side, return the empty collection if its total mass
exceeds \(D\); otherwise, output its degenerate rectangle.

Every alignment of cost at most \(D\) uses all proposed diagonals,
so the partner and consistency checks must succeed if such an
alignment exists.  Cutting the alignment immediately before and after these
diagonals leaves a subpath through each component.  A component
empty on exactly one side has only a gap alignment, whose cost
equals its total mass; this justifies the mass check.  Components
empty on both sides contain no edges.  Independently of whether a
cheap alignment exists, every character remaining in a component
has mass at most \(3D\), since each heavier character belongs to a
proposed pair.

\emph{Covering a component nonempty on both sides.}
Fix one such component and write its target fragment as
\(Y[u\dd v)\), where \(u<v\) are boundary indices in the
original string \(Y\).
Greedily cut its source at the first boundary where the mass since
the preceding cut reaches \(D\).  Merge a final remainder of mass
below \(D\) into the preceding piece, if there is one.  Write the
resulting pieces as \(X[a_s\dd a_{s+1})\), and put
\(\alpha_s=\mu_X(a_s)\), using global prefix masses.

For each source piece, return the empty collection if either set
below is empty; otherwise, output \([a_s\dd a_{s+1}]\times[\ell_s\dd r_s]\), where
\begin{align*}
 \ell_s&:=\min\bigl\{j\in[u\dd v]:
       \mu_Y(j)\ge\alpha_s-2D\bigr\},\\
 r_s&:=\max\bigl\{j\in[u\dd v]:
       \mu_Y(j)\le\alpha_{s+1}+2D\bigr\}.
\end{align*}
An alignment of cost at most \(D\) meets both source boundaries
inside the component and, by \cref{lem:potential}, supplies a target
boundary in each set.  Thus, if either set
is empty, no such alignment exists.
When both sets are nonempty,
we also have \(\ell_s\le r_s\):
if \(\ell_s>r_s\), then \(r_s<v\), and the definitions would give
\(\mu_Y(r_s)<\alpha_s-2D\) and
\(\mu_Y(r_s+1)>\alpha_{s+1}+2D\).  These consecutive prefix masses
would differ by more than \(4D\), contradicting
\(g(Y[r_s])\le3D\).  Every output is thus a rectangle.

Each piece has mass in \([D,5D)\), except that a component of total source mass
below \(D\) yields just one smaller piece.  Indeed, a greedy piece
has mass less than \(4D\), and the merged remainder adds less than
\(D\).
The definitions give
\(\mu_Y(\ell_s)\ge\alpha_s-2D\) and
\(\mu_Y(r_s)\le\alpha_{s+1}+2D\).  Thus, the target projection has
mass at most \(\mu(X[a_s\dd a_{s+1}))+4D<9D\).

\emph{Overlap.}
If \(Y[j]\) belongs to the target projection of the rectangle for
piece \(s\), then \(\alpha_s-2D\le\mu_Y(j)<\alpha_s+7D\).
Consecutive source-piece starts in one
component are separated in mass by at least \(D\), unless that
component has only one piece.  Hence, each target character belongs
to only a constant number of rectangles of that component.  The
components and unit rectangles have disjoint target projections
as sets of character positions.  Source projections
are disjoint by construction, proving the first three properties.

\emph{Coverage.}
Fix any alignment of cost at most \(D\).  By the preceding arguments,
all checks succeed.  The alignment uses
all the proposed diagonals and stays within the intervening
components.  At each vertex \((i,j)\) of the alignment inside a component,
with \(i\in[a_s\dd a_{s+1}]\),
\cref{lem:potential} gives
\[
 \alpha_s-D\le \mu_X(i)-D\le \mu_Y(j)\le \mu_X(i)+D
 \le\alpha_{s+1}+D.
\]
The lower inequality implies \(\ell_s\le j\), and the upper
inequality gives \(j\le r_s\).  Thus, the rectangle contains
every edge of the alignment within that source piece, including
insertions on either source boundary.  The unit rectangles and
degenerate component rectangles cover the remaining edges.

\emph{Running time.}
The searches for partners and all consistency checks take
\(O(N\log N)\) time.  The greedy source cuts take linear time,
and binary searches find the target boundaries within the same time bound.
\end{proof}

For a returned rectangle \(\mathcal B_s\), let \(m_s\) be the sum
of the character counts in its two projections.  The projection
bounds alone imply
\[
 \sum_s m_s=O(N).
\]
Since the construction omits rectangles with both dimensions zero,
there are \(O(N)\) rectangles.  Henceforth, unit rectangles are
handled by their exact substitution edges, and rectangles with a
zero dimension require only gap edges.

\section{Rows and windows}
\label{sec:short-pieces-shared-windows}
The rectangle cover of \cref{sec:alignment-cover} reduces the global
problem to constructing shortcuts separately inside each non-unit rectangle.
In this section, we develop the first step of this local construction:
partition the source projection into \emph{rows} and select a
family of fragments from the target projection, called \emph{windows}.
We show that exact row--window distances suffice to approximate the
weighted edit distance from the source to every target fragment.

Fix nonempty strings \(S\) and \(T\), let \(m:=|S|+|T|\) and \(F:=\mu(S)+\mu(T)\), and let
\(N\ge m\) be a given length bound.
Consider the alignment-grid rectangle corresponding to \(S\) and \(T\).
After reindexing its coordinates locally, its vertex set is
\([0\dd |S|]\times[0\dd |T|]\).

The rows are the fragments \(U_i=S[x_i\dd x_{i+1})\) of \(S\)
defined by a partition \(0=x_0<\cdots<x_R=|S|\).
Each window is a possibly empty fragment \(T[p\dd q)\) of the target string \(T\).
For a family \(\mathcal C\) of windows, the \emph{ideal shortcut graph}
contains all gap edges in this grid and, for each row \(U_i\) and each
window \(T[p\dd q)\in\mathcal C\), the shortcut
\((x_i,p)\longrightarrow(x_{i+1},q)\) of cost \(\WED_w(U_i,T[p\dd q))\).
The ideal shortcut graph uses exact distances, which the construction
of rows and windows does not compute.

\begin{restatable}[Construction of rows and windows]{lemma}{sharedwindows}
\label{lem:short-pieces-shared-windows}
Given \(W\in[1\dd m]\), one can construct in
\(\Ohtilde(m/\zeta)\) time a partition
\(0=x_0<\cdots<x_R=|S|\) and a family \(\mathcal C\) of windows
with the following properties:
\begin{enumerate}
\item Every row \(U_i=S[x_i\dd x_{i+1})\) has at most \(W\)
      characters, and the number of rows is
      \[
       R=\Ohtilde\!\left(\tfrac{m}{W}\right).
      \]
\item The number of windows is
      \[
       |\mathcal C|
       =\Ohtilde\!\left(\tfrac{N}{\zeta^2W}\right).
      \]
\item For every fragment \(T[u\dd v)\), the ideal shortcut graph contains a path from \((0,u)\) to
      \((|S|,v)\) of cost at most
      \[
       (1+O(\zeta))\cdot \WED_w(S,T[u\dd v))+O\!\left(\tfrac{\zeta mF}{N}\right).
      \]
\end{enumerate}
\end{restatable}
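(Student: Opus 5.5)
I follow the construction sketched in \cref{sec:overview-windows}, with \(\Delta:=\zeta F W/N\) up to polylogarithmic factors (say \(\Delta=\zeta FW/(N\log N)\)).

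\emph{Rows.} Cut \(S\) into blocks of \(W\) consecutive characters; there are at most \(\lceil |S|/W\rceil\) blocks. Inside each block, repeatedly take the largest level \(\tau\in\{\Delta,2\Delta,4\Delta,\dots\}\) not exceeding the mass of the unassigned suffix. Cut off the shortest prefix of mass at least \(\tau\). Since the leftover mass is below \(\tau\), the levels strictly decrease. A final remainder of mass below \(\Delta\) becomes one row. Because \(\mu(S)\le F\), each block yields \(O(\log(F/\Delta))=O(\log(N/\zeta))\) rows, which gives property~1. Each row \(U\) other than a remainder has a level \(\tau_U\). By minimality, \(\mu(U)-g(\text{last char})<\tau_U\le\mu(U)\).

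\emph{Windows.} Greedily retain target starts so that from any boundary one can advance to a retained start while skipping mass less than \(\Delta\). Taking a minimal such set gives \(O(F/\Delta)\) starts. For each start \(p\) and each level \(\tau\), place candidate masses \(\tau\pm\Delta(1+\zeta)^j\) for \(j=0,1,\dots\) up to mass \(F\). Add the empty window and the single-character windows at each start. Each candidate mass is rounded to the ending boundary of the character containing it. This gives \(O(\log(N/\zeta)/\zeta)\) candidates per level and \(O(\log^2/\zeta)\) per start. Hence \(|\mathcal C|=\Ohtilde(F/(\Delta\zeta))=\Ohtilde(N/(\zeta^2W))\), which is property~2. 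All of this uses prefix masses and binary search, so it runs in \(\Ohtilde(m/\zeta)\) time.

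\emph{Approximation (property 3).} Fix an optimal alignment of \(S\) with \(T[u\dd v)\) of cost \(c\), and cut it at the row boundaries \(x_i\). This gives target boundaries \(y_i\) and row costs \(c_i\) with \(\sum c_i=c\). For each row I replace the subpath from \((x_i,y_i)\) to \((x_{i+1},y_{i+1})\) by insertions, then one shortcut, then insertions, all with the same endpoints. The target is cost \(c_i+O(\zeta c_i+\Delta)\). Summing over \(R=\Ohtilde(m/W)\) rows gives additive error \(\Ohtilde(m\Delta/W)=O(\zeta mF/N)\) after fixing the log factor in \(\Delta\).

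The row-level cases are as follows.

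\begin{enumerate}
\item If \(\min(\mu(U),\mu(V))=O(\Delta)\), use the empty window and delete \(U\). By \cref{lem:potential} the cost is \(\mu(U)+\mu(V)\le c_i+2\min\le c_i+O(\Delta)\).
\item Otherwise, advance from \(y_i\) to a retained start \(p\), paying mass below \(\Delta\). I must justify that dropping this target prefix costs only \(O(\Delta)\) extra. The plan is to reinsert it and use the triangle inequality (\cref{lem:weighted-string-metric}): \(\WED_w(U,V')\le\WED_w(U,V)+\mu(V\setminus V')\).
\item Let \(x=\mu_T(y_{i+1})-\mu_T(p)\). If \(|x-\tau_U|\le K(c_i+\Delta)\) for a suitable constant \(K\), round the end backward to a candidate. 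The removed mass is at most \(\zeta|x-\tau|+\Delta\), hence \(O(\zeta c_i+\Delta)\).
\item The mass bound \(\mu(U)\ge\tau\) together with \cref{lem:potential} excludes \(x\) being much smaller than \(\tau\). So the remaining case is \(x-\tau\gg c_i+\Delta\). Then the last character of \(U\) has mass exceeding \(x-\tau-c_i>c_i\), so the alignment cannot delete it and must pair it with a target character \(b\) with \(g(b)>x-\tau-2c_i\). I argue that the rounding interval around the partner's end, of size \(\zeta(x'-\tau)+\Delta\), is smaller than \(g(b)\). Rounding therefore lands exactly on \(b\)'s ending boundary. I take the shortcut to that boundary, and the trailing target is realized by the genuine insertions the alignment already makes.
\end{enumerate}

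The main obstacle is this last case. It needs careful bookkeeping of the partner's prefix mass \(x'\), which differs from \(x\), and a check that the candidate mass grid around \(\tau\) indeed rounds to \(b\)'s end. Geometric spacing makes the gap near \(x'\) about \(\zeta|x'-\tau|\), and I need \(\zeta|x'-\tau|+\Delta<g(b)\) using \(\zeta\le1/16\).
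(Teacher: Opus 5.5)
Your proposal is correct and follows the paper's proof essentially step for step. The shared ingredients are:
\begin{itemize}
\item mass-level rows whose last character carries the excess over the level;
\item a start net with displacement below $\Delta$;
\item window ends rounded backward on a grid refined geometrically around each level (the paper uses offsets $2^i\Delta(1+j/K)$ rather than $\Delta(1+\zeta)^j$, which matters only for the bit-length bounds in the word-RAM appendix);
\item the same case split, in which a large excess forces a heavy partner whose ending boundary the rounding must hit exactly.
\end{itemize}

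The details you left open are routine:
\begin{itemize}
\item Take $\Delta=\zeta FW/(N\Lambda)$ with $\Lambda=\Theta(\log(N/\zeta))$ rather than $\log N$, so that $R\Delta=O(\zeta mF/N)$ even for very small $\zeta$.
\item The partner lies at or after $p$, because its mass exceeds $\Delta$ while the skipped prefix has mass below $\Delta$.
\item The final rounding inequality reduces to $(1-\zeta)(x-\tau)>2c_i+\Delta$, using $\tau<x'\le x$.
\item The construction time follows from $|P|\le|T|+1\le m$.
\end{itemize}
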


The construction is deterministic, and the approximation guarantee
holds for every fragment of~\(T\).  We prove
\cref{lem:short-pieces-shared-windows} in stages.
First we partition the source, then
choose target starts, and finally determine target ends.  A local rounding argument gives a
replacement path for each row; concatenating these paths completes
the proof.

\subsection{Partitioning the source}

For the rest of this section, fix the maximum row length \(W\in[1\dd m]\).
Define the logarithmic normalization and mass resolution
\[
 \Lambda:=C_\Lambda\left\lceil
 \log_2\!\left(\tfrac{N}{\zeta}\right)\right\rceil
 \qquad\text{and}\qquad
 \Delta:=\tfrac{\zeta FW}{N\Lambda},
\]
where \(C_\Lambda\) is a sufficiently large positive integer constant.
The ceiling of the logarithm is the least integer \(j\) with
\(2^j\zeta\ge N\), found by repeated doubling.
Further, let
\[
 \mathcal L:=\{2^i\Delta:i\in\mathbb Z_{\ge0},\ 2^i\Delta\le\mu(S)\}
\]
be the \emph{mass levels} used in the construction.
The resolution \(\Delta\) bounds the number of levels even when
individual characters have arbitrarily small positive mass.

We construct each row \(U\) of mass at least \(\Delta\) so that its last character
accounts for any excess beyond the row's mass level \(\tau \in \mathcal L\), a property used by the later
endpoint-rounding argument.
To express this property, let \(U^-\) denote the prefix obtained by
removing the last character of \(U\).

\begin{lemma}[Source partition]
\label{lem:local-source-partition}
In \(\Ohtilde(m)\) time, one can partition the source string \(S\) into \(R\) rows
with the following properties:
\begin{enumerate}
\item The number of rows satisfies
\begin{equation}
\label{eq:local-resolution-charge}
 R=\Ohtilde\!\left(\tfrac{m}{W}\right)\qquad\text{and}\qquad
 R\Delta=O\!\left(\tfrac{\zeta mF}{N}\right).
\end{equation}
\item Every row \(U\) has at most \(W\) characters and either has mass less
than \(\Delta\) or satisfies
\[
 \mu(U^-)<\tau\le\mu(U)
 \qquad\text{for some }\tau\in\mathcal L.
\]
\end{enumerate}
\end{lemma}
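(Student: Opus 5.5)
We follow the greedy construction sketched in \cref{sec:overview-windows}. First cut \(S\) into \(\lceil |S|/W\rceil\) consecutive blocks of exactly \(W\) characters, except possibly the last. Then subdivide each block independently. While the unassigned suffix of the block has mass at least \(\Delta\), let \(\tau\in\mathcal L\) be the largest level not exceeding that mass. Such a level exists because \(\Delta\le\mu(\text{suffix})\le\mu(S)\). Take as the next row the shortest prefix of the suffix whose mass is at least \(\tau\). Once the remaining suffix is nonempty with mass below \(\Delta\), it becomes the final row of the block.

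\textbf{Property 2.} Each row lies inside a block, so it has at most \(W\) characters. A row cut at level \(\tau\) is the shortest prefix of mass at least \(\tau\), so it satisfies \(\mu(U^-)<\tau\le\mu(U)\). It is nonempty because \(\tau\ge\Delta>0\). The final remainder has mass below \(\Delta\).

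\textbf{Row count.} The key claim is that within a block the chosen levels strictly decrease. Suppose a row is cut at \(\tau\) from a suffix of mass \(x\), where \(\tau\le x<2\tau\) by maximality of \(\tau\). The leftover then has mass \(x-\mu(U)\le x-\tau<\tau\). Hence the next level is at most \(\tau/2\). Consequently each block yields at most \(|\mathcal L|+1\) rows. Since \(\mathcal L\subseteq\{2^i\Delta:2^i\le \mu(S)/\Delta\}\) and \(\mu(S)/\Delta\le F/\Delta=N\Lambda/(\zeta W)\le N\Lambda/\zeta\), we get \(|\mathcal L|=O(\log(N/\zeta))=\Ohtilde(1)\). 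This gives \(R\le \lceil |S|/W\rceil\,(|\mathcal L|+1)=\Ohtilde(m/W)\).

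\textbf{Mass charge.} The second bound in \eqref{eq:local-resolution-charge} needs an explicit rather than \(\Ohtilde\) constant, and this is the step that requires care. Write \(R\le (m/W+1)(\log_2(N\Lambda/\zeta)+2)\). Using \(W\le m\), this is at most \(O\bigl((m/W)\log(N/\zeta)\bigr)\). Note that \(\log(N\Lambda/\zeta)=O(\log(N/\zeta))\) because \(\Lambda\) is itself polylogarithmic. Multiplying by \(\Delta=\zeta FW/(N\Lambda)\) gives
\[
R\Delta=O\!\left(\tfrac{\zeta mF}{N}\cdot\tfrac{\log(N/\zeta)}{\Lambda}\right)=O\!\left(\tfrac{\zeta mF}{N}\right),
\]
since \(\Lambda=C_\Lambda\lceil\log_2(N/\zeta)\rceil\) cancels the logarithm. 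This cancellation is exactly the purpose of \(\Lambda\).

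\textbf{Running time.} The real RAM cannot compute \(\lfloor\log\rfloor\), so we find the largest level by doubling from \(\Delta\), which costs \(O(|\mathcal L|)\) comparisons. Finding the shortest prefix with mass at least \(\tau\) is a linear scan (or a binary search over prefix masses). The scans over a block are amortized linearly across its rows. Each row additionally pays \(O(|\mathcal L|)\) for the level search. The total time is therefore \(O(m+R|\mathcal L|)=\Ohtilde(m)\).
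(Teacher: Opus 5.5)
Your proposal is correct and follows essentially the same route as the paper's proof: the same cut into $W$-character blocks, the same greedy rule of taking the shortest prefix reaching the largest level $\tau\in\mathcal L$ not exceeding the suffix mass, the same strict-decrease argument giving $O(\Lambda)$ rows per block, and the same cancellation of $\Lambda$ in $R\Delta$. The only cosmetic difference is that you find each level by doubling instead of by binary search over the listed levels; this stays within the same $\Ohtilde(m)$ time bound.
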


\begin{proof}
Cut \(S\) into pieces of \(W\) characters, except possibly a shorter
last piece.  Within each piece, process its nonempty unassigned
suffix \(Z\) as follows.  If \(\mu(Z)<\Delta\), make \(Z\) its last row.
Otherwise, choose the largest \(\tau\in\mathcal L\) with
\(\tau\le\mu(Z)\), make the shortest prefix \(U\) with
\(\mu(U)\ge \tau\) the next row, and continue with the remaining suffix.

Each row stays within an original piece of at most \(W\)
characters.  Maximality of the chosen level \(\tau\) gives
\(\mu(U)\le \mu(Z)<2\tau\), and the shortest-prefix rule gives
\(\mu(U^-)<\tau\le\mu(U)\).  The remaining suffix has mass less
than \(\tau\), so successive chosen levels strictly decrease.
Since \(\mu(S)\le F\) and \(F/\Delta=N\Lambda/(\zeta W) \le N\Lambda/\zeta\),
there are \(O(\log(N\Lambda/\zeta))=O(\Lambda)\) levels.  Each original piece therefore
yields \(O(\Lambda)\) rows, including a possible final row of
mass below \(\Delta\).  Consequently,
\[
 R=O\!\left(\left(1+\tfrac{|S|}{W}\right)\Lambda\right)=\Ohtilde\!\left(\tfrac{m}{W}\right),
 \qquad
 R\Delta
 =O\!\left(\left(1+\tfrac{|S|}{W}\right)\Lambda\tfrac{\zeta FW}{N\Lambda}\right)
 =O\!\left(\tfrac{\zeta mF}{N}\right).
\]
Listing the levels, selecting them by binary search, and scanning
row endpoints takes \(\Ohtilde(m)\) time.
\end{proof}

For the remainder of this section, fix a partition supplied by
\cref{lem:local-source-partition}.

\subsection{Choosing target starts}

Next, we construct a small set of target boundaries to which any target start
can be moved forward at cost less than \(\Delta\).  We call this
set the \emph{start net}.

\begin{lemma}[Start net]
\label{lem:local-start-net}
In \(O(m)\) time, one can construct a set
\(P\subseteq[0\dd |T|]\) of size \[|P|=O\!\left(1+\tfrac{\mu(T)}{\Delta}\right)\] and a map
\(\rho:[0\dd |T|]\to P\) such that, for every \(j\in[0\dd |T|]\),
\[
 j\le\rho(j)\qquad\text{and}\qquad \mu(T[j\dd\rho(j)))<\Delta.
\]
\end{lemma}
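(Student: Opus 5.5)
We build \(P\) by a single backward greedy scan over the boundaries of \(T\). Initialize \(P:=\{|T|\}\) and keep a pointer \(\pi\) to the smallest element of \(P\) chosen so far. We process the boundaries \(j=|T|,|T|-1,\ldots,0\) in decreasing order. If \(\mu(T[j\dd\pi))<\Delta\), that is, \(\mu_T(\pi)-\mu_T(j)<\Delta\), we set \(\rho(j):=\pi\). Otherwise we add \(j\) to \(P\), set \(\rho(j):=j\), and update \(\pi:=j\). Prefix masses are computed once in \(O(m)\) time, and each step takes constant time with one real subtraction and one comparison. The whole construction therefore runs in \(O(m)\) time.

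Correctness is immediate from the construction. In both branches \(\rho(j)\in P\) and \(\rho(j)\ge j\). In the first branch the skipped mass is below \(\Delta\) by the test; in the second it is \(\mu(T[j\dd j))=0<\Delta\). The case \(j=|T|\) is covered because \(|T|\in P\) from the start.

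For the size bound, list the elements of \(P\) as \(p_0>p_1>\cdots>p_{k}\), where \(p_0=|T|\). Each \(p_{i+1}\) was added exactly because the test failed against the pointer value \(p_i\), so \(\mu_T(p_i)-\mu_T(p_{i+1})\ge\Delta\). Telescoping gives \(k\Delta\le\mu_T(p_0)-\mu_T(p_k)\le\mu(T)\). Hence \(|P|=k+1\le 1+\mu(T)/\Delta\), which is the claimed \(O(1+\mu(T)/\Delta)\).

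There is no real obstacle. The only point needing care is to anchor \(|T|\) in \(P\) and scan right to left, so that every boundary, including those in the final stretch, has a forward net point. The strict inequality \(<\Delta\) must be matched by using \(\ge\Delta\) as the trigger for adding a new net point; the spacing argument relies on this.
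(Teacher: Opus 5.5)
Your proof is correct and is essentially the paper's argument run in the opposite direction. The paper scans left to right, cutting $[0\dd |T|]$ into maximal consecutive clusters $[s_t\dd u_t]$ with $\mu(T[s_t\dd u_t))<\Delta$ and mapping every boundary to its cluster's last boundary; you scan right to left and add a net point whenever the forward gap reaches $\Delta$, and in both versions the size bound comes from telescoping over consecutive points (cluster starts for the paper, net points for you) that are at least $\Delta$ apart in prefix mass.
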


\begin{proof}
Partition \([0\dd |T|]\) greedily into consecutive clusters
\(C_t=[s_t\dd u_t]\), starting at \(s_0=0\), with
\[
 u_t:=\max\{u\in[s_t\dd |T|]:\mu(T[s_t\dd u))<\Delta\}.
\]
If \(u_t<|T|\), start the next cluster at \(s_{t+1}=u_t+1\).
Keep the last boundary of every cluster and map each boundary to its
cluster's last boundary. Formally,
\[
 P:=\{u_t:C_t\text{ is a cluster}\}\qquad\text{and}\qquad
 \rho(j):=u_t\quad\text{for }j\in C_t.
\]

For each cluster \(C_t\), we have \(\mu(T[s_t\dd u_t))<\Delta\),
which gives the displacement bound.  Between consecutive clusters,
maximality of \(u_t\) gives \(\mu(T[s_t\dd s_{t+1}))\ge\Delta\).
Hence, there are at most \(1+\mu(T)/\Delta\) clusters; each of them
contributes one boundary.

One scan of the prefix masses constructs the
clusters, the ordered set \(P\), and the map \(\rho\).
\end{proof}

For the remainder of this section, fix \(P\) and \(\rho\) supplied
by \cref{lem:local-start-net}.

\subsection{Choosing target ends}

The next lemma constructs a small family of windows that permits
rounding target ends backward with a controlled loss of mass.
We choose candidate masses more densely near each source mass
level; the permitted loss grows with the mass difference from that
level, up to an additive \(\Delta\).  In the row-alignment proof,
either this difference is controlled by the alignment cost and
\(\Delta\), or the row's last character forces a heavy target partner
whose ending boundary is retained exactly.

\begin{lemma}[Target windows]
\label{lem:shared-target-windows}
In \(\Ohtilde(m/\zeta)\) time, one can construct a family
\(\mathcal C\) of target windows whose starting boundaries belong
to \(P\), with the following properties:
\begin{enumerate}
\item The family has size
\[
 |\mathcal C|=\Ohtilde\!\left(\tfrac{N}{\zeta^2W}\right).
\]
\item For every start \(p\in P\), mass level \(\tau\in\mathcal L\), and target
boundary \(t\in[p\dd |T|]\), there exists a window
\(T[p\dd q)\in\mathcal C\) satisfying
\begin{equation}
\label{eq:target-end-rounding}
 p\le q\le t\qquad\text{and}\qquad
 \mu(T[q\dd t))\le\zeta|\mu(T[p\dd t))-\tau|+\Delta.
\end{equation}
\end{enumerate}
\end{lemma}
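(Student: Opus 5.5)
The plan is an explicit grid of candidate masses measured from each start, finer near each level $\tau$. For each $p\in P$ and $\tau\in\mathcal L$, I would define target offsets
\[
 \theta\in\{0\}\cup\bigl\{\tau\pm j\Delta: 0\le j\le 1/\zeta\bigr\}\cup\bigl\{\tau\pm 2^{i}(1+\zeta)^{j}\Delta/\zeta\bigr\},
\]
with $i\ge0$ and $0\le j\le O(1/\zeta)$. The offsets are intersected with $[0,\mu(T)]$, and only those with $|\theta-\tau|\le F$ are kept. Equivalently, the distances $d=|\theta-\tau|$ form an arithmetic progression of step $\Delta$ up to $\Delta/\zeta$, and beyond that a geometric progression of ratio $1+\zeta$ up to $F$. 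Since $F/\Delta\le N\Lambda/(\zeta W)$, there are $O(\zeta^{-1}\log(N/\zeta))$ offsets per level. For each offset I would locate the boundary $q$: it is $q$ itself if $\mu_T(q)=\mu_T(p)+\theta$ exactly, and otherwise the ending boundary of the character whose mass interval contains $\mu_T(p)+\theta$. This follows the rounding rule from the overview.

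\emph{Size and time.} Counting gives $|P|\cdot|\mathcal L|\cdot O(\zeta^{-1}\Lambda)$ windows. Here $|P|=O(1+\mu(T)/\Delta)=O(N\Lambda/(\zeta W))$ and $|\mathcal L|=O(\Lambda)$, so the total is $\Ohtilde(N/(\zeta^2W))$. This is where the stated $\zeta^{-2}$ arises: one factor comes from $\Delta$ and one from the grid density. Each window is found by binary search over prefix masses in $O(\log m)$ time, and the offsets are generated by repeated multiplication with exact reals. The running time is therefore $\Ohtilde$ of the window count, plus $O(m)$ for prefix masses. Since $N/W$ may exceed $m$, I expect the stated $\Ohtilde(m/\zeta)$ bound to need a refinement. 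The refinement would be to deduplicate: distinct windows from a start $p$ must end at distinct boundaries, and very close offsets collapse onto the same boundary. Alternatively, the offsets per $(p,\tau)$ can be enumerated by advancing a pointer along $T$ and skipping offsets that land on an already-retained boundary, so that the work is charged to retained boundaries. I expect this time accounting to be the fiddly part; I would first check whether $|P|\le m+1$ together with $O(\zeta^{-1}\Lambda^2)$ offsets per start already suffices.

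\emph{Rounding guarantee.} Fix $p,\tau,t$ and set $x=\mu(T[p\dd t))$. I would take the largest grid offset $\theta\le x$, which exists because $0$ is in the grid, and let $q$ be the boundary retained for $\theta$. Since boundaries are rounded only upward to the end of the containing character, and $\mu_T(t)=\mu_T(p)+x\ge \mu_T(p)+\theta$ with $t$ a boundary, we get $q\le t$. The lost mass satisfies $\mu(T[q\dd t))\le x-\theta$, which is at most the gap to the next grid point above $\theta$. That gap is $\le\Delta$ when $|x-\tau|\le\Delta/\zeta$. In the geometric range it is $\le\zeta\cdot O(|x-\tau|)$, after rescaling $\zeta$ by a constant in the grid ratio, and this holds on both sides of $\tau$ because grid points are symmetric. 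If $x-\tau>F$, then $x\le\mu(T)\le F$ forces $\tau<0$, which is impossible. If $\tau-x>F$, then since $\tau\le\mu(S)\le F$ we would need $x<0$, which is also impossible. Hence the grid covers all relevant $x$, which gives \eqref{eq:target-end-rounding}.
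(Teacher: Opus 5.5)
Your proposal is correct and follows essentially the paper's construction. The paper also fixes a start-independent set of prescribed masses around each level $\tau\in\mathcal L$, fine near $\tau$ and with relative spacing $\zeta$ farther away up to $F$. It resolves these masses at each $p\in P$ by $q_p(s)=\min\{q\ge p:\mu(T[p\dd q))\ge s\}$ and rounds $x=\mu(T[p\dd t))$ down to a prescribed mass. The only difference is the grid: the paper uses equally spaced offsets $2^i\Delta(1+j/K)$ with $K=\lceil 1/\zeta\rceil$ inside each dyadic block, rather than your arithmetic-then-$(1+\zeta)^j$ grid. This choice also keeps bit lengths small for the word-RAM appendix.

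The time check you propose at the end is exactly the paper's argument, and it already suffices. Since $|P|\le|T|+1\le m$ and each start has $O(\Lambda^2/\zeta)$ offsets, each resolved by one binary search, the total is $\Ohtilde(m/\zeta)$. No deduplication or pointer-based refinement is needed. The bound $|P|=O(N\Lambda/(\zeta W))$ is used only for the size claim.
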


\begin{proof}
We start by presenting the construction. Let
\[
 K:=\lceil1/\zeta\rceil\qquad\text{and}
 \qquad J:=\min\{j\in\mathbb Z_{\ge0}:2^j\Delta\ge F\}.
\]
For \(i\in[0\dd J)\), choose equally spaced offsets in
\([2^i\Delta,2^{i+1}\Delta)\):
\[
 \mathcal R:=
 \left\{2^i\Delta\left(1+\tfrac jK\right):
       i\in[0\dd J),\ j\in[0\dd K)\right\}
 \cup\{2^J\Delta\}.
\]
Define the set of prescribed relative masses
\[
 \mathcal S:=
 \{\tau:\tau\in\mathcal L\}
 \cup\{\tau-u,\tau+u:\tau\in\mathcal L,\ u\in\mathcal R\}.
\]
For a start \(p\in P\) and mass offset \(s\in[0,\mu(T[p\dd |T|))]\), let
\(q_p(s)\) be the first target boundary \(q\ge p\) for which
\(T[p\dd q)\) has mass at least \(s\), that is,
\[
 q_p(s):=\min\{q\in[p\dd |T|]:\mu(T[p\dd q))\ge s\}.
\]
If a target boundary \(t\in[p\dd |T|]\) satisfies
\(s\le\mu(T[p\dd t))\), then \(t\) already reaches the prescribed
mass \(s\).  Since \(q_p(s)\) is the first boundary reaching that
mass, we have \(q_p(s)\le t\).

Define
\[
 Q(p):=\{p,|T|\}\cup
 \{q_p(s):s\in\mathcal S\cap[0,\mu(T[p\dd |T|))]\}\]
and
\[
 \mathcal C:=\{T[p\dd q):p\in P,\ q\in Q(p)\}.\]

Next, we analyze the size and the construction time.
By \cref{lem:local-start-net} and the definition of \(\Delta\),
\[
 |P|=O\!\left(1+\tfrac{F}{\Delta}\right)
 =\Ohtilde\!\left(\tfrac{N}{\zeta W}\right).
\]
Also, \(|\mathcal L|=O(\Lambda)\) and
\(J=O(\Lambda)\), because
\(F/\Delta=N\Lambda/(\zeta W)\le N\Lambda/\zeta\).
Hence, \(|\mathcal R|=KJ+1=O(\Lambda/\zeta)\).
It follows that
\[
 |Q(p)|=O\bigl(1+|\mathcal L|\,|\mathcal R|\bigr)
 =O(\Lambda^2/\zeta)
 =\Ohtilde(1/\zeta),
\]
proving the size bound.
We can find each \(q_p(s)\) by binary search in the prefix masses.
Enumerating the levels, offsets, and windows, including sorting and
deduplicating, takes
\(\Ohtilde(m/\zeta)\) time since \(P\subseteq[0\dd |T|]\) implies
\(|P|\le |T|+1\le m\).

It remains to prove the approximation guarantee.
Fix a start \(p\in P\), a mass level \(\tau\in\mathcal L\), and a target boundary \(t\in[p\dd |T|]\),
and put \(x:=\mu(T[p\dd t))\).  We first find
\(s\in\mathcal S\cup\{0\}\)
with \[0\le s\le x\qquad \text{and}\qquad
x-s\le\zeta|x-\tau|+\Delta.\]
For \(x\ge\tau\), set \(s=\tau\) if \(x-\tau<\Delta\);
otherwise, round the excess \(x-\tau\) down to the largest
\(u\in\mathcal R\) with \(u\le x-\tau\), and set \(s=\tau+u\).
If \(x<\tau\), round the deficit \(\tau-x\) up to the smallest
\(u\in\mathcal R\) with \(u\ge\tau-x\), and set
\(s=\max\{0,\tau-u\}\).  Both choices ensure \(0\le s\le x\).

To bound the rounding loss \(x-s\), note that consecutive offsets
in \([2^i\Delta,2^{i+1}\Delta]\) are separated by
\(2^i\Delta/K\), so their ratio is at most \(1+1/K\le1+\zeta\).
Since the offsets range from \(\Delta\) to at least \(F\), rounding
in either direction changes an offset of magnitude at least \(\Delta\)
by at most a \(\zeta\) fraction of its value.  Thus,
\(x-s\le\zeta|x-\tau|\) when \(|x-\tau|\ge\Delta\), whereas
\(x-s<\Delta\) otherwise.  Truncating \(s\) at zero can only
reduce this loss.

Take \(q=q_p(s)\).  Since \(s\in\mathcal S\cup\{0\}\) and
\(q_p(0)=p\), we have \(q\in Q(p)\) and hence \(T[p\dd q)\in\mathcal C\).
The inequalities \(0\le s\le x\) give \(p\le q\le t\).
Moreover, \(\mu(T[q\dd t))\le x-s\), proving
\eqref{eq:target-end-rounding}.
\end{proof}

For the remainder of this section, fix a family \(\mathcal C\)
supplied by \cref{lem:shared-target-windows}.

\subsection{Approximating a row alignment}

We now use the row and window properties to approximate the edit
distances between rows and fragments of $T$.
By the triangle inequality (\cref{lem:weighted-string-metric}),
trimming a target prefix or suffix increases the weighted edit distance
by at most the mass removed.
To use a window as a shortcut, we insert the omitted target prefix
before the shortcut and the omitted suffix afterward.  Their mass
contributes both to the shortcut-cost bound and to the cost of
these insertions.  The proof either bounds both omitted masses or
retains the trailing insertions of an optimal path at their original~cost.

\begin{lemma}[A path across one row]
\label{lem:local-row-replacement}
For every row \(U_i=S[x_i\dd x_{i+1})\) and every fragment \(T[j_-\dd j_+)\), the ideal shortcut graph
contains a path from \((x_i,j_-)\) to \((x_{i+1},j_+)\) of cost
at most
\[
 (1+O(\zeta))\WED_w(U_i,T[j_-\dd j_+))+O(\Delta).
\]
\end{lemma}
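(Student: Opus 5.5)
Fix a row \(U=U_i\) and a fragment \(V=T[j_-\dd j_+)\), and write \(c:=\WED_w(U,V)\). The path will have the form: insert \(T[j_-\dd p)\) with \(p:=\rho(j_-)\), take a shortcut \((x_i,p)\to(x_{i+1},q)\) for a window \(T[p\dd q)\in\mathcal C\), then insert \(T[q\dd j_+)\). By \cref{lem:weighted-string-metric}, trimming target mass changes the distance by at most the trimmed mass. The cost of such a path is therefore at most
\[
 c+2\,\mu(T[j_-\dd p))+2\,\mu(T[q\dd j_+)).
\]
By \cref{lem:local-start-net}, the first omitted mass is below \(\Delta\). It remains to choose \(q\) with \(\mu(T[q\dd j_+))=O(\zeta c+\Delta)\), or else to handle the exceptional cases separately.

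\emph{Trivial cases.} If \(p>j_+\), then \(\mu(V)<\Delta\). In that case, or more generally when \(\min\{\mu(U),\mu(V)\}=O(\Delta)\), I use the pure gap path. \Cref{lem:potential} gives \(\mu(U)+\mu(V)\le c+2\min\{\mu(U),\mu(V)\}\), so this path costs \(c+O(\Delta)\). This covers every row of mass below \(\Delta\). Otherwise \cref{lem:local-source-partition} supplies a level \(\tau\in\mathcal L\) with \(\mu(U^-)<\tau\le\mu(U)<2\tau\).

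\emph{Main case.} Apply \cref{lem:shared-target-windows} with start \(p\), level \(\tau\), and \(t:=j_+\), and put \(x:=\mu(T[p\dd j_+))\). The lemma yields \(q\le j_+\) with \(\mu(T[q\dd j_+))\le\zeta|x-\tau|+\Delta\).

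By \cref{lem:potential}, \(|\mu(U)-\mu(V)|\le c\), and hence \(|x-\mu(U)|\le c+\Delta\). If \(\mu(U)-\tau\) is also \(O(c+\Delta)\), then \(|x-\tau|=O(c+\Delta)\) and the path above has cost \((1+O(\zeta))c+O(\Delta)\).

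The remaining subcase is the main obstacle: \(\mu(U)-\tau\) is much larger than \(c+\Delta\), say more than \(C(c+\Delta)\) for a large constant \(C\). Because \(\mu(U^-)<\tau\), the last character \(a=U[|U|-1]\) has \(g(a)>\mu(U)-\tau\gg c\). An optimal alignment of \(U\) with \(V\) therefore cannot delete \(a\). It must match \(a\) with some \(V\)-character \(T[h]\) satisfying \(w(a,T[h])\le c\), so \(g(T[h])\ge g(a)-c\). After this diagonal, the alignment only inserts \(T[h+1\dd j_+)\), at mass at most \(c\).

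Now I apply \cref{lem:shared-target-windows} with \(t:=h+1\) instead of \(j_+\). Let \(x':=\mu(T[p\dd h+1))\), and compare it with the prefix-mass balance of the alignment up to vertex \((|U|,h+1)\). That gives \(|x'-\mu(U)|\le c+\Delta\), hence \(|x'-\tau|\le\mu(U)-\tau+c+\Delta\). The rounding loss is then at most \(\zeta(\mu(U)-\tau+c+\Delta)+\Delta\). For small \(\zeta\) this is strictly less than \(g(T[h])\ge\mu(U)-\tau-c\), using that \(\mu(U)-\tau\gg c+\Delta\). The rounding therefore cannot cross the heavy character: the returned \(q\) satisfies \(q\le h+1\) and \(\mu(T[q\dd h+1))<g(T[h])\), which forces \(q=h+1\).

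The resulting path inserts \(T[j_-\dd p)\) at cost below \(\Delta\), takes the shortcut to \((x_{i+1},h+1)\), and then follows the optimal alignment's trailing insertions at their original cost. The shortcut costs at most the cost of the alignment prefix plus \(\mu(T[j_-\dd p))\). The total is at most \(c+2\Delta\).

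The delicate points are the choice of the constant \(C\) and checking that the prefix-mass estimate for \(x'\) holds with the \(p\)-shift included. I would fix these by tracking the inequalities explicitly with \(\zeta\le 1/16\).
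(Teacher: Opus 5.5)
Your proof is correct and follows the paper's argument essentially step for step (gap path when a mass is below \(\Delta\), start-net shift to \(p=\rho(j_-)\), rounding at level \(\tau\) via the window lemma, and the heavy last character whose partner \(T[h]\) forces \(q=h+1\)); the only difference is that you split on \(\mu(U)-\tau\) rather than on \(v=\mu(T[p\dd j_+))-\tau\), and the two are equivalent up to the \(c+\Delta\) slack. One check you left implicit is \(p\le h\), which you need to apply the window lemma at boundary \(h+1\); as in the paper, it follows from the start-net bound \(\mu(T[j_-\dd p))<\Delta\), since \(g(T[h])>\Delta\).
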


\begin{proof}
Fix \(i,j_-,j_+\), and let \(\pi\) be a
minimum-cost path in the edit grid from \((x_i,j_-)\) to
\((x_{i+1},j_+)\).  Its cost is
\(c_i:=\WED_w(U_i,T[j_-\dd j_+))\).

\emph{Small masses.}
If either \(\mu(U_i)\) or \(\mu(T[j_-\dd j_+))\) is less than
\(\Delta\), delete \(U_i\) and insert \(T[j_-\dd j_+)\).
By \cref{lem:potential}, the two masses differ by at most \(c_i\),
so their sum is less than \(c_i+2\Delta\).  Gap edges
therefore give the required path.

Assume henceforth that both masses are at least \(\Delta\).
By \cref{lem:local-source-partition}, choose \(\tau\in\mathcal L\)
such that \(\mu(U_i^-)<\tau\le\mu(U_i)\).
\Cref{lem:local-start-net} gives a start \(p:=\rho(j_-)\in P\) such that \(j_-\le p\) and
\(0\le \mu(T[j_-\dd p))<\Delta\).  Since the target interval has mass
at least \(\Delta\), we also have \(p\le j_+\).  Write
\[
 v:=\mu(T[p\dd j_+))-\tau
\]
for the difference between the remaining target mass and the source
mass level \(\tau\).  By \cref{lem:potential}, we have
\(\left|\mu(T[j_-\dd j_+))-\mu(U_i)\right|\le c_i\).
Moving the start from \(j_-\) to \(p\) removes mass less
than \(\Delta\).  Since \(\mu(U_i)\ge\tau\), we obtain
\[
 v\ge\mu(T[j_-\dd j_+))-\Delta - \tau\ge \mu(U_i)-c_i-\Delta-\tau\ge-c_i-\Delta.
\]

\emph{Case 1. \(v\le4(c_i+\Delta)\).}
Since \(|v|\le4(c_i+\Delta)\), \cref{lem:shared-target-windows} supplies a
window \(T[p\dd q)\in\mathcal C\) with
\[
 p\le q\le j_+\qquad\text{and}\qquad
 \mu(T[q\dd j_+))\le4\zeta(c_i+\Delta)+\Delta.
\]
Insert the prefix \(T[j_-\dd p)\), take the shortcut for
\(T[p\dd q)\), and insert the suffix \(T[q\dd j_+)\).
By the triangle inequality, the shortcut cost is at most \(c_i\)
plus the total mass of the omitted prefix and suffix.
The total path cost is therefore at most
\[
 c_i+2\mu(T[j_-\dd p))+2\mu(T[q\dd j_+))
 \le c_i+2\Delta + 8\zeta(c_i+\Delta)+2\Delta.
\]
This is \(c_i+O(\zeta c_i+\Delta)\), as required.

\emph{Case 2. \(v>4(c_i+\Delta)\).}
Let \(a\) be the last character of \(U_i\).  By \cref{lem:potential},
\[
 \mu(U_i)\ge\mu(T[j_-\dd j_+))-c_i
 \ge\mu(T[p\dd j_+))-c_i=\tau+v-c_i.
\]
Since the prefix \(U_i^-\) preceding \(a\) has mass less than
\(\tau\), the last character has gap mass
\[
 g(a)=\mu(U_i)-\mu(U_i^-)>v-c_i>c_i.
\]
The path \(\pi\) cannot delete \(a\), since that alone would cost
more than \(c_i\).  It therefore pairs \(a\) diagonally with a
target character \(T[h]\), for which
\[
 g(T[h])\ge g(a)-w(a,T[h])>v-2c_i>\Delta.
\]
The displacement bound in \cref{lem:local-start-net} forces
\(p\le h\): if \(p>h\), then \(j_-\le h<p\) gives
\(\mu(T[j_-\dd p))\ge g(T[h])>\Delta\), a contradiction.
Since \(a\) is the last source
character, every edge of \(\pi\) after this diagonal is an
insertion.  Write the cost of these insertions as
\(\beta:=\mu(T[h+1\dd j_+))\le c_i\).  Since
\(\mu(T[p\dd h+1))-\tau=v-\beta>0\), applying
\cref{lem:shared-target-windows} to the boundary \(h+1\) gives a
window \(T[p\dd q)\in\mathcal C\) with \(q\le h+1\) and
\[
 \mu(T[q\dd h+1))\le\zeta(v-\beta)+\Delta
 \le\tfrac{v}{2}+\Delta<v-2c_i<g(T[h]),
\]
where we use \(\zeta\le\tfrac12\) and \(v>4(c_i+\Delta)\).
Since \(g(T[h])=\mu(T[h\dd h+1))\), this forces
\(q=h+1\).

Insert \(T[j_-\dd p)\), use the shortcut for \(T[p\dd h+1)\), and
retain the trailing insertions of \(\pi\).  The prefix of \(\pi\)
ending at \((x_{i+1},h+1)\) costs \(c_i-\beta\).  By the triangle
inequality, deleting \(T[j_-\dd p)\) from its target bounds the
shortcut cost by \(c_i-\beta+\mu(T[j_-\dd p))\).  The initial and final
insertions cost \(\mu(T[j_-\dd p))\) and \(\beta\), respectively.
Thus the total path cost is at most
\(c_i+2\mu(T[j_-\dd p))<c_i+2\Delta\), completing the proof.
\end{proof}

\subsection{Completing the decomposition proof}

\sharedwindows*

\begin{proof}
\Cref{lem:local-source-partition,lem:local-start-net,lem:shared-target-windows}
establish the row- and window-count bounds and the claimed
construction time.
Fix a target fragment \(T[u\dd v)\), and let \(\pi\) be an optimal
alignment of \(S\) with this fragment, of cost
\(c:=\WED_w(S,T[u\dd v))\).
Slice \(\pi\) at the row boundaries, and write \(c_i\) for its row costs.
The edit distance across each row is at most the cost \(c_i\) of
its subpath.  Hence, \cref{lem:local-row-replacement} gives a
replacement path of cost \(c_i+O(\zeta c_i+\Delta)\) for each row $U_i$.
These replacements preserve all slicing vertices, so they can be concatenated.
The total cost of their concatenation is
\[
 c+O(\zeta c+R\Delta)
 =c+O\!\left(\zeta\left(c+\tfrac{mF}{N}\right)\right),
\]
by \eqref{eq:local-resolution-charge}.  This proves the
approximation guarantee.
\end{proof}

\section{Comparison oracles}
\label{sec:comparison-oracles}

This section implements the certified comparisons used by the
sampling algorithms.  A single comparison with a short source string
admits a direct multiplicative approximation.  For batched comparisons,
a planar distance oracle~\cite{Klein05} shares preprocessing across many target
windows and can be restricted to an interval of possible target starts.

Throughout this section, \(N\ge2\) is an upper bound on the length of every
string, and the \(\Ohtilde\) notation suppresses factors polynomial
in \(\log N\).

We repeatedly use an observation that permits removing
any subset of the light target characters.

\begin{observation}[Restoring light target characters]
\label{obs:target-restoration}
For every pair of strings \(U\) and \(V\), threshold \(t>0\), and subsequence \(R\) obtained from \(V\) by removing only characters of gap mass at most \(t\),
\[
 \WED_w(U,V)\le \WED_w(U,R)+\mu(V)-\mu(R)
 \le \WED_w(U,V)+2|U|t.
\]
\end{observation}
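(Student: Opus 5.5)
The plan is to prove the two inequalities separately: the left one by exhibiting a transformation of $U$ into $V$ that passes through $R$, and the right one by converting an optimal alignment of $U$ with $V$ into an alignment of $U$ with $R$.

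\emph{Lower inequality.} Since $R$ is a subsequence of $V$, we can transform $R$ into $V$ by inserting exactly the removed characters. This costs $\mu(V)-\mu(R)$, so $\WED_w(R,V)\le\mu(V)-\mu(R)$. Combining this with $\WED_w(U,R)$ via the triangle inequality of \cref{lem:weighted-string-metric} gives
\[
 \WED_w(U,V)\le\WED_w(U,R)+\mu(V)-\mu(R).
\]

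\emph{Upper inequality.} Fix an optimal alignment $\pi$ of $U$ with $V$, of cost $c=\WED_w(U,V)$, viewed as a set of deletions, insertions, and diagonal pairs. We modify $\pi$ into an alignment $\pi'$ of $U$ with $R$ as follows.
\begin{itemize}
\item Every removed character of $V$ that $\pi$ inserts is simply dropped from the alignment. This saves its gap mass.
\item Every removed character $V[j]$ that $\pi$ pairs diagonally with some $U[i]$ is handled by deleting $U[i]$ instead. The cost changes from $w(U[i],V[j])$ to $g(U[i])$.
\item All other edges stay unchanged. Order is preserved, so $\pi'$ is a valid alignment.
\end{itemize}

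By the triangle inequality, $g(U[i])\le w(U[i],V[j])+g(V[j])$. Now write $A$ for the total mass of removed characters inserted by $\pi$, and $B$ for the total mass of removed characters paired diagonally by $\pi$. Then
\[
 \operatorname{cost}(\pi')\le c-A+B.
\]
Adding $\mu(V)-\mu(R)=A+B$ gives
\[
 \WED_w(U,R)+\mu(V)-\mu(R)\le c+2B.
\]
Each diagonal pair consumes a distinct character of $U$, so at most $|U|$ removed characters are paired, each of gap mass at most $t$. Hence $B\le |U|t$, which yields the bound $c+2|U|t$.

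The only delicate step is the per-pair accounting: each replaced diagonal must be charged $2g(V[j])$ in total, namely once through the triangle inequality and once through re-insertion. This is exactly what produces the factor $2$ in the bound. Everything else is routine.
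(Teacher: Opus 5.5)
Your proof is correct and matches the paper's: the first inequality comes from restoring the removed characters (you spell out the triangle-inequality step), and the second from projecting an optimal alignment onto $R$, so that each of the at most $|U|$ affected diagonals $(a,b)$ becomes a deletion of $a$ plus a re-insertion of $b$, costing at most $w(a,b)+2g(b)\le w(a,b)+2t$. Your split into $-A+B$ for the projected alignment and $+(A+B)$ for the restoration term is only a regrouping of the paper's per-edge accounting.
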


\begin{proof}
Restoring the removed characters proves the first inequality.
For the second, project an optimal alignment onto \(R\)
and add \(\mu(V)-\mu(R)\).  A removed insertion keeps its cost; by the triangle inequality, a
removed diagonal \((a,b)\) now costs
\(g(a)+g(b)\le w(a,b)+2g(b)\le w(a,b)+2t\).
There are at most \(|U|\) diagonals, so the total cost increase is at most $2|U|t$.
\end{proof}

\subsection{Approximating one short-string comparison}

For a single comparison, we retain a bounded number of the heaviest
target characters.  Any alignment with a short source must insert
most of these characters.  Their insertion cost pays for the error
from discarding the remaining target characters.

\begin{lemma}[Short representations for direct comparisons]
\label{lem:short-string-approximation}
Every string \(Z\) admits \(O(|Z|+1)\)-time and space preprocessing
such that, given an integer \(W\ge1\), a fragment \(C=Z[p\dd q)\),
and \(0<\varepsilon\le1\), one can in \(O((W/\varepsilon)\log N)\) time construct a string \(R_C\)
of \(O(\min\{|C|,W/\varepsilon\})\) characters and a value \(\lambda_C\ge0\), with the following
properties for every string \(U\) of length \(|U|\le W\):
\begin{enumerate}
\item The value
\[
 Q(U,C):=\WED_w(U,R_C)+\lambda_C
\]
satisfies
\[
 \WED_w(U,C)\le Q(U,C)\le(1+\varepsilon)\WED_w(U,C).
\]
\item Given \(U\) and the constructed \(R_C,\lambda_C\), the value
\(Q(U,C)\) can be computed in \(O(W^2/\varepsilon)\) time.
\end{enumerate}
\end{lemma}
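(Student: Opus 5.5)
The plan is to take \(R_C\) to be the subsequence of the \(K:=\min\{|C|,\lceil 4W/\varepsilon\rceil+W\}\) characters of \(C\) with the largest gap masses, kept in their original order. We set \(\lambda_C:=\mu(C)-\mu(R_C)\). If \(|C|\le K\), then \(R_C=C\), \(\lambda_C=0\), and the comparison is exact. Otherwise, let \(t\) be the smallest gap mass among the retained characters, with ties broken by position so that exactly \(K\) characters are kept. Every omitted character then has gap mass at most \(t\), so \cref{obs:target-restoration} applies with this \(t\). It gives \(\WED_w(U,C)\le Q(U,C)\le \WED_w(U,C)+2|U|t\le \WED_w(U,C)+2Wt\).

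For the multiplicative bound, I would lower-bound \(\WED_w(U,C)\) by \((K-W)t\). Any alignment of \(U\) with \(C\) has at most \(|U|\le W\) diagonal edges. Hence at least \(K-W\) of the retained characters of \(C\) are inserted, each at cost at least \(t\). Since \(K-W\ge 4W/\varepsilon\), we get \(2Wt\le(\varepsilon/2)\WED_w(U,C)\), which gives \(Q(U,C)\le(1+\varepsilon)\WED_w(U,C)\). The length bound \(|R_C|=O(\min\{|C|,W/\varepsilon\})\) holds by construction.

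The remaining work is the data structure: after \(O(|Z|+1)\) preprocessing, we must report the \(K\) heaviest characters of an arbitrary fragment in order, in \(O(K\log N)\) time. I expect this to be the main obstacle. The plan is to build a Cartesian tree (a max-heap by gap mass, with ties broken by position) over \(Z\) in linear time, together with linear-time range-maximum queries. The \(K\) largest elements of \(Z[p\dd q)\) are then extracted by the standard heap-selection approach. Start from the range maximum of \([p\dd q)\) and keep a priority queue of subranges keyed by their maxima. Each time a maximum is popped, split its range into the two sides and push the range maxima of both. After \(K\) pops, the popped positions are the \(K\) heaviest, and \(t\) is the last popped value. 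The priority queue gives \(O(K\log K)\) time, and sorting the positions gives \(O(K\log K)\) as well, both within \(O((W/\varepsilon)\log N)\). We obtain \(\lambda_C\) from prefix masses of \(Z\) minus the sum of the retained masses.

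For the second property, \(Q(U,C)\) is computed by the standard weighted dynamic program on \(U\) and \(R_C\). This takes \(O(|U|\cdot|R_C|)=O(W^2/\varepsilon)\) time using constant-time metric queries, and we then add \(\lambda_C\).
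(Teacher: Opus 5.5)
Your proposal is correct and follows essentially the same route as the paper: retain the heaviest characters of \(C\) in their original order with \(\lambda_C\) as the omitted mass, bound the excess by \cref{obs:target-restoration}, pay for it with the at least \(K-W\) retained characters that every alignment must insert, and extract the retained characters by heap selection over range maxima before sorting them by position. The differences are cosmetic. The paper retains \(W+\lceil 2W/\varepsilon\rceil\) characters instead of your \(W+\lceil 4W/\varepsilon\rceil\). It uses a max segment tree with \(O(\log N)\)-time queries rather than a Cartesian tree with constant-time RMQ, which suffices because the logarithmic factor is already allowed. It also computes the capped count by binary search, since the real RAM has no ceiling operation.
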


The same representation works for all short source strings.  This
will let the center construction reuse each \(R_C\) as a pattern
in a batched distance oracle.

\begin{proof}
If \(C\) is empty, return \(R_C=\epsilon\) and \(\lambda_C=0\).
Assume henceforth that \(p<q\).

\emph{The representation and its approximation guarantee.}
Set \(s=W+\lceil2W/\varepsilon\rceil\).  Retain the \(s\)
heaviest positions of \(C\), or all its positions when \(|C|\le s\).
Let \(R_C\) consist of the retained characters in their original
order, and set \(\lambda_C=\mu(C)-\mu(R_C)\).
These choices depend only on \(C\), \(W\), and \(\varepsilon\).

Fix any \(U\) with \(|U|\le W\).
If \(|C|\le s\), the value \(Q(U,C)\) is exact.  Otherwise,
let \(t>0\) be the smallest retained gap mass and denote
\(d=\WED_w(U,C)\).  Every omitted character has mass at most
\(t\), so \cref{obs:target-restoration} gives
\(d\le Q(U,C)\le d+2Wt\).
Every alignment of \(U\) and \(C\) inserts at least \(s-W\) of the
retained target characters, each of mass at least \(t\).  Hence,
\(d\ge(s-W)t\), and
\[
 Q(U,C)\le\left(1+\tfrac{2W}{s-W}\right)d\le(1+\varepsilon)d.
\]

\emph{Constructing and evaluating the representation.}
Precompute prefix masses and a binary segment tree on the positions
of \(Z\).  Each tree node stores a position of maximum gap mass in
its interval, breaking ties by position.  This takes linear time
and space and supports a maximum query in any fragment in
\(O(\log N)\) time.

Compute the reporting count \(\min\{s,|C|\}\) by binary search
in \(O(\log N)\) time.
To report characters of \(C\) in nonincreasing order of mass, put
its interval in a maximum-priority queue, keyed by the mass of its
heaviest character.  Remove the interval with largest key, report
that character, and insert the two nonempty intervals on either side
of its position, obtaining their maxima from the tree.  The queued
intervals partition the unreported positions.  Thus, reporting the
\(s\) heaviest positions, or all positions if fewer exist, takes
\(O((s+1)\log N)\) time.

Sort the retained positions into their original order to obtain
\(R_C\), and compute \(\lambda_C\) from the prefix masses
and the retained characters.  There are at most \(|C|\)
retained positions, so sorting and priority-queue operations contribute
only a \(\log N\) factor, even when \(W/\varepsilon>|Z|\).
The total construction time is \(O((W/\varepsilon)\log N)\).
Given the representation and \(U\), the weighted dynamic-programming algorithm~\cite{WagnerFischer74} takes
\(O(Ws)=O(W^2/\varepsilon)\) time to compute \(Q(U,C)\).
\end{proof}

\subsection{Exact comparisons with one pattern}

When many fragments are compared against the same pattern, we use
multiple-source shortest paths in a planar graph.  This is an
application of Klein's data structure~\cite{Klein05}.

\begin{lemma}[Exact substring queries for one pattern]
\label{lem:one-row-mssp-scan}
Given strings \(P\) and \(Z\), one can preprocess them in
\(O((|P|+1)(|Z|+1)\log N)\) time and space so that, for every
fragment \(Z[p\dd q)\), the distance \(\WED_w(P,Z[p\dd q))\)
can be computed in \(O(\log N)\) time.
\end{lemma}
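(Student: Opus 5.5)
The plan is to view the edit grid of \(P\) and \(Z\) as a planar graph and apply Klein's multiple-source shortest-path (MSSP) data structure~\cite{Klein05}. Build the grid with vertex set \([0\dd |P|]\times[0\dd |Z|]\) and the usual deletion, insertion, and diagonal edges at their weighted costs. This graph is planar, has \(O((|P|+1)(|Z|+1))\) vertices and edges, and carries nonnegative weights. Its boundary vertices \((0,j)\) on the top row and \((|P|,j)\) on the bottom row all lie on the outer face. By the grid interpretation from \cref{sec:weighted-alignments}, \(\WED_w(P,Z[p\dd q))\) equals the distance from \((0,p)\) to \((|P|,q)\). For \(q<p\) no directed path exists, and for \(p\le q\) every path stays within columns \([p\dd q]\), so no spurious shortcuts arise.

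The key steps are as follows. First, construct the grid graph explicitly, querying the metric oracle for every diagonal and gap cost; this takes \(O((|P|+1)(|Z|+1))\) time. Second, handle the direction issue, since Klein's structure is usually stated for undirected graphs or for directed graphs with sources on the outer face. I would use the directed version, which is based on dynamic trees over the dual and works for planar digraphs with nonnegative lengths and all sources on a single face. The sources are the top-row vertices \((0,j)\), all on the outer face.

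Third, run MSSP as the source moves along the outer face. The standard output is a persistent representation of the shortest-path trees rooted at each successive source. It is built in \(O(n\log n)\) time and space for \(n\) vertices, and it answers a distance query from any face source to any vertex in \(O(\log n)\) time. With \(n=O((|P|+1)(|Z|+1))\le O(N^2)\) we get \(\log n=O(\log N)\), which gives the stated bounds. A query for \(Z[p\dd q)\) returns \(\operatorname{dist}((0,p),(|P|,q))\).

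I expect the main obstacle to be the precise conditions of Klein's theorem, namely directedness, zero-cost edges, and ties. The cleanest fix would be to perturb the weights symbolically to obtain unique shortest paths, or to rely on the known directed variant (for instance, Cabello–Chambers–Erickson). The remaining point is to check that the persistent structure meets the \(O(n\log n)\) space bound: each pivot is recorded persistently at \(O(\log n)\) cost, and the total number of pivots is \(O(n)\). If directed MSSP turned out to be unavailable in this generality, I would fall back on the observation that the grid is acyclic and monotone. In that case I would compute distances from all top-boundary sources by an SMAWK/Monge-based boundary-to-boundary structure, although storing the full \(|Z|^2\) table would break the space bound. So the plan commits to Klein's MSSP.
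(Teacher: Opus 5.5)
Your route is the paper's: build the directed edit grid of \(P\) against \(Z\), note that the sources \((0,p)\) and targets \((|P|,q)\) lie on the outer face and that monotonicity confines paths to columns \([p\dd q]\), and apply Klein's multiple-source shortest-path structure to a graph with \((|P|+1)(|Z|+1)\le(N+1)^2\) vertices. Your handling of ties, by symbolic perturbation or via Cabello--Chambers--Erickson, is an acceptable substitute for the paper's appeal to the Klein--Mozes implementation, which processes equal-length shortest paths without perturbing lengths.

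\textbf{The gap: a hypothesis of Klein's theorem is not checked.} You list directedness, zero-cost edges, and ties as the conditions to verify, but you omit reachability. Directed MSSP maintains, for the current root on the face, a shortest-path tree spanning \emph{all} vertices, together with the interdigitating dual cotree. It pivots edges as the root moves along the face, so it needs every vertex to be reachable from every source; the standard formulations, including the Klein--Mozes one the paper cites, assume strong connectivity. The edit grid is acyclic and violates this: from \((0,p)\), no vertex \((i,j)\) with \(j<p\) is reachable, so the tree rooted there does not span the graph. Your statement that the directed version ``works for planar digraphs with nonnegative lengths and all sources on a single face'' drops this condition. You also cannot simply treat unreachable vertices as being at distance \(+\infty\): that value is not available in the real-RAM model, and the slacks of darts between unreachable vertices would be undefined.

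\textbf{The fix the paper uses.} Add the reverse of every grid edge with a large finite cost \(H:=1+\mu(P)+\mu(Z)\).
\begin{itemize}
\item Each added edge is antiparallel to an existing one, so planarity and the outer face are preserved, and the graph becomes strongly connected.
\item Every queried distance is at most \(\mu(P)+\mu(Z[p\dd q))<H\), by deleting \(P\) and inserting the fragment.
\item Any path that uses an added edge costs at least \(H\), since lengths are nonnegative. Hence no shortest path between queried pairs uses the new edges, and the answers remain exactly \(\WED_w(P,Z[p\dd q))\).
\end{itemize}
With this augmentation, the rest of your argument goes through with the stated \(O((|P|+1)(|Z|+1)\log N)\) preprocessing and \(O(\log N)\) query bounds.
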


\begin{proof}
Build the directed edit grid of \(P\) against \(Z\).  It has
\(v\coloneqq (|P|+1) (|Z|+1)\le(N+1)^2\) vertices and edges of non-negative lengths.
The query asks for the shortest-path distance from \((0,p)\) to
\((|P|,q)\), both on the outer face.  Directed monotonicity confines
every such path to the columns of the requested fragment, so this
graph distance is exactly \(\WED_w(P,Z[p\dd q))\).

Add the reverse of every grid edge with cost
\(H:=1+\mu(P)+\mu(Z)\) so that every vertex can then reach every
other vertex.  Each requested distance is less than \(H\), so
none of the underlying shortest paths uses any added edge.

Klein's structure~\cite{Klein05} preprocesses a planar graph with
\(v\) vertices in \(O(v\log N)\) time and space and answers
distances to a vertex on a designated face in logarithmic time.
Apply it to this grid and its outer face.
We use the implementation described by Klein and Mozes
\cite[Sections~7.7--7.9]{KleinMozesDraft}, which handles equal-length
shortest paths without perturbing edge lengths.
\end{proof}

\subsection{Batched gap queries}

The sparse search needs only an oracle for a gap problem: accept
windows sufficiently close to a source string and reject windows beyond a
larger threshold.  Between these thresholds, answers may be arbitrary,
even across repeated queries or different search intervals.
Accepted windows will later receive direct
\(Q\)-certificates from \cref{lem:short-string-approximation}.

For a string \(T\) and a bounded real interval \(I\), a fragment
\(V=T[p\dd q)\) starts in \(I\) when \(\mu_T(p)\in I\).
As usual, \(|I|\) denotes the interval's length.

\begin{lemma}[Batched gap oracle]
\label{lem:batched-radius-test}
Every string \(T\) admits \(\Ohtilde(|T|+1)\)-time and space
preprocessing after which, given an integer \(W\ge1\), a string
\(U\) of length \(|U|\le W\), an accuracy \(0<\zeta\le1/16\),
a radius \(r>0\), and a bounded real interval \(I\), one can
construct an oracle with the following guarantees:
\begin{enumerate}
\item The oracle construction takes
\[
 \Ohtilde\left(
 \min\!\left\{W(|T|+1),\,\tfrac{W^2}{\zeta}\!\left(1+\tfrac{|I|}{r}\right)\right\}
 \right)
\]
time and space.
\item For every fragment \(V=T[p\dd q)\) with \(\mu_T(p)\in I\),
the oracle answers in \(O(\log N)\) time, accepting if
\(\WED_w(U,V)<(1-2\zeta)r\) and rejecting if
\(\WED_w(U,V)\ge r\).
\end{enumerate}
\end{lemma}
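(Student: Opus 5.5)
Set the threshold \(t:=\zeta r/W\) and work with the filtered target \(T^{>t}\), keeping each retained character's original position. For a query \(V=T[p\dd q)\), write \(V^{>t}\) for its filtered subsequence. The oracle evaluates
\[
 \widehat Q(U,V):=\WED_w(U,V^{>t})+\mu(V)-\mu(V^{>t})
\]
and accepts exactly when \(\widehat Q(U,V)<(1-\zeta)r\).

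\emph{Correctness.} By \cref{obs:target-restoration} with threshold \(t\),
\[
 \WED_w(U,V)\le\widehat Q(U,V)\le\WED_w(U,V)+2|U|t\le\WED_w(U,V)+2\zeta r.
\]
If \(\WED_w(U,V)<(1-2\zeta)r\), this gives \(\widehat Q<r\). If \(\WED_w(U,V)\ge r\), it gives \(\widehat Q\ge r\). The acceptance cutoff is therefore \(r\) (I wrote \((1-\zeta)r\) above; I take \(r\)). The term \(\mu(V)-\mu(V^{>t})\) is a difference of two prefix sums: one over all masses of \(T\), one over masses of characters exceeding \(t\). The first is precomputed; the second needs \(t\)-dependent prefix sums.

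\emph{Preprocessing.} We need threshold-dependent structures for an arbitrary real \(t\), without rounding. Sort the positions of \(T\) by gap mass. Build a persistent or merge-sort segment tree over positions, keyed by mass rank. In \(\Ohtilde(|T|+1)\) time this supports two queries for any \(t\), after a binary search locates \(t\) among the sorted masses:
\begin{itemize}
 \item the total mass of characters of mass greater than \(t\) within a position range;
 \item reporting the positions of mass greater than \(t\) in a range, in original order, in \(\Ohtilde(1)\) time per reported position.
\end{itemize}
Mapping a start \(\mu_T(p)\in I\) to the corresponding range of boundaries is a binary search on prefix masses.

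\emph{Oracle construction.}
\begin{enumerate}
 \item Let \(p_{\min}\) be the least boundary with \(\mu_T(p)\in I\). Every accepted window has \(\widehat Q<r\). Any alignment of \(U\) with \(V^{>t}\) inserts all but at most \(W\) of its characters, each of mass greater than \(t\). So an accepted window satisfies \(|V^{>t}|<W+r/t=W(1+1/\zeta)\).
 \item The possible starts span total mass at most \(|I|\). Hence they cover at most \(|I|/t+1=W|I|/(\zeta r)+1\) filtered characters.
 \item Report the filtered substring \(Z\) of \(T^{>t}\) that begins at the first retained character at or after \(p_{\min}\). Its length is \(O\bigl((W/\zeta)(1+|I|/r)\bigr)\), further truncated to \(|T|\).
 \item Apply \cref{lem:one-row-mssp-scan} to \(U\) and \(Z\). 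This costs \(\Ohtilde\bigl(W|Z|\bigr)\), which matches the stated bound \(\Ohtilde\bigl(\min\{W(|T|+1),(W^2/\zeta)(1+|I|/r)\}\bigr)\).
\end{enumerate}

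\emph{Queries.} Given \(V=T[p\dd q)\), locate by binary search (precomputed rank arrays) the filtered indices corresponding to \(p\) and \(q\). If the filtered fragment extends beyond \(Z\), then \(V^{>t}\) is too long and we reject. This is safe, since then \(\widehat Q\ge r\), so \(\WED_w\ge r-2\zeta r\)… To make rejection sound, the rejection must only be forced when \(\WED_w\ge(1-2\zeta)r\). Indeed \(\WED_w(U,V)\ge\WED_w(U,V^{>t})\)-type bounds are not available directly. Instead use the insertion argument on \(V\) itself: \(V\) contains more than \(W+r/t\) characters of mass greater than \(t\), and all but \(W\) must be inserted, so \(\WED_w(U,V)>r\), and rejection is valid. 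Otherwise, answer with one Klein query plus the mass-difference query, in \(O(\log N)\) time.

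\emph{Main obstacle.} The main obstacle is making the threshold-dependent rank and mass queries take \(O(\log N)\) time per query rather than \(O(\log^2 N)\). I would use fractional cascading, or simply precompute, during oracle construction, the filtered index of every boundary in the relevant range. That precomputation is charged to the construction cost.
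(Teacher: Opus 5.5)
Your proposal is correct and is essentially the paper's proof. It uses the same threshold $t=\zeta r/W$, the same restoration bound from \cref{obs:target-restoration} with cutoff $r$, the same rejection of windows with about $W+W/\zeta$ or more retained characters, the same filtered segment (the retained characters of the start range plus that many more), and the same Klein oracle via \cref{lem:one-row-mssp-scan}. The only difference is bookkeeping, and there you overbuild: the paper uses a plain maximum-segment tree to report the next character heavier than $t$, stores the positions and prefix masses of the reported string, and after two binary searches reads off both $\mu(V^{>t})$ and the overflow test ($y-x\ge K$ with $K=W+\lceil W/\zeta\rceil$). Hence your persistent rank-keyed tree and the $O(\log^2 N)$ obstacle are unnecessary, and precomputing a filtered index for every boundary of the start range (one of your proposed fixes) could exceed the $\Ohtilde((W^2/\zeta)(1+|I|/r))$ construction bound.
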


The implementation filters light target characters, as in
Kuszmaul's metric edit-distance approximation~\cite[Definition~6.3]{Kuszmaul19}.
The lightness threshold is determined by the requested radius, but a single
data structure supports every threshold.
An interval $I$ containing all target prefix masses gives an oracle for
all fragments of \(T\).  The implementation chooses between preprocessing the
whole filtered target and only the segment needed for \(I\).

\begin{proof}
\emph{Solving the gap problem.}
Put \(t=\zeta r/W\).  For each target fragment \(V=T[p\dd q)\), let
\(L_t(V):=\mu(V)-\mu(V^{>t})\) be the mass removed by filtering,
and define
\[
 \delta_r(U,V):=\WED_w(U,V^{>t})+L_t(V).
\]
The oracle accepts precisely when \(\delta_r(U,V)<r\).
\Cref{obs:target-restoration} gives
\[
 \WED_w(U,V)\le\delta_r(U,V)
 \le\WED_w(U,V)+2\zeta r,
\]
which proves the two gap guarantees.

Put \(K:=W+\lceil W/\zeta\rceil\).
We can reject any fragment containing at least \(K\) retained characters.
Indeed, every alignment of \(U\) and \(V^{>t}\) must then insert at
least \(W/\zeta\) target characters.  Each character has mass greater
than \(t\), so \(\delta_r(U,V)>r\).

\emph{Choosing the filtered target.}
We construct a string \(R\) for exact comparisons with \(U\).
One choice is \(R=T^{>t}\).  For the restricted choice, let
\(a\) and \(b\) be the first and last boundaries of \(T\) whose prefix
masses lie in \(I\).  If there are no such boundaries, no fragment
qualifies, and we return an oracle that always rejects.
Retain every character of mass greater than \(t\) in \(T[a\dd b)\),
followed by the first \(K\) such characters at or after \(b\),
or all of them if fewer exist.  Their concatenation is \(R\).
Since \(\mu(T[a\dd b))\le |I|\), the first part contains
at most \(|I|/t\) characters.  Hence,
\[
 |R|\le |I|/t+K=O(W/\zeta+|I|/t).
\]

\emph{Constructing the oracle.}
For the shared preprocessing, store the prefix masses of \(T\)
and build the maximum tree
described in the proof of \cref{lem:short-string-approximation}.
This takes \(O(|T|+1)\) time and space.
The tree finds the next character of mass greater than \(t\)
at or after any given position in \(O(\log N)\) time:
search from left to right, skipping nodes whose intervals precede
the position or whose maximum mass is at most \(t\), and return
the first qualifying leaf.
Thus, reporting the chosen \(R\) takes
\(O((|R|+1)\log N)\) time.
For the restricted choice, binary searches in the original prefix
masses first locate \(a\) and \(b\).
Binary search also computes the capped cutoff \(\min\{K,|T|+1\}\)
in \(O(\log N)\) time; using this cap leaves both reporting and
rejection unchanged.

Store the original positions of the characters of \(R\) in sorted
order and compute its prefix masses.  Apply
\cref{lem:one-row-mssp-scan} to \(U\) and \(R\).
For the whole filtered target, reporting and preprocessing take
\(\Ohtilde(W(|T|+1))\) time and space.  For the restricted
choice, they take
\[
 \Ohtilde\left(
 W\left(\tfrac{W}{\zeta}+\tfrac{|I|}{t}\right)\right)
 =\Ohtilde\left(\tfrac{W^2}{\zeta}\left(1+\tfrac{|I|}{r}\right)\right).
\]
Choose between the two constructions according to the smaller bound.

\emph{Answering queries.}
For \(V=T[p\dd q)\), binary searches in the stored positions find
boundaries \(x\) and \(y\) such that \(R[x\dd y)\) consists of the
stored characters with original positions in \([p\dd q)\).
If \(y-x\ge K\), reject the query.  Otherwise,
\[
 R[x\dd y)=V^{>t}.
\]
This is immediate when \(R=T^{>t}\).  For the restricted
choice, \(a\le p\le b\).  If \(V\) contained a retained
character omitted from \(R\), it would also contain all \(K\)
stored characters at or after \(b\), contradicting \(y-x<K\).

Query the exact distance from $U$ to \(R[x\dd y)\) and add
\[
 L_t(V)=\mu(T[p\dd q))-\mu(R[x\dd y)).
\]
The resulting value is \(\delta_r(U,V)\), so accepting whenever it
is less than \(r\) gives the required answer.
The binary searches and the exact-distance query each take
\(O(\log N)\) time, and the prefix-mass lookups take constant time.
Both choices use strings of length at most \(N\), so their edit
grids have \(O(N^2)\) vertices and the query bound is uniform over
\(\zeta\).
\end{proof}

\section{Approximating a family of row comparisons}
\label{sec:row-comparisons}
\Cref{sec:comparison-oracles} developed certified primitives for individual and batched
row--window comparisons.  We now use these primitives to construct
certified shortcuts without directly computing every row--window
distance.  The resulting graph approximates the ideal shortcut graph
as stated in \cref{thm:row-comparisons}.


Let \(S,T\) be nonempty strings with \(m=|S|+|T|\le N\).
Fix an integer \(1\le W\le m\)
and any partition
\(0=x_0<\cdots<x_R=|S|\) into rows
\(U_i=S[x_i\dd x_{i+1})\) of at most \(W\) characters, and any
family \(\mathcal C\) of target windows \(T[p\dd q)\), where
\(0\le p\le q\le|T|\).
Write \(M=|\mathcal C|\).  Distinct fragments remain distinct
windows even when they spell the same string.

Let \(G^*\) be the ideal shortcut graph of this partition and window
family: it contains all gap edges in
\([0\dd|S|]\times[0\dd|T|]\) and each shortcut
\((x_i,p)\to(x_{i+1},q)\), for \(V=T[p\dd q)\in\mathcal C\),
at cost \(\WED_w(U_i,V)\).

\begin{restatable}[Approximating row comparisons]{theorem}{rowcomparisons}
\label{thm:row-comparisons}
There is a randomized algorithm that constructs certified shortcuts whose graph
\(G\) includes all gap edges, with the following guarantees:
\begin{enumerate}
\item With high probability, simultaneously for every \(0\le p\le q\le|T|\),
\[
 \operatorname{dist}_G((0,p),(|S|,q))
 \le(3+O(\zeta))\operatorname{dist}_{G^*}((0,p),(|S|,q)).
\]
\item The construction takes, on every outcome, time
\[
 \Ohtilde\left(
 \frac{W^{3/2}\sqrt{mRM}}{\zeta^{5/2}}
 +\frac{mW}{\zeta}+\frac{RW^2}{\zeta^4}+RM\right).
\]
\end{enumerate}
\end{restatable}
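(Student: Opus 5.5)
The plan is to assemble the algorithm from the two procedures sketched in the overview: the sampled-center procedure, which produces radii and fallback labels, and the sparse tree search, which produces direct labels. The analysis then splits into a certification part, a running-time part, and a path-approximation part.

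\textbf{Sampled centers.} Fix a parameter \(k=\Theta(\min\{M,\sqrt{mM/(RW)}\cdot\zeta^{-c}\})\) for a suitable power \(c\). Sample \(\Ohtilde(M/k)\) centers uniformly from \(\mathcal C\), and add the empty string as a virtual center. For each center \(C\), build \(R_C,\lambda_C\) by \cref{lem:short-string-approximation} with accuracy \(\zeta\), and compute \(Q(U,C)\) for every row \(U\); this costs \(\Ohtilde(mW M/(k\zeta))\) in total. Set \(r(U)=Q(U,C_U)/(1+\zeta)\), where \(C_U\) minimizes \(Q(U,\cdot)\). For every center actually chosen by some row, build the oracle of \cref{lem:one-row-mssp-scan} with pattern \(R_{C_U}\) and text \(T\). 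This gives \(\WED_w(R_{C_U},V)\) for all \(V\in\mathcal C\), and hence the fallback label
\[
 z(U,V)=\WED_w(U,R_{C_U})+\WED_w(R_{C_U},V).
\]
Certification of \(z(U,V)\) follows from \cref{lem:weighted-string-metric}. The symmetry computation from the overview gives \(z\le \WED_w(U,V)+2(1+\zeta)r(U)\), using \(\WED_w(U,R_{C_U})\le Q(U,C_U)\). A standard hitting-set argument shows that, with high probability, every row has at most \(k\) windows with \(\WED_w(U,V)<r(U)\).

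\textbf{Sparse search.} Build a balanced binary tree over the rows. At each node, sample \(\Ohtilde(1/\zeta)\) rows with probability proportional to \(r(U)\). At the root, use a full-target oracle from \cref{lem:batched-radius-test}. At a child node, search intervals of width \(O(L t\,r(U)/\zeta)\) around the predictions made by the parent's anchors, shifted by the difference in source prefix mass, and merge overlapping intervals. At the leaves, replace every accepted pair by its direct value \(Q(U,V)\). The running time then sums as in the overview:
\begin{itemize}
\item center comparisons cost \(mWM/k\);
\item search preprocessing costs \(RkW^2\), since the radius cancels in the oracle bound;
\item root preprocessing costs \(mW\);
\item queries and labels cost \(RM\).
\end{itemize}
Balancing \(k\) produces the \(W^{3/2}\sqrt{mRM}\) term, and the \(RW^2\) term covers the case \(k=M\). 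To make the time bound hold on every outcome, abort and keep only gap edges once a prescribed work cap is exceeded. The cap is never hit on the sparsity event, so on that event every output remains certified.

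\textbf{Approximation (the main obstacle).} Fix \(p,q\) and a shortest path \(\pi^*\) in \(G^*\) of cost \(c\). Expand each of its shortcuts into an optimal alignment and slice at row boundaries, giving a row cost \(c_i\) for each row. Call a row good if two conditions hold:
\begin{itemize}
\item \(c_i<(1-2\zeta)r(U_i)\), and
\item every tree block containing \(U_i\) has path cost at most \(L t\,r(U_i)/\zeta\), where \(t\) is the block's row count.
\end{itemize}
Bad rows then have total radius at most \(c/(1-2\zeta)+\zeta c\).

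Next, show by induction down the tree that useful anchors propagate. If a node samples a good row \(U\), then \cref{lem:potential} applied to the path segment between the parent anchor's row and \(U\) keeps \(U\)'s window start within the searched interval. The window is accepted because \(c_i<(1-2\zeta)r(U)\). Consequently every leaf reached through good samples obtains the direct label \(Q\le(1+\zeta)c_i\) for its window on \(\pi^*\).

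For the first nodes that sample no good row, a Chernoff argument shows that, with high probability, bad rows carry at least a \((1-\zeta)\) fraction of the radius in each such block. The rows in these blocks use fallback labels, so their extra cost is at most \(2(1+\zeta)r(U)\) per row. Summed over the disjoint blocks, this extra cost is \((2+O(\zeta))c\).

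The delicate points are the following:
\begin{itemize}
\item the set of good rows must be defined before sampling, so that the probabilistic argument applies;
\item a row whose shortcut on \(\pi^*\) is a gap-only segment must be handled, by keeping its gap edges;
\item the union bound must cover all \(O(N^2)\) endpoint pairs \((p,q)\), each with one fixed ideal path.
\end{itemize}
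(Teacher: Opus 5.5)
Your proposal is correct and follows essentially the same route as the paper. The paper also obtains radii and fallback labels from uniformly sampled centers plus the virtual empty center (\cref{lem:sampled-centers}), with $k=\min\{M,\lceil\sqrt{\zeta^3mM/(RW)}\rceil\}$, then runs the radius-weighted sparse tree search with anchor propagation (\cref{thm:sparse-completion}), adds an operation cap, and takes a union bound over fixed shortest paths for all endpoint pairs.

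There are two small fixes. First, slice the path of $G^*$ at shortcut starts \emph{before} expanding the shortcuts, so that each expansion, including its leading insertions, lies wholly in its own row subpath and $\WED_w(U_i,V_i)\le c_i$ actually holds. Second, in the time bound the case $k=M$ is absorbed by the $mW/\zeta$ term, while $RW^2/\zeta^4$ pays for rounding $k$ up to an integer $\ge 1$.
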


We first sample target windows to obtain coarse labels and row radii,
then improve selected labels by a sparse search.

\subsection{Radii and fallback labels}
\label{sec:target-centers}
In this subsection, we assign to each row--window pair a \emph{fallback label}, which is a certified upper bound on its distance obtained using sampled centers.  This label provides a valid shortcut cost even when the pair is not compared directly.  The next lemma bounds its error in terms of a row radius and ensures that, with high probability, only a few windows lie closer than that radius.  The sampling argument is adapted from Andoni~\cite{Andoni20}, while the use of the resulting labels as certified shortcut costs follows~\cite{CDGKS20}.

\begin{lemma}[Sampled radii and fallback labels]
\label{lem:sampled-centers}
Given an integer \(1\le k\le M\), there exists a randomized algorithm that constructs
a radius \(r(U)\) for every row \(U\) and a label \(z(U,V)\) for every
row \(U\) and window \(V\in\mathcal C\), with the following guarantees:
\begin{enumerate}
\item On every outcome, \(0\le r(U)\le\mu(U)\) for every row \(U\),
and, for every row \(U\) and window \(V\in\mathcal C\),
\[
 \WED_w(U,V)\le z(U,V)
 \le\WED_w(U,V)+2(1+\zeta)r(U).
\]
\item With high probability, simultaneously for every row \(U\),
\[
 \bigl|\{V\in\mathcal C:\WED_w(U,V)<r(U)\}\bigr|\le k.
\]
\item The running time on every outcome is
\[
 \Ohtilde\left(
 \frac{mWM}{\zeta k}+RM\right).
\]
\end{enumerate}
\end{lemma}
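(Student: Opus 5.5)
We follow the sampled-center procedure sketched in \cref{sec:overview-centers}. Sample a multiset \(\mathcal Z\subseteq\mathcal C\) of \(\Theta((M/k)\log N)\) windows independently and uniformly at random, and add the empty string as a virtual center. Run the preprocessing of \cref{lem:short-string-approximation} on \(T\) once, and for each sampled center \(C\) construct \(R_C,\lambda_C\) with accuracy \(\zeta\). For every row \(U\) and center \(C\), compute \(Q(U,C)\); for the virtual center, \(Q(U,\epsilon)=\mu(U)\) exactly. Let \(C_U\) minimize \(Q(U,\cdot)\), set \(r(U):=Q(U,C_U)/(1+\zeta)\), and define
\[
 z(U,V):=\WED_w(U,R_{C_U})+\WED_w(R_{C_U},V).
\]
If \(C_U\) is the virtual center, this label is \(\mu(U)+\mu(V)\) with \(R_{C_U}=\epsilon\).

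\textbf{Deterministic guarantees.} The virtual center gives \(r(U)\le\mu(U)/(1+\zeta)\le\mu(U)\), and \(r(U)\ge0\) holds trivially. The lower bound \(z(U,V)\ge\WED_w(U,V)\) follows from \cref{lem:weighted-string-metric}. For the upper bound, apply the triangle inequality through \(U\):
\[
 z(U,V)\le 2\WED_w(U,R_{C_U})+\WED_w(U,V).
\]
Since \(\lambda_{C_U}\ge0\), we have \(\WED_w(U,R_{C_U})\le Q(U,C_U)=(1+\zeta)r(U)\). Therefore
\[
 \WED_w(U,V)\le z(U,V)\le\WED_w(U,V)+2(1+\zeta)r(U).
\]

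\textbf{Probabilistic guarantee.} Fix a row \(U\). Order the windows of \(\mathcal C\) by \(\WED_w(U,\cdot)\), breaking ties arbitrarily, and let \(\mathcal K_U\) be the first \(k\) windows in this order. A single sample misses \(\mathcal K_U\) with probability \(1-k/M\). With \(c'(M/k)\ln N\) samples, all of them miss with probability at most \(N^{-c'}\). A union bound over the \(R\le N\) rows shows that, with high probability, every \(U\) has a sampled \(C\in\mathcal K_U\). Then
\[
 r(U)\le \frac{Q(U,C)}{1+\zeta}\le\WED_w(U,C)
\]
by \cref{lem:short-string-approximation}. Every window \(V\) with \(\WED_w(U,V)<r(U)\) is therefore strictly closer than \(C\), so it lies in \(\mathcal K_U\setminus\{C\}\). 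Hence there are at most \(k\) such windows.

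\textbf{Running time.} Each \(R_C\) costs \(\Ohtilde(W/\zeta)\) to build. Computing \(Q(U,C)\) by dynamic programming against \(R_C\), which has length \(O(W/\zeta)\), takes \(O(|U|W/\zeta)\) time. Summing over the rows gives \(O(|S|W/\zeta)\) per center, and \(\Ohtilde(mWM/(\zeta k))\) over all centers. To produce the labels, observe that \(\WED_w(U,R_{C_U})\) is already available as \(Q-\lambda\). For each distinct center \(C\) that is chosen by some row, apply \cref{lem:one-row-mssp-scan} with pattern \(R_C\) and text \(T\). This costs \(\Ohtilde(|R_C|\,|T|)=\Ohtilde(mW/\zeta)\) per center, which is within the first term because there are at most \(|\mathcal Z|\) centers. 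Each query \(\WED_w(R_C,V)\) then takes \(O(\log N)\) time, and these answers are shared by all rows choosing \(C\). Writing out all \(RM\) labels costs \(\Ohtilde(RM)\). The \(M\) window distances per center can be cached, so the label computation is \(\Ohtilde(RM+|\mathcal Z|M)\) time, and \(|\mathcal Z|M\) is dominated by the first term since \(M\le m\) up to logarithmic factors.

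\textbf{Main obstacle.} The delicate step is accounting for the label computation. Pattern \(R_C\) is compared against whole windows \(V\), which may be long. Klein's oracle handles this uniformly, but only if it is built once per center rather than once per row. Building it per row would cost \(R\cdot mW/\zeta\), which exceeds the budget.
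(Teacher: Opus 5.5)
Your construction, the label bound via the triangle inequality through $U$, the rank argument for guarantee~2, and the idea of building one Klein oracle per chosen center all match the paper's proof. The problems are both in the running-time part. First, you charge the label queries as $\Ohtilde(RM+|\mathcal Z|M)$ and then dismiss $|\mathcal Z|M=\Ohtilde(M^2/k)$ by claiming $M\le m$ up to logarithmic factors. That claim is false. The lemma holds for an arbitrary window family $\mathcal C$, where $M$ can be $\Theta(m^2)$, and then $M^2/k$ is not bounded by $mWM/(\zeta k)+RM$. Even for the family of \cref{lem:shared-target-windows} one only has $M=\Ohtilde(m/\zeta)$. The fix, which is what the paper uses, is that each row selects exactly one center, so at most $R$ centers are marked. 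Querying all $M$ windows for each marked center then costs $\Ohtilde(RM)$. The Klein preprocessing remains charged to the $\Ohtilde(M/k)$ sampled centers at $\Ohtilde(mW/\zeta)$ each.

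Second, guarantee~3 is claimed on every outcome, but in the paper's model randomness comes as unbiased bits. An exactly uniform draw from $M$ windows has no worst-case bound on the number of bits used when $M$ is not a power of two, so your sampling step as written has unbounded worst-case time. You need a truncated sampler. The paper runs Han--Hoshi's interval algorithm capped at $O(\log(N/\zeta))$ bits per draw. If any draw hits the cap, it discards all samples and keeps only the virtual empty center. Guarantee~1 survives this fallback, because it uses only the virtual center and the definition of the labels. The probability of hitting the cap is inverse-polynomial, and on the complementary event the truncated run coincides with the ideal one. This probability is therefore absorbed into the union bound for guarantee~2.
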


\begin{proof}
\emph{Constructing the radii.}
Sample
\[
 s=\left\lceil C_{\rm ctr}\frac{M}{k}\right\rceil
          \left\lceil\log_2(N/\zeta)\right\rceil
\]
independent uniform windows from \(\mathcal C\) with replacement.
Here \(C_{\rm ctr}\) is a positive integer constant chosen for the desired
failure probability.  Let \(\mathcal S\) be the multiset of these
samples, retaining repeated draws as separate entries.  Add to
\(\mathcal S\) the empty string as one deterministic virtual center,
with no fixed target location.
Implement the uniform draws with Han and Hoshi's interval
algorithm~\cite[Section~IV]{HanHoshi97}, using unbiased input bits.
The uncapped draw is exactly uniform and needs more than \(b\) bits
with probability at most \(M2^{-b}\). 

To obtain a worst-case running-time bound, stop the sampling procedure
if any draw uses more than
\(C_{\rm uni}\lceil\log_2(N/\zeta)\rceil\) random bits.  Since
\(M\le N^2\) and the number of draws is polynomial in \(N/\zeta\),
choosing \(C_{\rm uni}\) sufficiently large makes the probability that
any draw is stopped at most \((N/\zeta)^{-c}\), for any desired
constant \(c>0\).  The cap bounds the time per draw by a
polylogarithmic factor.  If a draw is stopped, discard all sampled
centers and use only the virtual empty center.  Even in this event,
the construction below returns valid radii and labels.


Preprocess \(T\) for the direct-comparison data structure of
\cref{lem:short-string-approximation}.  For each sampled center
\(C\), construct its short representation \(R_C\) and
nonnegative restoration cost \(\lambda_C\), with accuracy
\(\zeta\).  For the virtual center use
\(R_{\epsilon}=\epsilon\) and \(\lambda_{\epsilon}=0\).
For every row \(i\), compute
\[
 Q(U_i,C)=\WED_w(U_i,R_C)+\lambda_C
 \qquad(C\in\mathcal S).
\]
Choose a minimizing center \(C_i\), breaking ties deterministically,
and set
\[
 r(U_i):=\frac{Q(U_i,C_i)}{1+\zeta}.
\]
The virtual center gives \(0\le r(U_i)\le\mu(U_i)\).
Moreover, for every sampled center \(C\), the approximation guarantee
implies
\[
 r(U_i)\le\frac{Q(U_i,C)}{1+\zeta}\le\WED_w(U_i,C).
\]

\emph{Bounding the number of closer windows.}
Fix a row \(U_i\) and order the \(M\) windows by their true distances
from that row, breaking ties deterministically.
For the uncapped uniform draws, the probability that all samples
miss the first \(k\) windows
is at most
\[
 (1-k/M)^s\le\exp(-sk/M)
 \le(N/\zeta)^{-C_{\rm ctr}}.
\]
On the complementary event, some sampled window \(C\) has rank at most \(k\).
Every window of true distance less than \(r(U_i)\) precedes this
\(C\), since \(r(U_i)\le\WED_w(U_i,C)\).  There are therefore
fewer than \(k\) such windows.  A union bound over the \(R\le N\)
rows proves the simultaneous guarantee for ideal uniform draws.
The bounded implementation agrees with this execution unless a draw
reports failure.  The sampler makes that event arbitrarily unlikely
with an inverse-polynomial bound, so a further union bound proves
the stated guarantee for the implemented algorithm.

\emph{Constructing the labels.}
For every row and window, return
\[
 z(U_i,V):=\WED_w(U_i,R_{C_i})+\WED_w(R_{C_i},V).
\]
The intermediate string \(R_{C_i}\) need not occur in either
input; the shortcut endpoints remain the row and window
boundaries.  The string triangle inequality certifies each label.
Symmetry and the triangle inequality also give
\[
 \begin{split}
 z(U_i,V)
 &\le\WED_w(U_i,V)+2\WED_w(U_i,R_{C_i})\\
 &\le\WED_w(U_i,V)+2Q(U_i,C_i)\\
 &=\WED_w(U_i,V)+2(1+\zeta)r(U_i),
 \end{split}
\]
where the second inequality uses \(\lambda_{C_i}\ge0\).
This is Andoni's center argument~\cite[Section~2.2]{Andoni20},
applied to the short representative.

To obtain the second term of every label, mark the sampled centers
chosen by at least one row.  For each marked \(C\), preprocess
the edit grid of \(R_C\) against \(T\) using
\cref{lem:one-row-mssp-scan}.  For a window \(V=T[p\dd q)\), query
from \((0,p)\) to \((|R_C|,q)\) in this grid.
The answer is exactly \(\WED_w(R_C,V)\).  Query every window
once for this representative and reuse the answers for all assigned
rows.  The first term is already known as
\(Q(U_i,C_i)-\lambda_{C_i}\).  For the virtual center, both
terms follow directly from prefix masses.

\emph{Running time.}
There are \(\Ohtilde(M/k)\) centers,
and their representations have \(O(W/\zeta)\) characters.
For one center, the dynamic programs for all rows take
\[
 O\left(\sum_i |U_i|\,W/\zeta\right)
 =O(|S|W/\zeta).
\]
This uses the actual row lengths; no lower bound on them is needed.
Constructing the representation and its full-target oracle fits
within \(\Ohtilde(mW/\zeta)\) time per center.
At most \(R\) centers are marked, so their window queries cost
\(\Ohtilde(RM)\) in total.  Emitting the labels has
the same bound.  The initial target preprocessing and radius choices
are absorbed in these costs, which proves the running-time claim.
\end{proof}

\subsection{Sparse completion}
\label{sec:sparse-hierarchy}

The fallback labels constructed in \cref{sec:target-centers} provide certified shortcut costs for all row--window pairs, but their additive error is proportional to the corresponding row radius.  When the distance of a pair is much smaller than this radius, its fallback label may therefore be too large.  In this subsection, we identify selected close pairs and assign them certified \((1+\zeta)\)-approximate labels without directly comparing every row with every window.  At each node of a binary tree, the search samples from the rows represented by that node, giving rows of larger radius a greater probability of being selected, and propagates accepted comparisons to its children.  For a fixed path in \(G^*\), the analysis charges the error of the remaining fallback labels to the cost of that path.

The radius-weighted sampling follows Andoni~\cite[Sections~2.1--2.2]{Andoni20} and is the weighted counterpart of the \textsc{SparseStripExtensionSampling} procedure of~\cite{CDGKS20}.  In the weighted setting, target ranges are measured by prefix mass, and the mass length of the range searched for a row is proportional to its radius.  Hence the ratio of the range length to the radius in the running-time bound of \cref{lem:batched-radius-test} is independent of the radius.

Let \(G_0\) be obtained from \(G^*\) by replacing the exact cost of each row--window shortcut by its supplied label \(z(U,V)\).


\begin{restatable}[Completing sparse comparisons]{lemma}{sparsecompletion}
\label{thm:sparse-completion}
Suppose \(0\le r(U)\le\mu(U)\) for every row \(U\), and
the supplied labels satisfy, for every row \(U\) and window \(V\in\mathcal C\),
\[
 \WED_w(U,V)\le z(U,V)
 \le\WED_w(U,V)+2(1+\zeta)r(U).
\]
There exists a randomized algorithm that adds shortcuts to \(G_0\), obtaining
a graph \(G\) with the following guarantees:
\begin{enumerate}
\item Every added shortcut is certified on every outcome.
\item With high probability, simultaneously for every \(0\le p\le q\le|T|\),
\[
 \operatorname{dist}_G((0,p),(|S|,q))
 \le(3+O(\zeta))\operatorname{dist}_{G^*}((0,p),(|S|,q)).
\]
\item If \(\bigl|\{V\in\mathcal C:\WED_w(U,V)<r(U)\}\bigr|\le k\)
for every row \(U\), where \(k\ge1\), the running time on every outcome is
\[
 \Ohtilde\left(\frac{RkW^2}{\zeta^4}+\frac{mW}{\zeta}+RM\right).
\]
\end{enumerate}
\end{restatable}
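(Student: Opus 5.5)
We follow the sparse search sketched in the technical overview. Build a balanced binary tree over the rows \(U_0,\dots,U_{R-1}\) with \(L=O(\log N)\) levels. At each node \(\nu\) with block \(B_\nu\) of \(t_\nu\) rows, draw \(\Theta(\log(N/\zeta)/\zeta)\) rows independently, each with probability proportional to \(r(U)\). Rows of radius zero are skipped; their labels are exact, and if all radii in a block vanish we sample nothing.

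At the root, build one full-target gap oracle (\cref{lem:batched-radius-test} with \(I\) covering all prefix masses) per sampled row \(U\) with radius \(r(U)\), and test every window. At a non-root node, take every anchor \((U',V')\) stored at the parent, with \(V'=T[p'\dd q')\). Predict the start of the child row's window by shifting \(\mu_T(p')\) by the source prefix-mass difference between the two row starts. Build the interval \(I\) of half-width \(\Theta(L t\,r(U)/\zeta)\) around this prediction, where \(t\) is the parent's row count. Merge the overlapping intervals for the queried row, build one restricted oracle per merged interval, and test the windows starting there. Accepted pairs become anchors of the node. At a leaf, every accepted pair \((U,V)\) additionally receives the shortcut of label \(Q(U,V)\) from \cref{lem:short-string-approximation} with accuracy \(\zeta\); this label is certified on every outcome, which proves the first claim. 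We also impose a work cap equal to the bound of the third claim. If the cap is exceeded, we abort and add nothing, so \(G=G_0\) and certification is unaffected.

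\emph{Running time.} Assume the sparsity hypothesis. Every accepted window has distance below \(r(U)\), so each sampled row yields at most \(k\) anchors. A restricted oracle for a child row costs \(\Ohtilde(W^2\zeta^{-1}(1+|I|/r(U)))=\Ohtilde(W^2 t/\zeta^{2})\) per parent anchor, since the radius cancels. The parent's anchors come from \(\Ohtilde(1/\zeta)\) rows, each with at most \(k\) anchors. The parent blocks at each level are disjoint, so summing over levels gives \(\Ohtilde(RkW^2/\zeta^{4})\) after accounting for the sampling factors. The root costs \(\Ohtilde(mW/\zeta)\) plus \(\Ohtilde(M)\) queries per sampled row. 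Merging intervals ensures each row--window pair is queried \(O(1)\) times per level, which gives \(\Ohtilde(RM)\) queries in total. The leaf \(Q\)-computations cost \(\Ohtilde(kW^2/\zeta)\) per leaf, which is absorbed.

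\emph{Approximation.} Fix endpoints \(p\le q\) and a shortest path \(\pi^*\) in \(G^*\) of cost \(c\). Expand its shortcuts into optimal alignments and slice at the row boundaries, obtaining row costs \(c_U\). Call \(U\) \emph{good} if two conditions hold:
\begin{itemize}
\item \(\pi^*\) crosses \(U\) by a single shortcut with \(\WED_w<(1-2\zeta)r(U)\);
\item every block \(B\ni U\) has sliced cost at most \(\zeta |B| r(U)/L\).
\end{itemize}
The second condition is the mirror of the search width. Bad rows of the first kind have \(r(U)\le c_U/(1-2\zeta)\). For the second kind, a blockwise averaging argument over the \(L\) levels should bound the total radius by \(O(\zeta)\) times the block costs, hence by \(O(\zeta c)\).

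We then prove a propagation claim by induction down the tree. If a node samples a good row \(U\) and its parent stored an anchor for \(\pi^*\)'s window of a good row \(U'\), then \(\pi^*\)'s window for \(U\) starts inside the searched interval. The displacement is at most the cost of \(\pi^*\) between the two rows by \cref{lem:potential}, plus the start displacement on the two windows, and both are within the parent block. At the root, full search always finds it. Goodness makes the oracle accept, so the leaf gets label \(Q\le(1+\zeta)c_U\).

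For the first node on each root-to-leaf path that samples no good row, the chosen sample size gives, with high probability, bad-radius mass at least \((1-\zeta)\) of its block's radius. Otherwise some sample would be good, since the good set is fixed before sampling. These blocks are disjoint. So the rows that fall back to \(z\) have total radius \(\le(1+O(\zeta))c\), and their extra cost is \(2(1+\zeta)\) times that. Hence \(\operatorname{dist}_G\le(3+O(\zeta))c\). A union bound over the \(O(N^2)\) endpoint pairs, together with the fact that the cap is inactive on the sparsity event, finishes the proof.

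\emph{Main obstacle.} I expect the hardest part to be the propagation step. We must make sure that prefix-mass displacement between parent and child anchors is controlled despite rounding at window starts. We must also fix the search width so that the bad-radius charge stays \(O(\zeta c)\) while the radius still cancels in the time bound.
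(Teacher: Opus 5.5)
Your scheme is the paper's: radius-weighted row sampling on the interval tree, a full search at the root, and child searches centred at parent anchors shifted by the source prefix-mass difference with half-width $\Theta(Lt\,r(U)/\zeta)$. It also matches in the direct $Q$-labels at leaves and the charging at the first node on each root-to-leaf path that samples no good row. The step you left as ``should bound'' fails, however, with the threshold you wrote. You call $U$ good only if every block $B\ni U$ has sliced cost $C_B\le\zeta|B|\,r(U)/L$. A row violating this satisfies only $r(U)<LC_B/(\zeta|B|)$. So the at most $|B|$ rows charged to $B$ have total radius below $LC_B/\zeta$, and with $\sum_B C_B\le Lc$ you get $O(L^2c/\zeta)$, not $O(\zeta c)$.

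The condition must coincide with the search half-width: $C_B\le L|B|\,r(U)/\zeta$, which is the paper's choice. Then a violating row has $r(U)<\zeta C_B/(L|B|)$, so the rows charged to $B$ contribute at most $\zeta C_B/L$, and $\sum_B C_B\le Lc$ gives $\zeta c$. Propagation from a parent $u$ to a sampled good row $i$ at its child needs exactly $C_u\le Lt_u r(U_i)/\zeta$. Also, $|I|/r(U_i)=O(Lt_u/\zeta)$ keeps the radius cancellation in the time bound. This resolves the tension you flagged as the main obstacle.

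Three smaller corrections. First, there is no extra ``start displacement'' to control. A path in $G^*$ passes through the start vertices $(x_h,p_h)$ and $(x_i,p_i)$ of its shortcuts, so \cref{lem:potential} applied to the subpath between them bounds the prediction error by $C_u$ directly, because that subpath lies in slices of rows of the parent block. Rounding of window starts belongs to \cref{lem:short-pieces-shared-windows} and plays no role here.

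Second, slice before expanding shortcuts, so each row's slice contains its whole shortcut. This is what justifies $r(U)\le c_U/(1-2\zeta)$ for bad rows of the first kind.

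Third, drop the global work cap. The algorithm is not given $k$, and item~2 must hold with high probability even when the sparsity hypothesis fails, whereas your cap could then abort and return $G_0$. No cap is needed, because sample sizes are fixed and every accepted pair has distance below $r(U)$. Hence the time bound holds on every outcome under the hypothesis. The only unbounded-randomness part is drawing rows proportionally to real radii from unbiased bits. It is handled by truncating each draw at $O(\log(N/\zeta))$ bits and returning empty sample sets on truncation, which costs only negligible failure probability. The paper's operation cap lives in \cref{thm:row-comparisons}, where sparsity holds with high probability.
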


The sparsity bound \(k\) is used only in the running-time analysis.
For the approximation guarantee, we fix a path in \(G^*\) and show
that, with high probability, the total radius of the rows whose
shortcut on that path does not receive a direct label is at most
\((1+O(\zeta))\) times the cost of the path.


\subsubsection{Sampling rows}
\label{sec:sparse-sampling}

Let \(\mathcal T\) be the binary interval tree on the \(R\)
rows obtained by recursively splitting each nonsingleton interval
into two consecutive intervals whose sizes differ by at most one.
For a node \(v\), let \(\mathcal R(v)\) be its row indices
and let \(t_v:=|\mathcal R(v)|\).  Set
\(L:=1+\lceil\log_2R\rceil\), so the tree has at most
\(L\) levels.  Rows of radius zero
need no improved labels, since their fallback labels are exact.

\begin{lemma}[Sampling by radius]
\label{lem:sparse-row-sampling}
There exists a randomized algorithm that constructs, for each node \(v\), a set
\(\mathcal S_v\subseteq\mathcal R(v)\) with the following guarantees:
\begin{enumerate}
\item On every outcome, each \(\mathcal S_v\) has size
\(\Ohtilde(1/\zeta)\) and contains only rows with positive radius.
\item The total running time is \(\Ohtilde(R/\zeta)\) on every outcome.
\item Fix any set \(\mathcal I\subseteq[0\dd R)\) before sampling.
Let \(\mathcal B\) consist of the rows whose root-to-leaf path
contains a node \(v\) with
\(\mathcal S_v\cap\mathcal I=\varnothing\).
With high probability,
\[
 \sum_{i\in\mathcal B}r(U_i)
 \le(1+O(\zeta))\sum_{i\notin\mathcal I}r(U_i).
\]
\end{enumerate}
\end{lemma}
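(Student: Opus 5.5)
At each node \(v\), I will draw \(s=\lceil C\zeta^{-1}\log_2(N/\zeta)\rceil\) independent samples from \(\mathcal R(v)\), choosing row \(i\) with probability \(r(U_i)/\sum_{j\in\mathcal R(v)}r(U_j)\), and let \(\mathcal S_v\) be the set of sampled rows. If every row of \(\mathcal R(v)\) has radius zero, I set \(\mathcal S_v=\varnothing\). Only positive-radius rows can then be sampled, and \(|\mathcal S_v|\le s=\Ohtilde(1/\zeta)\), which gives the first guarantee.

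To sample with real weights on the real RAM and still bound the time on every outcome, I will do the following:
\begin{enumerate}
\item Precompute prefix sums of the radii once.
\item For each draw, generate a uniform real by a capped bit-by-bit interval refinement in the style of Han and Hoshi, comparing dyadic subintervals against the prefix sums by binary search.
\item Abort with empty samples if more than \(C'\log(N/\zeta)\) bits are needed.
\end{enumerate}
The cap fails with inverse-polynomial probability, and this failure is absorbed into the failure probability of the third guarantee. A node receives \(\Ohtilde(1/\zeta)\) draws, each costing polylogarithmic time. Summing \(\Ohtilde(1/\zeta)\) over the \(O(R)\) nodes gives total time \(\Ohtilde(R/\zeta)\).

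For the third guarantee, call a node \(v\) \emph{bad} if \(\mathcal S_v\cap\mathcal I=\varnothing\). Let \(\mathcal F\) be the set of bad nodes that have no bad ancestor. Their row sets are pairwise disjoint, and \(\mathcal B=\bigcup_{v\in\mathcal F}\mathcal R(v)\). Call a fixed node \(v\) \emph{heavy} if
\[
\sum_{i\in\mathcal R(v)\cap\mathcal I}r(U_i)>\zeta\sum_{i\in\mathcal R(v)}r(U_i).
\]
A heavy node is bad only if all \(s\) draws miss \(\mathcal I\). This happens with probability at most \((1-\zeta)^s\le(N/\zeta)^{-C}\). A union bound over the \(O(R)\) heavy nodes shows that, with high probability, no heavy node is bad. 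Since \(\mathcal I\) is fixed before sampling, the heaviness of each node is deterministic, and the union bound is legitimate.

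On this event, every \(v\in\mathcal F\) is light, so
\[
\sum_{i\in\mathcal R(v)}r(U_i)\le\frac{1}{1-\zeta}\sum_{i\in\mathcal R(v)\setminus\mathcal I}r(U_i).
\]
If a node of \(\mathcal F\) has total radius zero, the bound is trivial. Summing over the disjoint sets \(\mathcal R(v)\) for \(v\in\mathcal F\) gives
\[
\sum_{i\in\mathcal B}r(U_i)\le(1+O(\zeta))\sum_{i\notin\mathcal I}r(U_i).
\]

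The main obstacle is ensuring that the random set \(\mathcal F\) does not break the union bound. I handle this by bounding the bad event over all fixed heavy nodes at once, and only afterwards using the disjointness of \(\mathcal F\). The remaining care is in the exact weighted sampling on the real RAM, which needs a worst-case time cap.
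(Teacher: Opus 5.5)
Your proposal is correct and follows the paper's proof essentially step for step. Both use radius-proportional sampling of \(\Theta(\zeta^{-1}\log(N/\zeta))\) rows per node, a union bound over the fixed nodes whose \(\mathcal I\)-radius is a \(\zeta\) fraction of their total, and the disjoint first-bad nodes \(\mathcal F\) partitioning \(\mathcal B\). Both also use a capped Han--Hoshi sampler whose truncation is absorbed into the failure probability; the strict versus non-strict heaviness threshold and global prefix sums versus per-node cumulative weights are immaterial differences.
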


\begin{proof}
\emph{Exact sampling.}
For each node \(v\), let
\(\sigma_v:=\sum_{i\in\mathcal R(v)}r(U_i)\).
At each node with \(\sigma_v>0\), independently draw
\[
 s:=\left\lceil C\zeta^{-1}\right\rceil
       \left\lceil\log_2(N/\zeta)\right\rceil
\]
rows with replacement, where the positive integer \(C\) is chosen for the desired
failure probability.  Each draw uses the distribution
\[
 \Pr[\text{a draw selects }i]=\frac{r(U_i)}{\sigma_v}
 \qquad(i\in\mathcal R(v)).
\]
Let \(\mathcal S_v\) contain the distinct sampled rows; at a
node with \(\sigma_v=0\), let \(\mathcal S_v=\varnothing\).
For a fixed \(\mathcal I\) and a node \(v\) of positive total
radius, suppose its rows in \(\mathcal I\) have total radius at
least \(\zeta\sigma_v\).  The probability that all draws at
\(v\) miss \(\mathcal I\) is at most
\[
 (1-\zeta)^s\le(N/\zeta)^{-C}.
\]
A union bound over the \(O(R)\) nodes shows that, with high
probability, every such node samples a row from \(\mathcal I\).

\emph{Bounding missed paths.}
On this event, let \(\mathcal F\) be the first nodes with
\(\mathcal S_v\cap\mathcal I=\varnothing\) on root-to-leaf
paths.  Their row sets are disjoint and partition \(\mathcal B\).
For every \(v\in\mathcal F\), the rows outside \(\mathcal I\)
therefore have total radius at least \((1-\zeta)\sigma_v\);
this also holds when \(\sigma_v=0\).  Consequently,
\[
 (1-\zeta)\sum_{i\in\mathcal B}r(U_i)
 =(1-\zeta)\sum_{v\in\mathcal F}\sigma_v
 \le\sum_{i\in\mathcal B\setminus\mathcal I}r(U_i)
 \le\sum_{i\notin\mathcal I}r(U_i).
\]
Dividing by \(1-\zeta\) gives the claimed radius bound for the
exact draws, since \(0<\zeta\le1/16\).

\emph{Bounded implementation.}
Implement each draw with Han and Hoshi's interval
algorithm~\cite[Section~IV]{HanHoshi97}, using unbiased input bits
and the positive radii as weights.  Store cumulative weights without
normalizing them: each input bit halves the current interval inside
\([0,\sigma_v)\), and binary search determines whether this interval
lies entirely within one row's weight interval.
At node \(v\), an exact draw
needs more than \(b\) bits with probability at most \(t_v2^{-b}\).
Cap each draw at \(C_{\rm bit}\lceil\log_2(N/\zeta)\rceil\)
bits.  If any draw exceeds its cap, return empty sets at all nodes.
For a sufficiently large positive integer constant \(C_{\rm bit}\), all
\(O(Rs)\) draws finish within their caps with high probability.
On this event their outputs agree with the exact draws using the
same random bits, so the corresponding sets \(\mathcal B\) also
agree.  A further union bound proves the stated probability guarantee.

Building the tree and its cumulative weights takes
\(O(R\log(R+2))\) time.  Each bounded draw takes
\(O(\log(R+2)\log(N/\zeta))\) time, independently of the
numerical range of the radii.  Removing duplicates fits within
\(\Ohtilde(Rs)=\Ohtilde(R/\zeta)\) time, proving the bound
on every outcome.
\end{proof}

\subsubsection{Propagating anchors}

Fix the sampled-row sets \(\mathcal S_v\) constructed in the previous
subsection.  For a node \(v\), an \emph{anchor} is a triple
\((i,p,q)\), where \(i\in\mathcal S_v\) and \(T[p\dd q)\in\mathcal C\),
representing that the gap oracle accepted the comparison between
\(U_i\) and \(T[p\dd q)\).  Anchors stored at a node determine the
target ranges searched at its children.  At a leaf, every anchor
identifies a row--window pair that receives a direct certified label.
Let \(\mathcal A_v\) denote the set of anchors stored at \(v\).


\begin{lemma}[Propagating sampled comparisons]
\label{lem:sparse-propagation}
Given sets
\(\mathcal S_v\subseteq\{i\in\mathcal R(v):r(U_i)>0\}\)
of size \(\Ohtilde(1/\zeta)\), a deterministic algorithm
constructs anchor sets \(\mathcal A_v\) and adds shortcuts to
\(G_0\), obtaining a graph \(G\), with the following guarantees:
\begin{enumerate}
\item For every node \(v\), row \(i\in\mathcal S_v\), and
      window \(T[p\dd q)\in\mathcal C\) with
      \(\WED_w(U_i,T[p\dd q))<(1-2\zeta)r(U_i)\),
      the anchor \((i,p,q)\) belongs to \(\mathcal A_v\)
      if either \(v\) is the root or \(v\) has a parent \(u\)
      with an anchor \((h,p',q')\in\mathcal A_u\) satisfying
      \begin{equation}
      \label{eq:sparse-propagation-condition}
       \left|\mu_T(p)-(\mu_T(p')+\mu_S(x_i)-\mu_S(x_h))\right|
       \le\frac{Lt_ur(U_i)}{\zeta}.
      \end{equation}
\item Every added shortcut is certified.  At each leaf for
      row \(i\), every anchor \((i,p,q)\) yields the corresponding
      shortcut with a certified \((1+\zeta)\)-approximate label.
\end{enumerate}
If at most \(k\) windows have distance below \(r(U_i)\) from each
row \(i\), where \(k\ge1\), the running time is
\[
 \Ohtilde\left(\frac{RkW^2}{\zeta^4}+\frac{mW}{\zeta}+RM\right).
\]
\end{lemma}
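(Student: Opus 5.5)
The plan is to process the tree $\mathcal T$ top-down, level by level, and build every anchor set with the batched gap oracle of \cref{lem:batched-radius-test}. Using $\zeta$ as the accuracy and radius $r(U_i)$, each sampled row $i\in\mathcal S_v$ will query a set of windows and store as anchors exactly the accepted pairs. After preprocessing $T$ once in $\Ohtilde(|T|+1)$ time, the construction proceeds as follows.

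\emph{Root.} For every $i\in\mathcal S_{\text{root}}$, build the oracle over an interval $I$ that contains all prefix masses of $T$, taking the full-target choice at cost $\Ohtilde(mW)$. Then query all $M$ windows.

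\emph{Non-root node $v$ with parent $u$.} For each $i\in\mathcal S_v$ and each parent anchor $(h,p',q')\in\mathcal A_u$, form the real interval
\[
 \mu_T(p')+\mu_S(x_i)-\mu_S(x_h)\pm Lt_ur(U_i)/\zeta .
\]
Merge the overlapping intervals for row $i$. For each merged interval $I$, build an oracle and query every window of $\mathcal C$ whose start mass lies in $I$. To locate these windows, keep $\mathcal C$ sorted by start mass and use binary search. Because the intervals are merged, each pair $(i,V)$ is queried at most once per node.

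\emph{Leaves.} At a leaf, every anchor $(i,p,q)$ receives the shortcut label $Q(U_i,T[p\dd q))$ from \cref{lem:short-string-approximation} with accuracy $\zeta$. This label is certified and $(1+\zeta)$-approximate.

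The correctness claim in guarantee~1 then follows directly. A window with $\WED_w<(1-2\zeta)r(U_i)$ whose start satisfies \eqref{eq:sparse-propagation-condition} has its start mass inside one of the searched intervals, so it is queried, and the oracle must accept it. The certification claim in guarantee~2 is just item~1 of \cref{lem:short-string-approximation}.

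For the running time, assume the sparsity hypothesis. Every accepted window has distance $<r(U_i)$, so $|\mathcal A_u|\le k\,|\mathcal S_u|=\Ohtilde(k/\zeta)$. For a child row $i$, the merged intervals have total length at most $|\mathcal A_u|\cdot 2Lt_ur(U_i)/\zeta$. The second branch of the oracle bound charges each interval $\frac{W^2}{\zeta}(1+|I|/r)$. Summed over the merged intervals, this is at most $\frac{W^2}{\zeta}\bigl(|\mathcal A_u|+|\mathcal A_u|\cdot 2Lt_u/\zeta\bigr)=\Ohtilde(kt_uW^2/\zeta^3)$, and the radius cancels. Multiplying by the $\Ohtilde(1/\zeta)$ sampled rows at $v$ gives $\Ohtilde(kt_uW^2/\zeta^4)$ per child. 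Each parent has at most two children, and at each level the sizes $t_u$ sum to at most $R$, so over the $L$ levels the total is $\Ohtilde(RkW^2/\zeta^4)$.

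The remaining costs are as follows.
\begin{itemize}
\item The root oracles cost $\Ohtilde(mW/\zeta)$, counting $\Ohtilde(1/\zeta)$ rows at $\Ohtilde(mW)$ each.
\item Queries cost $O(\log N)$ per pair. Each row lies in one node per level, so the total is $\Ohtilde(RM)$ plus binary-search overhead.
\item The leaf labels cost $O(W^2/\zeta)$ each. There are at most $k$ per leaf row, giving $\Ohtilde(RkW^2/\zeta)$, which is absorbed.
\end{itemize}

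The main obstacle is the interval accounting. There are two points to get right. First, one row's merged intervals must never be charged more than the sum of the unmerged ones, including the additive "$1$" term per interval, which the count $|\mathcal A_u|$ pays for. Second, the query count must stay at $\Ohtilde(RM)$, rather than picking up an extra factor of $|\mathcal A_u|$. Merging handles both.

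Two edge cases also need care. First, in the restricted oracle the prefix-mass interval may contain no boundary. Second, the window enumeration per interval has to be bounded by the number of queries it produces.
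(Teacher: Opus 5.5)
Your proposal is correct and follows the paper's proof essentially step for step. The common structure is: full-target oracles at the root; merged per-row prediction intervals at each child, with the windows in each interval located by binary search over the family sorted by start mass; direct \(Q\)-certificates at the leaves; and the same accounting, in which the radius cancels and the spans \(t_u\) sum to at most \(2R\) per level. The only cost you leave implicit is forming and merging the \(\Ohtilde(k/\zeta^2)\) intervals per node; this totals \(\Ohtilde(Rk/\zeta^2)\) and is absorbed by the preprocessing bound, as the paper notes.
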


\begin{proof}
\emph{Constructing the anchors.}
All tests for row \(i\) use the gap oracle of
\cref{lem:batched-radius-test} with radius \(r(U_i)\).
Process the tree from its root to its leaves as follows.
\begin{enumerate}
\item At the root, for each sampled row build a gap oracle with
      \(I=[0,\mu(T)]\) and test every window in \(\mathcal C\).
      Store the accepted pairs as anchors.
\item At a child node \(v\) with parent \(u\), for each
      \(i\in\mathcal S_v\) and \((h,p',q')\in\mathcal A_u\),
      form the interval of target starting masses
      \[
       \begin{multlined}
       \biggl[\mu_T(p')+\mu_S(x_i)-\mu_S(x_h)-\frac{Lt_ur(U_i)}{\zeta},\\
       \mu_T(p')+\mu_S(x_i)-\mu_S(x_h)+\frac{Lt_ur(U_i)}{\zeta}\biggr].
       \end{multlined}
      \]
      Its center shifts the anchor's starting mass by the difference
      between the source prefix masses of the two rows.
      For each sampled row, take the union of these intervals and
      merge overlapping intervals into disjoint components.  Build
      a gap oracle for each component and test every window starting
      there.  Store every accepted pair as an anchor.
\item At each leaf for row \(i\), compute the direct
      certificate \(Q(U_i,T[p\dd q))\) with accuracy \(\zeta\)
      for every anchor \((i,p,q)\), using
      \cref{lem:short-string-approximation}.
      Add the corresponding shortcut with this label to \(G_0\).
\end{enumerate}
A comparison below \((1-2\zeta)r(U_i)\) is accepted whenever
queried.  The root searches all windows, and
\cref{eq:sparse-propagation-condition} places the window in a
searched interval at a child.  This proves the first guarantee.
The direct-comparison lemma gives the second.

Sort \(\mathcal C\) once by starting mass, breaking ties by boundary
indices.  Binary search locates the contiguous range of windows
starting in each component, which we enumerate directly.  Since
the components are disjoint, each row--window pair is tested at
most once per node.  A row can be sampled only at nodes on its
root-to-leaf path, so each pair is tested at most \(L\) times in
total.  No consistency between answers in the gap between the two
thresholds is needed.

\emph{Running time.}
Every accepted query has distance below \(r(U_i)\), so under the
degree assumption at most \(k\) windows per row can be accepted.
There are \(\Ohtilde(1/\zeta)\) sampled rows
per node, and hence \(\Ohtilde(k/\zeta)\) anchors per node.
For a fixed sampled child row \(i\), each parent anchor contributes
an interval of length \(2Lt_ur(U_i)/\zeta\).  Their merged union
has \(\Ohtilde(k/\zeta)\) components and total length
\(\Ohtilde(kt_ur(U_i)/\zeta^2)\).
By \cref{lem:batched-radius-test}, constructing the oracles for
these components takes
\[
 \Ohtilde\left(
 \frac{W^2}{\zeta}
 \left(\frac{k}{\zeta}+\frac{kt_u}{\zeta^2}\right)\right)
 =\Ohtilde(kt_uW^2/\zeta^3).
\]
There are \(\Ohtilde(1/\zeta)\) sampled rows in each child.
At each depth, the parent intervals are disjoint, so their spans
\(t_u\), counted once per child, sum to at most \(2R\).
Thus all child preprocessing costs
\(\Ohtilde(RkW^2/\zeta^4)\).
Forming and merging the intervals costs
\(\Ohtilde(Rk/\zeta^2)\) in total.  Each sampled row produces
at most \(k\) anchors per node, so storing all anchors costs
\(\Ohtilde(Rk/\zeta)\).  Both costs fit in the preprocessing bound.

Each root oracle uses the full-target construction, which costs
\(\Ohtilde(mW)\).  The root's \(\Ohtilde(1/\zeta)\)
sampled rows therefore cost \(\Ohtilde(mW/\zeta)\) in total.
Shared preprocessing for the comparison oracles costs
\(\Ohtilde(m)\).  There are at most \(RML\) tests, each taking
logarithmic time, so sorting and testing the windows cost
\(\Ohtilde(RM)\).  Finally, at most \(Rk\) leaf pairs
receive direct certificates, in \(\Ohtilde(RkW^2/\zeta)\) time.
Traversing the tree and reading its sample sets also fit in the
stated bound.
\end{proof}

\subsubsection{Replacing a fixed path in \texorpdfstring{\(G^*\)}{G*}}

Let \(G\) be the graph obtained by applying
\cref{lem:sparse-propagation} to the sample sets from
\cref{lem:sparse-row-sampling}.

\begin{lemma}[Replacing a fixed path in \(G^*\)]
\label{lem:sparse-ideal-path}
Fix \(0\le p\le q\le|T|\) and a path of cost \(d\) in
\(G^*\) from \((0,p)\) to \((|S|,q)\), independently of the
samples.  With high probability, \(G\) contains a path with the
same endpoints and cost at most \((3+O(\zeta))d\).
\end{lemma}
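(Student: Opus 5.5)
Fix the path \(\pi^*\) in \(G^*\) of cost \(d\) before sampling. First normalize it: for each row \(i\), the portion of \(\pi^*\) between source boundaries \(x_i\) and \(x_{i+1}\) consists of gap edges and possibly shortcuts. A shortcut spans exactly one row, so each row portion is a run of insertions at column \(x_i\), then either a single shortcut for a window \(V_i=T[p_i\dd q_i)\) or deletion edges, then further insertions. Where a row portion uses no shortcut, keep it unchanged in \(G\), since all gap edges remain. Let \(c_i\) be the cost of the row-\(i\) portion. Expanding every shortcut into an optimal alignment gives an edit-grid path of cost \(d\). This path satisfies \cref{lem:potential}, so for any block of rows its target prefix-mass drift relative to the source drift is at most the cost of the block's portion.

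Define the good set \(\mathcal I\) deterministically from \(\pi^*\). Row \(i\) is good if it uses a shortcut with \(\WED_w(U_i,V_i)<(1-2\zeta)r(U_i)\) and \(r(U_i)>0\), and if for every tree node \(u\) containing \(i\) the cost \(d_u\) of \(\pi^*\) restricted to the rows of \(u\) satisfies \(d_u\le Lt_ur(U_i)/\zeta\). Bound the total radius of non-good rows as follows. A row failing the first condition either has no shortcut, in which case its label is irrelevant, or satisfies \(r(U_i)\le c_i/(1-2\zeta)\). I will charge the radius of a row using no shortcut to nothing, treating it as outside the accounting; more precisely, I include such rows in \(\mathcal I\) trivially so they never hurt. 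For the second condition, a violation at node \(u\) gives \(r(U_i)<\zeta d_u/(Lt_u)\). Summing over the \(t_u\) rows of \(u\) gives at most \(\zeta d_u/L\), and summing over the at most \(L\) levels gives at most \(\zeta d\). Hence \(\sum_{i\notin\mathcal I}r(U_i)\le c/(1-2\zeta)+\zeta d\le(1+O(\zeta))d\), where \(c\) is the shortcut cost.

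Apply \cref{lem:sparse-row-sampling} with this \(\mathcal I\). With high probability, the rows \(\mathcal B\) that miss a good sample somewhere on their root-to-leaf path have total radius \((1+O(\zeta))d\). For a row \(i\notin\mathcal B\) with a shortcut, I claim by induction down its root-to-leaf path that the anchor for \((j,p_j,q_j)\) lies in \(\mathcal A_v\) for some good sampled \(j\in\mathcal S_v\) at each node \(v\). At the root this holds by \cref{lem:sparse-propagation}. For the step from parent \(u\) with good anchor \(h\) to child \(v\) with good sampled \(j\), the path segment from \((x_h,p_h)\) to \((x_j,p_j)\) stays within the rows of \(u\). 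By \cref{lem:potential} its drift \(|\mu_T(p_j)-\mu_T(p_h)-(\mu_S(x_j)-\mu_S(x_h))|\) is at most \(d_u\le Lt_ur(U_j)/\zeta\), so \eqref{eq:sparse-propagation-condition} holds. This uses that \(j\) is good at \(u\). The drift bound needs care when \(h\) comes after \(j\): the segment is then traversed in reverse, which is fine since \cref{lem:potential} bounds the absolute value. I expect this step to be the main technical obstacle, in particular making the start boundaries \(p_h,p_j\) of shortcut edges lie on the expanded path. At the leaf, the row receives a shortcut with label at most \((1+\zeta)c_i\).

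Finally, assemble the path in \(G\). Keep all gap edges. Use the direct label for rows outside \(\mathcal B\), and the fallback label \(z\le c_i+2(1+\zeta)r(U_i)\) for rows in \(\mathcal B\) with shortcuts. The total is at most \((1+\zeta)d+2(1+\zeta)(1+O(\zeta))d=(3+O(\zeta))d\).
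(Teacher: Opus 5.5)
Your overall plan is the paper's. You fix the good set \(\mathcal I\) before sampling from the row costs and block costs \(d_u\), charge block-condition violations \(\zeta d_u/L\) per node, and invoke \cref{lem:sparse-row-sampling}. You then propagate anchors with \cref{lem:potential} and assemble with direct and fallback labels. However, your handling of rows whose portion of \(\pi^*\) uses no shortcut breaks the propagation.

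\textbf{The gap.} You put these rows into \(\mathcal I\) ``trivially.'' But \cref{lem:sparse-row-sampling} only guarantees that every node on the root-to-leaf path of a row outside \(\mathcal B\) samples \emph{some} row of \(\mathcal I\). A shortcut-free row can have positive radius and be the only \(\mathcal I\)-row sampled at a node. It has no window on \(\pi^*\), hence no anchor. Your inductive claim that each node stores an anchor for a \emph{good} sampled row is then unsupported, and a shortcut row \(i\notin\mathcal B\) below that node need not receive a direct label. Conversely, you cannot leave these rows outside \(\mathcal I\) and ``charge them to nothing.'' Their radii enter \(\sum_{i\notin\mathcal I}r(U_i)\), and that sum is what bounds \(\sum_{i\in\mathcal B}r(U_i)\).

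\textbf{The fix.} Use the hypothesis \(r(U)\le\mu(U)\) of \cref{thm:sparse-completion}. A row portion without a shortcut consumes every character of \(U_i\) by a deletion edge, so \(c_i\ge\mu(U_i)\ge r(U_i)\). These rows can therefore be excluded from \(\mathcal I\) and charged to their own cost, exactly like shortcut rows failing your first condition. This gives \(\sum_{i\notin\mathcal I}r(U_i)\le d/(1-2\zeta)+\zeta d\), with every row of \(\mathcal I\) having a shortcut. The paper builds this in by defining the first condition as \(c_i<(1-2\zeta)r(U_i)\), which no shortcut-free row can satisfy. Your variant using \(\WED_w(U_i,V_i)\) in place of \(c_i\) is otherwise harmless.

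With this fix, the rest of your argument goes through. The step you flagged as the main obstacle is not one. The shortcut starts \((x_h,p_h)\) and \((x_j,p_j)\) are vertices of \(\pi^*\) and remain vertices after expanding shortcuts into optimal alignments. The subpath between them, in either order, lies within the contiguous rows of \(u\) and costs at most \(d_u\).
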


\begin{proof}
\emph{Decomposing the path.}
A path in \(G^*\) uses at most one shortcut per row.  If it
uses a shortcut in row \(i\), write \(V_i=T[p_i\dd q_i)\)
for that shortcut's window.  In all other rows it uses only gap edges.
Slice the path at its endpoints and at each internal source boundary
\(x_i\), immediately before its first edge that consumes a character
of row \(i\).  Expand each shortcut into an optimal alignment,
breaking ties deterministically, and let \(c_i\) be the cost between
consecutive slices.  Slicing before expansion keeps each shortcut
wholly in its row subpath, including initial insertions in its
expansion.  Thus \(c_i\ge0\) and \(\sum_i c_i=d\).

\emph{Choosing the row set.}
For each node \(v\), let
\(C_v:=\sum_{i\in\mathcal R(v)}c_i\).
Let \(\mathcal I\) consist of the rows \(i\) for which
\(c_i<(1-2\zeta)r(U_i)\) and
\[
 C_v\le\frac{Lt_vr(U_i)}{\zeta}
 \qquad\text{for every node \(v\) with \(i\in\mathcal R(v)\)}.
\]
Rows violating the first condition have total radius at most
\(d/(1-2\zeta)\).  Assign each row violating the second condition
to one node \(v\) for which \(r(U_i)<\zeta C_v/(Lt_v)\).
At most \(t_v\) rows are assigned to \(v\), so their radii sum
to at most \(\zeta C_v/L\).  Each row cost contributes to at
most \(L\) nodes, giving \(\sum_v C_v\le Ld\).  Hence
\begin{equation}
\label{eq:sparse-excluded-radius}
 \sum_{i\notin\mathcal I}r(U_i)
 \le\frac{d}{1-2\zeta}+\zeta d
 =(1+O(\zeta))d.
\end{equation}
Every row \(i\in\mathcal I\) has a shortcut in the fixed path:
otherwise \(c_i\ge\mu(U_i)\ge r(U_i)\).  Its window satisfies
\[
 \WED_w(U_i,V_i)\le c_i<(1-2\zeta)r(U_i).
\]

\emph{Propagating the chosen comparisons.}
The set \(\mathcal I\) is fixed before sampling.  By
\cref{lem:sparse-row-sampling,eq:sparse-excluded-radius}, with high
probability the set \(\mathcal B\) from that lemma satisfies
\begin{equation}
\label{eq:sparse-unreached-radius}
 \sum_{i\in\mathcal B}r(U_i)\le(1+O(\zeta))d.
\end{equation}
For every row outside \(\mathcal B\), each node on its
root-to-leaf path samples a row from \(\mathcal I\).

Fix such an outcome and such a tree path.  By
\cref{lem:sparse-propagation}, the root stores the anchor
\((h,p_h,q_h)\) for every sampled row \(h\in\mathcal I\).
Suppose a node \(u\) on the path stores an anchor
\((h,p_h,q_h)\) with \(h\in\mathcal I\), and let
\(i\in\mathcal S_v\cap\mathcal I\) be a sampled row
at its child \(v\) on the path.  The fixed path's subpath between
\((x_h,p_h)\) and \((x_i,p_i)\), taken in their order along the
path, lies in the rows of \(u\)
and has cost at most \(C_u\).  By \cref{lem:potential} and
the definition of \(\mathcal I\),
\[
 \left|\mu_T(p_i)-(\mu_T(p_h)+\mu_S(x_i)-\mu_S(x_h))\right|
 \le C_u\le\frac{Lt_ur(U_i)}{\zeta}.
\]
Thus \cref{lem:sparse-propagation} ensures that \(v\) stores
\((i,p_i,q_i)\).  Induction along the tree path reaches its leaf,
whose only possible sampled row is the leaf's own row.  Therefore
every row outside \(\mathcal B\) receives a direct
\((1+\zeta)\)-approximate label for its window in the fixed path.

\emph{Replacing shortcuts.}
For each row outside \(\mathcal B\), replace its shortcut by
the direct certificate and retain its gap edges, at cost at most
\((1+\zeta)c_i\).  For a row in \(\mathcal B\), use the
supplied fallback shortcut when one exists and retain its gap edges.
The assumed bound on the fallback label gives cost at most
\(c_i+2(1+\zeta)r(U_i)\); rows using only gaps remain unchanged.
These replacements preserve the slicing vertices and hence concatenate.
By \cref{eq:sparse-unreached-radius}, their total cost is at most
\[
 (1+\zeta)d+2(1+\zeta)\sum_{i\in\mathcal B}r(U_i)
 \le(3+O(\zeta))d.\qedhere
\]
\end{proof}

\sparsecompletion*

\begin{proof}
Obtain the sample sets from \cref{lem:sparse-row-sampling},
apply the propagation
algorithm of \cref{lem:sparse-propagation} to these sets, and return its
graph \(G\).  \Cref{lem:sparse-propagation} certifies every added shortcut
and gives the stated running time under the degree assumption;
the \(\Ohtilde(R/\zeta)\) sampling cost is absorbed in this bound.

Before sampling, fix a shortest path in \(G^*\) for every pair of
left and right target endpoints, breaking ties deterministically.
There are at most \((|T|+1)^2\) pairs.  Apply
\cref{lem:sparse-ideal-path} to each path and take a union bound,
using a sufficiently large failure exponent.
This proves the simultaneous distance guarantee.
\end{proof}

\subsection{Completing the row-comparison construction}

\rowcomparisons*

\begin{proof}
If \(M=0\), return an empty shortcut list.  The resulting graph
contains only gap edges and agrees with \(G^*\).
Assume henceforth that \(M\ge1\).

For the running-time analysis, set
\[
 A:=\frac{mW}{\zeta},\qquad B:=\frac{RW^2}{\zeta^4}.
\]
Choose the degree parameter
\[
 k:=\min\left\{M,\left\lceil\sqrt{\frac{AM}{B}}\right\rceil\right\}
   =\min\left\{M,\left\lceil\sqrt{\frac{\zeta^3mM}{RW}}\right\rceil\right\}.
\]
Then \(1\le k\le M\).  Binary search over \([1\dd M]\)
finds the smallest integer satisfying \(k^2RW\ge\zeta^3mM\);
if none exists, use \(k=M\).
Apply \cref{lem:sampled-centers}
to construct radii and fallback labels, then use fresh random bits
for \cref{thm:sparse-completion}.  Return the resulting shortcuts.

By \cref{lem:sampled-centers}(1), the constructed radii and fallback labels satisfy
the assumptions of \cref{thm:sparse-completion} on every outcome. Conditional on these outputs, \cref{thm:sparse-completion} gives the simultaneous
distance guarantee with high probability.
\Cref{lem:sampled-centers}(2) also ensures, with high probability, the degree bound
required for the running-time guarantee of \cref{thm:sparse-completion}. Thus, the total work is
\[
 \Ohtilde\left(\frac{AM}{k}+Bk+RM\right).
\]
Put \(x=\sqrt{AM/B}\).  If \(x\le M\), then
\(k=\lceil x\rceil\) and
\(AM/k+Bk\le2\sqrt{ABM}+B\).
If \(x>M\), then \(k=M\) and
\(AM/k+Bk=A+BM\le A+\sqrt{ABM}\).
Thus the time is
\(\Ohtilde(A+B+\sqrt{ABM}+RM)\), which is the claimed
bound.  The additive terms \(A\) and \(B\) account for the
upper bound \(k\le M\) and for integer rounding,
respectively.

To bound every execution, stop after
\[
 C_{\rm op}\left(
 \frac{AM}{k}+Bk+RM\right)
 \left\lceil\log_2(N/\zeta)\right\rceil^c
\]
RAM operations, where the positive integer constants \(C_{\rm op}\)
and \(c\) are chosen above the bound just proved.
If this limit is reached, return an empty shortcut list.  Otherwise
return the constructed list.  Both outputs are certified.  Whenever the degree bound of
\cref{lem:sampled-centers}(2) holds, the running-time guarantee of \cref{thm:sparse-completion} ensures
that the construction finishes before reaching the operation limit
and therefore returns the constructed shortcut list.  A union bound
with the distance guarantee of \cref{thm:sparse-completion} proves the claimed
high-probability guarantee.
\end{proof}

\section{Constructing local shortcuts}
\label{sec:rectangle-certification}

We combine the row and window construction (\cref{lem:short-pieces-shared-windows}) with the row-comparison algorithm (\cref{thm:row-comparisons}) to obtain certified shortcuts for all target boundary pairs.

\begin{theorem}[Local shortcut construction]
\label{thm:rectangle-certification}
Let \(S\) and \(T\) be nonempty strings with \(m:=|S|+|T|\) and
\(F:=\mu(S)+\mu(T)\), and let \(N\ge m\) be a supplied length bound.
Given $S$, $T$, $N$, and \(0<\varepsilon\le1\), an \(\Ohtilde(mN^{3/5}/\varepsilon^{16/5})\)-time randomized algorithm constructs a
shortcut graph \(G\) for \(S\) and \(T\), with shortcuts stored
explicitly and certified on every outcome, and all gap edges represented implicitly,
such that, with high probability, the following bound holds simultaneously for all
fragments \(T[p\dd q)\):
\begin{equation}
\label{eq:rectangle-certification}
 \operatorname{dist}_G((0,p),(|S|,q))
 \le(3+\varepsilon)\WED_w(S,T[p\dd q))+
       \tfrac{\varepsilon mF}{N}.
\end{equation}
\end{theorem}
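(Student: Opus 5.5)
The plan is to compose \cref{lem:short-pieces-shared-windows} with \cref{thm:row-comparisons} and then choose the parameters. Set \(\zeta=c_\zeta\varepsilon\), as fixed in \cref{sec:threshold-subsequences}, and pick \(W:=\min\{m,\max\{1,\lceil N^{1/5}\zeta^{\beta}\rceil\}\}\) for an exponent \(\beta\) chosen below; the ceiling is computed by binary search as in the model. Apply \cref{lem:short-pieces-shared-windows} with this \(W\). In \(\Ohtilde(m/\zeta)\) time it yields rows with \(R=\Ohtilde(m/W)\) and windows with \(M=\Ohtilde(N/(\zeta^2W))\). Then run \cref{thm:row-comparisons} on this partition and window family, and output its graph \(G\). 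The shortcuts are certified on every outcome by that theorem, and gap edges are left implicit.

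\emph{Correctness.} Fix a fragment \(T[p\dd q)\). \Cref{lem:short-pieces-shared-windows}(3) gives
\[
\operatorname{dist}_{G^*}((0,p),(|S|,q))\le(1+O(\zeta))\WED_w(S,T[p\dd q))+O(\zeta mF/N).
\]
On the high-probability event of \cref{thm:row-comparisons}, which holds simultaneously for all \(p\le q\), we have \(\operatorname{dist}_G\le(3+O(\zeta))\operatorname{dist}_{G^*}\). Multiplying out gives \((3+O(\zeta))\WED_w+O(\zeta mF/N)\). Taking \(c_\zeta\) small enough makes both hidden constants at most what \eqref{eq:rectangle-certification} needs, namely \(3+\varepsilon\) and \(\varepsilon mF/N\). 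No union bound over fragments is needed, since the construction of \cref{lem:short-pieces-shared-windows} is deterministic and holds for every fragment.

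\emph{Running time.} Substitute \(R\) and \(M\) into the bound of \cref{thm:row-comparisons}, using \(m\le N\):
\[
\frac{W^{3/2}\sqrt{mRM}}{\zeta^{5/2}}=\Ohtilde\!\left(\frac{m\sqrt{NW}}{\zeta^{7/2}}\right),\qquad \frac{RW^2}{\zeta^4}=\Ohtilde\!\left(\frac{mW}{\zeta^4}\right),\qquad RM=\Ohtilde\!\left(\frac{mN}{\zeta^2W^2}\right).
\]
The remaining term is \(mW/\zeta\). Balancing \(\sqrt{NW}\,\zeta^{-7/2}\) against \(N\zeta^{-2}W^{-2}\) gives \(W^{5/2}=N^{1/2}\zeta^{3/2}\), that is, \(W=N^{1/5}\zeta^{3/5}\). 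This makes both terms \(mN^{3/5}\zeta^{-16/5}\), which matches the target exponent \(16/5\). So take \(\beta=3/5\). The term \(mW/\zeta^4\) is then \(mN^{1/5}\zeta^{-17/5}\), which is dominated by \(mN^{3/5}\zeta^{-16/5}\) because \(N^{2/5}\ge \zeta^{-1/5}\) is not guaranteed. I will need to check this. It holds whenever \(N\ge\zeta^{-1/2}\). Otherwise \(N\) is polynomial in \(1/\varepsilon\), and then, since \(m\le N\le \zeta^{-1/2}\), the exact quadratic dynamic program over all fragments (\(O(m^2)\) shortcuts, or simply \(O(|S|\,|T|^2)\) time) fits within \(\zeta^{-16/5}mN^{3/5}\) for small \(N\). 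So I will add that exact fallback branch.

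I expect the main obstacle to be the boundary cases of the parameter choice. If \(W\) is clamped to \(1\) (because \(N^{1/5}\zeta^{3/5}<1\)) or to \(m\), the balancing fails and each term must be rechecked. If \(W=m\), then \(R=\Ohtilde(1)\) and the bound becomes \(\Ohtilde(m^2/\zeta^4+mN/(\zeta^2 m))\), which needs \(m\le N^{3/5}\) roughly; compare against \(mN^{1/5}\zeta^{3/5}\ge m^2\). If \(W=1\), the \(RM\) term is \(mN/\zeta^2\), and I would instead fall back to exact computation when \(N\le\zeta^{-3}\). In both regimes I expect the crude bounds to be absorbed by \(\zeta^{-16/5}\) together with \(\Ohtilde\)-factors. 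Beyond that, the proof is a direct composition.
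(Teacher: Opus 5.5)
Your main line is the paper's argument: build rows and windows with \cref{lem:short-pieces-shared-windows}, feed them to \cref{thm:row-comparisons}, compose the two guarantees, and take \(W\approx\zeta^{3/5}N^{1/5}\) to balance the first and last terms. Your correctness paragraph is fine.

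The gap is in the boundary regime you flagged but did not resolve: clamping \(W\) to \(m\) does not give the claimed time. When \(m<\zeta^{3/5}N^{1/5}\), you get \(W=m\) and \(R=\Ohtilde(1)\). From the stated window count, \(M=\Ohtilde(N/(\zeta^2 m))\), so the \(RM\) term of \cref{thm:row-comparisons} is \(\Ohtilde(N/(\zeta^2 m))\). Its ratio to the target \(mN^{3/5}\zeta^{-16/5}\) is \(N^{2/5}\zeta^{6/5}/m^2\). This exceeds \(1\) throughout the regime, and for fixed \(m\) (say \(m=2\)) it grows like \(N^{2/5}\). No power of \(\zeta\) or polylogarithmic factor absorbs it, contrary to your expectation.

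The paper never runs the main construction here. If \(m<W\), it outputs all \(|S|\cdot|T|\) unit diagonal edges with their exact costs. Together with the implicit gap edges, these reproduce the whole edit grid, in \(O(m^2)\le O(mW)=O(mN^{1/5})\) time. Alternatively, you could rescue the clamp by noting \(M\le(|T|+1)^2\), since windows are distinct fragments of \(T\).

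The small-input fallback also needs the right exact method. Your per-fragment dynamic program costs \(O(|S|\,|T|^2)=O(m^3)\). For \(m\) close to \(N\), this fits the bound only when \(N\le\zeta^{-16/7}\), up to polylogarithmic factors. In your \(W=1\) regime, \(N\le\zeta^{-3}\), it can overshoot by a factor \(\zeta^{-1}\). The unit-diagonal construction costs only \(O(m^2)\) and fits whenever \(m\le\zeta^{-8}\), since then \(m^2=m\cdot m^{3/5}\cdot m^{2/5}\le mN^{3/5}\zeta^{-16/5}\).

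This \(m\le\zeta^{-8}\) test is the paper's threshold, and it makes the remaining case clean. After excluding \(m\le\zeta^{-8}\) and \(m<W\), one has \(N\ge m>\zeta^{-8}\). Then the argument of the ceiling in \(W\) exceeds \(\zeta^{-1}\ge1\), so rounding costs only a constant factor. Also \(N\ge\zeta^{-1/2}\) holds, which makes the \(mW/\zeta^4\) term dominated. No separate analysis of clamped values of \(W\) is needed.
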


The total mass \(F\) affects the additive error, but not the running time.

\begin{proof}
Choose the maximum row length
\[
 W:=\left\lceil\zeta^{3/5} N^{1/5}\right\rceil.
\]
Compute \(W\) by binary search for the least positive integer
satisfying \(W^5\ge\zeta^3N\).  Since \(\zeta\le1\),
we have \(1\le W\le2N^{1/5}\).

\emph{Small inputs.}
If \(m\le\zeta^{-8}\) or \(m<W\), return all
\(|S|\cdot |T|\) unit diagonal
edges of the edit grid, each labelled by its exact substitution or match cost.
Together with the gap edges, they reproduce the full edit
grid and hence give exact distances for every target boundary pair.
This takes \(O(m^2)\) time.  If \(m\le\zeta^{-8}\), then
\(m\le N\) gives
\[
 m^2=m\,m^{3/5}m^{2/5}
 \le mN^{3/5}/\zeta^{16/5}.
\]
If instead \(m<W\), then
\(m^2\le mW=O(mN^{1/5})\), which also fits the claimed bound.
Assume henceforth that \(m>\zeta^{-8}\) and \(W\le m\).

\emph{Construction and approximation.}
Construct the source rows and target windows of
\cref{lem:short-pieces-shared-windows}.  Use
\cref{thm:row-comparisons} to construct certified shortcuts for
these rows and windows, and return them.

On the simultaneous approximation event of that theorem, fix
any target fragment $T[p\dd q)$ and put
\(c=\WED_w(S,T[p\dd q))\).  \cref{lem:short-pieces-shared-windows} supplies,
in the ideal shortcut graph, a path
of cost at most
\[
 (1+O(\zeta))c+O\!\left(\tfrac{\zeta mF}{N}\right).
\]
The guarantee in \cref{thm:row-comparisons} multiplies this cost by at most
\(3+O(\zeta)\).
The distance in the constructed shortcut graph is therefore at most
\[
 (3+O(\zeta))c+O\!\left(\tfrac{\zeta mF}{N}\right).
\]
This bound holds simultaneously for all fragments of \(T\).
Choosing the absolute constant $c_\zeta$ in
\(\zeta=c_\zeta\varepsilon\) sufficiently small gives
\eqref{eq:rectangle-certification}.

\emph{Running time.}
Write \(R\) for the number of rows and \(M\) for the number of
target windows.  \Cref{lem:short-pieces-shared-windows} gives
\[
 R=\Ohtilde\left(\tfrac{m}{W}\right)\qquad\text{and}\qquad
 M=\Ohtilde\!\left(\tfrac{N}{\zeta^2W}\right).
\]
Substituting these bounds in
\cref{thm:row-comparisons} gives total time, on every outcome,
\begin{equation}
\label{eq:rectangle-cost}
 \Ohtilde\left(
 m\left(\frac{\sqrt{NW}}{\zeta^{7/2}}
       +\frac{W}{\zeta}+\frac{W}{\zeta^4}
       +\frac{N}{\zeta^2W^2}\right)\right).
\end{equation}
The window construction costs \(\Ohtilde(m/\zeta)\), which is
absorbed by the second term.

Our choice of \(W\) balances the first and last terms of
\eqref{eq:rectangle-cost}.
Since \(N\ge m>\zeta^{-8}\), the quantity inside the ceiling
exceeds one, so rounding changes these bounds by only absolute
factors.  Substitution bounds the first and last terms by
\(\Ohtilde(mN^{3/5}/\zeta^{16/5})\).
The third term of \eqref{eq:rectangle-cost} is
\(\Ohtilde(mN^{1/5}/\zeta^{17/5})\); its ratio to the target
bound is \(O(1/(\zeta^{1/5}N^{2/5}))=O(1)\), again using
\(N>\zeta^{-8}\).  The second term is at most the third, so both middle terms are absorbed by the target bound.
Finally,
\(\zeta=c_\zeta\varepsilon\), with an absolute constant
\(c_\zeta\), converts these bounds to the stated dependence
on~\(\varepsilon\).
\end{proof}

\section{The algorithm for a distance guess}
\label{sec:approximation-algorithm}

The global algorithm for a supplied budget \(D>0\) has three steps: cover
inexpensive alignments (\cref{lem:alignment-rectangle-cover}), construct shortcuts inside each rectangle (\cref{thm:rectangle-certification}),
and combine the shortcuts (\cref{prop:implicit-weighted-shortcut-consumer}).
Choosing the budget is the subject of \cref{sec:aspect-reduction}.

\begin{theorem}[Approximation from a distance guess]
\label{thm:weighted-cdgks-upper-bound-primitive}
Given strings \(X\) and \(Y\), a length bound \(N\ge|X|+|Y|\),
a metric cost oracle \(w\), a budget \(D>0\), and
\(0<\varepsilon\le1\), a randomized \(\Ohtilde(N^{8/5}/\varepsilon^{16/5})\)-time algorithm returns a value
\(Q(D)\) with the following guarantees:
\begin{enumerate}
\item On every outcome, \(Q(D)\ge\WED_w(X,Y)\).
\item If \(D\ge\WED_w(X,Y)\), then
\(Q(D)\le(3+\varepsilon)\WED_w(X,Y)+\varepsilon D\) with high probability.
\end{enumerate}
\end{theorem}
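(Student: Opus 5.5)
We run the three steps in order. First, apply \cref{lem:alignment-rectangle-cover} to \(X,Y,D\) to obtain the collection \(\mathcal B\) in \(O(N\log N)\) time. If \(\mathcal B\) is empty, return \(Q(D)=\mu(X)+\mu(Y)\); this is the all-gap alignment cost, hence an upper bound. Otherwise, classify the rectangles. For each unit rectangle \([i\dd i+1]\times[j\dd j+1]\), emit the single shortcut \((i,j)\to(i+1,j+1)\) labelled \(w(X[i],Y[j])\), which is exact. Rectangles with an empty projection need nothing, since gap edges are implicit. For every remaining rectangle with projections \(S=X[a\dd b)\) and \(T=Y[\ell\dd r)\), call \cref{thm:rectangle-certification} with the global \(N\) and accuracy \(\varepsilon'=\varepsilon/C\) for a suitable constant \(C\). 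Translate its shortcuts by \((a,\ell)\) into global grid coordinates. Finally, feed all shortcuts to \cref{prop:implicit-weighted-shortcut-consumer} and return the computed source--sink distance as \(Q(D)\). Certification (item 1) is immediate: every local shortcut is certified on every outcome, translation preserves the substring pair, and the proposition returns a certified upper bound.

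For the running time, \cref{thm:rectangle-certification} costs \(\Ohtilde(m_sN^{3/5}/\varepsilon^{16/5})\) per rectangle. Since \(\sum_s m_s=O(N)\), the total is \(\Ohtilde(N^{8/5}/\varepsilon^{16/5})\). The number of shortcuts is bounded by the total running time, so the final shortest-path computation fits the same bound. The failure probability is handled by a union bound over the \(O(N)\) local calls, each succeeding with high probability with respect to the global \(N\).

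For item 2, fix an optimal alignment \(\pi\) of cost \(d=\WED_w(X,Y)\le D\). By \cref{lem:alignment-rectangle-cover}(4), \(\pi\) lies in the union of the rectangles. The main obstacle is decomposing \(\pi\) into per-rectangle pieces that match the form of \eqref{eq:rectangle-certification}, namely a path from \((0,p)\) to \((|S|,q)\) in local coordinates. The source projections are disjoint and ordered, so cut \(\pi\) at the first vertex with \(x\)-coordinate \(a\) and the last vertex with \(x\)-coordinate \(b\) for each rectangle \([a\dd b]\times[\ell\dd r]\). Every edge consuming a character of \(X[a\dd b)\) lies in some rectangle whose source projection contains that character. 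By property 2 of the cover, that rectangle is unique, so the subpath between these cuts lies in this rectangle. It spans source \(X[a\dd b)\) against some target fragment \(Y[p\dd q)\subseteq Y[\ell\dd r)\). The only parts of \(\pi\) outside such subpaths are insertion edges at source boundaries, which remain available implicitly. I would check here that insertion runs at a shared boundary \(x=b\) may be assigned to either adjacent rectangle or left as free gap edges, and that the vertical segments in degenerate rectangles are exactly these gap edges.

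Replace each non-unit piece of cost \(c_s\) by the path guaranteed by \eqref{eq:rectangle-certification}, which costs at most \((3+\varepsilon')c_s+\varepsilon' m_sF_s/N\). Unit pieces are kept at their exact cost. Summing gives at most \((3+\varepsilon')d+\varepsilon'\sum_s m_sF_s/N\). Property 1 gives \(F_s=O(D)\) and \(\sum_s m_s=O(N)\), so the additive term is \(O(\varepsilon' D)\). Choosing \(C\) large enough yields \(Q(D)\le(3+\varepsilon)d+\varepsilon D\).
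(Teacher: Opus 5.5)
Your algorithm matches the paper's proof, and so do the certification argument, the running-time bound, the union bound over local calls, and the final error summation; your \(\varepsilon'=\varepsilon/C\) plays the role of its \(\eta=c_\eta\varepsilon\). The one step that fails as written is the decomposition of \(\pi\).

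Cutting at the \emph{first} vertex of \(\pi\) with \(x=a\) and the \emph{last} vertex with \(x=b\) makes adjacent pieces overlap. Suppose two source projections share a boundary \(b_s=a_{s+1}\), and \(\pi\) makes insertions there, running from \((b_s,y_1)\) to \((b_s,y_2)\) with \(y_1<y_2\). Then piece \(s\) ends at \((b_s,y_2)\) while piece \(s+1\) starts at \((b_s,y_1)\). This has two consequences:
\begin{itemize}
\item The insertions on this run are counted in both \(c_s\) and \(c_{s+1}\), so \(\sum_s c_s\) can exceed \(d\) and the factor \(3\) is lost.
\item The replacement for piece \(s\) ends strictly above the start of the replacement for piece \(s+1\), so the two cannot be concatenated in the monotone grid.
\end{itemize}

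There is also a containment problem. Properties 2 and 4 of \cref{lem:alignment-rectangle-cover} only identify the rectangle containing edges that \emph{consume} characters of \(X[a\dd b)\). The insertion runs at \(x=a\) and \(x=b\) may be covered only by a neighbouring rectangle. For example, the run entering a heavy-pair unit rectangle \([i\dd i+1]\times[j\dd j+1]\) may start below \(y=j\). So your claim that the whole subpath lies in \([a\dd b]\times[\ell\dd r]\) does not follow from the stated properties. Without it, the local guarantee, which concerns fragments of \(T=Y[\ell\dd r)\), is not available for such a piece.

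The fix is the option you mention last, applied uniformly: leave every boundary run as free gap edges. Start the piece immediately before the edge consuming \(X[a]\) and end it immediately after the edge consuming \(X[b-1]\). Each of these two edges lies in the unique rectangle whose source projection contains the consumed character, so both endpoints belong to \([a\dd b]\times[\ell\dd r]\). Monotonicity then confines the whole subpath to this rectangle.

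With this cut, the pieces are edge-disjoint and occur in source order. Every edge of \(\pi\) outside them is an insertion, which remains available at its original cost, and \(c_0+\sum_s c_s=d\). Your summation then goes through unchanged. This is exactly how the paper decomposes the optimal alignment.
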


\begin{proof}
\emph{The algorithm.}
Handle empty strings exactly.  Otherwise, set \(\eta=c_\eta\varepsilon\),
where \(0<c_\eta\le1\) is a sufficiently small absolute rational constant chosen
below, and proceed as follows.
\begin{enumerate}
\item Apply \cref{lem:alignment-rectangle-cover} with budget \(D\).
\item For every unit rectangle, include its exact diagonal edge.
      For every other rectangle \([a\dd b]\times[\ell\dd r]\)
      with nonempty projections, apply \cref{thm:rectangle-certification}
      to \(S=X[a\dd b)\) and \(T=Y[\ell\dd r)\), with length
      bound \(N\) and accuracy \(\eta\).  Translate each returned
      shortcut endpoint \((x,p)\) to \((a+x,\ell+p)\), keeping its label.
      Rectangles with an empty projection require only gap edges.
\item Apply \cref{prop:implicit-weighted-shortcut-consumer} to all
      returned shortcuts and exact diagonal edges, and return its
      value as \(Q(D)\).
\end{enumerate}

\emph{Running time and certification.}
For rectangle \(s\), let \(m_s\) be its total character count.
The overlap bounds of \cref{lem:alignment-rectangle-cover} give \(\sum_s m_s=O(N)\).  Hence, the total
local construction time is
\[
 \sum_s\Ohtilde(m_sN^{3/5}/\eta^{16/5})
 =\Ohtilde(N^{8/5}/\varepsilon^{16/5}).
\]
The number of shortcut records is bounded by the same quantity,
and translating their endpoints takes constant time per record.
The dynamic programming algorithm of \cref{prop:implicit-weighted-shortcut-consumer} adds a logarithmic overhead hidden in the \(\Ohtilde\) notation.  Every
shortcut is certified on every outcome, so the returned value $Q(D)$ is
an upper bound on the true distance $\WED_w(X,Y)$.

\emph{Approximation.}
A union bound over the \(O(N)\) local calls gives a high-probability event on which all local guarantees hold; we henceforth condition on this event.
Let \(d:=\WED_w(X,Y)\) and suppose that \(d\le D\).
Let \(\Gamma\) be an optimal alignment of cost \(d\).
The cover contains every edge of \(\Gamma\), and each edge that
consumes a source character belongs to its unique source rectangle.

For a rectangle with nonempty source projection \(S=X[a_s\dd b_s)\), take the
subpath starting immediately before the edge that consumes
\(X[a_s]\) and ending immediately after the edge that consumes
\(X[b_s-1]\).  Both endpoints belong to the rectangle by coverage,
and monotonicity places the entire subpath inside it.  These
subpaths are edge-disjoint and occur in source order.  Intervening
edges are insertions, which remain available in the shortcut graph as gap edges.

For each rectangle \(s\) handled by \cref{thm:rectangle-certification}, let \(c_s\) be
this subpath's cost and \(F_s\) the total mass of the rectangle's
projections.  The edit distance between the corresponding substrings
is at most \(c_s\), so \cref{thm:rectangle-certification} gives a replacement with
the same endpoints and cost at most
\[
 (3+\eta)c_s+\tfrac{\eta m_sF_s}{N}.
\]
All remaining parts of \(\Gamma\) use
gap edges or exact unit-rectangle diagonals; we denote their total cost by $c_0$.
The replacements
concatenate with these edges.  Since \(c_0+\sum_s c_s=d\), the total cost of the resulting path is at most
\[
 (3+\eta)d+\tfrac{\eta}{N}\sum_s m_sF_s.
\]
For every such rectangle, \(F_s=O(D)\).  Combining this with
\(\sum_s m_s=O(N)\) bounds the last term by \(O(\eta D)\).
Since \(\eta=c_\eta\varepsilon\), choosing \(c_\eta\) sufficiently small
proves the claimed guarantee.
\end{proof}

\section{Choosing the distance budget}
\label{sec:aspect-reduction}

In this section, we use the following lemma to remove the distance guess required by
\cref{thm:weighted-cdgks-upper-bound-primitive}.

\begin{restatable}[Computing a coarse distance interval]{lemma}{coarsedistanceinterval}
\label{lem:aspect-free-coarse-bracket}
Given distinct nonempty strings \(X\) and \(Y\) of total length \(N\), a randomized
\(O(N\log^2 N)\)-time algorithm returns an interval \([L,U]\subseteq (0,\infty)\)
of \emph{multiplicative width} \(U/L=O(N)\) such that, with high probability,
\(L\le\WED_w(X,Y)\le U\).
\end{restatable}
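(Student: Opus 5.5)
The plan is to bracket \(d:=\WED_w(X,Y)\) by a \emph{bottleneck} quantity rather than by \(d\) itself. For an alignment \(\pi\), let \(\beta(\pi)\) be its maximum edge cost, and let \(\beta^*\) be the minimum of \(\beta(\pi)\) over all alignments of \(X\) and \(Y\). An alignment has at most \(N\) edges, and every edge costs at least zero, which gives the sandwich
\[
 \beta^*\le d\le N\beta^*.
\]
Moreover \(\beta^*>0\) because \(X\ne Y\): some edge of every alignment has positive cost, since \(w(a,b)=0\) only for \(a=b\) and gap masses are positive. It therefore suffices to compute an estimate \(\tilde\beta\) with \(\beta^*/c_1\le\tilde\beta\le c_2\beta^*\) for absolute constants \(c_1,c_2\). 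We can then output \(L=\tilde\beta/c_2\) and \(U=c_1N\tilde\beta\), which gives \(L\le d\le U\) and \(U/L=O(N)\). The constant-factor loss is absorbed into \(O(N)\), so we may relax the bottleneck problem freely. Only comparisons with input costs are needed, so the number of operations does not depend on the numerical range of \(w\).

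Next I would reduce \(\beta^*\) to a monotone decision problem at a threshold \(\theta\): does some alignment use only edges of cost at most \(\theta\)? This predicate is monotone in \(\theta\), and \(\beta^*\) is one of the \(O(N^2)\) edge costs, so I want a search over candidate thresholds. I would not enumerate the grid. Instead, I would use the \(O(N)\) gap masses together with the test below, and search over the sorted gap masses. Between consecutive gap-mass values, the only remaining freedom concerns diagonal edges, and their cost I would handle approximately by factor-\(2\) reasoning.

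The relaxation I intend is the following. Every character with \(g\le\theta\) may be deleted or inserted for free. If a character with \(g\le\theta\) is matched to \(b\) by an edge of cost at most \(\theta\), then \(g(b)\le2\theta\) by the triangle inequality. Hence, up to constant factors, feasibility at scale \(\theta\) means the following: the characters heavier than \(2\theta\) in \(X\) and in \(Y\) can be matched one-to-one and in order, with \(w\le3\theta\), possibly using intermediate characters of mass in \((\theta,2\theta]\) optionally. This is the step where I would import Kuszmaul's random simplification~\cite{Kuszmaul19}, adapted to unequal lengths. The idea is to draw a random scale shift and replace each character heavier than the threshold by a canonical representative, so that matching decisions become equality tests. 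The decision would then reduce to comparing two filtered sequences, which can be done by hashing or by a linear greedy scan. A binary search over \(O(\log N)\) candidate scales, each costing \(O(N\log N)\) (sorting by mass and range-maximum filtering, as in \cref{lem:short-string-approximation}), gives the \(O(N\log^2N)\) bound.

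The main obstacle is making this decision procedure correct with only constant-factor slack and in near-linear time. Metric closeness \(w\le\theta\) is not transitive, so order-preserving matching of heavy characters is, in general, an LCS-type problem. My first attempt would be Kuszmaul's randomized thresholding, which makes, with high probability, all heavy characters separable into ``must match identically'' classes; I would analyze its failure probability via a union bound over the \(O(N)\) heavy characters. If the per-threshold randomness only gives the guarantee in expectation, I would repeat \(O(\log N)\) times and take the median, which stays within the \(O(N\log^2N)\) budget.
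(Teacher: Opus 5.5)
\textbf{There is a genuine gap.} Your sandwich \(\beta^*\le d\le N\beta^*\) is correct, but it uses up the entire \(O(N)\) budget. The proof therefore depends on a near-linear-time decision procedure that approximates the bottleneck problem within a constant factor, and you never supply one; you flag this yourself as the main obstacle.

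\emph{Why your relaxation does not resolve it.} Your relaxation at scale \(\theta\) is still an order-preserving matching problem. It has mandatory characters on both sides, optional characters between them, and a non-transitive compatibility relation. This is reachability in an edit grid with blocked cells, and you give no near-linear algorithm for it.

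\emph{Why the cited tool does not help.} Kuszmaul's random simplification (\cref{lem:kuszmaul-random-simplification}) only deletes characters whose gap mass is at most a random threshold \(r\), and it bounds \(\mathbb E_r[\WED_w(X^{>r},Y^{>r})]\) by \(5\WED_w(X,Y)\). It does not map heavy characters to canonical representatives, and no such scheme can work with constant slack. Take points \(p_0,\ldots,p_n\) spaced \(\theta\) apart in a line metric, all of huge gap mass, and set \(X=p_0\cdots p_{n-1}\) and \(Y=p_1\cdots p_n\). The only cheap alignment matches every \(p_i\) with \(p_{i+1}\), so \(\beta^*=\theta\). Yet every partition into classes of diameter below \(c\theta<n\theta\) separates some consecutive pair. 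Hence class-equality tests cannot certify \(\beta^*\le\theta\) unless \(c=\Omega(n)\).

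\emph{Why the genuine simplification fails for a maximum.} A matched pair can straddle a threshold drawn from \([R,2R)\) with probability \(\theta/R\). A single straddle among up to \(N\) pairs can force a gap of cost about \(R\) in the simplified strings. Success therefore needs \(R\gtrsim N\theta\), which is a second factor \(N\) and gives width \(N^2\). Taking medians over repetitions cannot restore a guarantee that a single run lacks.

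\emph{A separate gap.} Binary search over gap masses does not bound the ratio of the final endpoints, because consecutive gap masses can be arbitrarily far apart. Your ``factor-2 reasoning'' does not address this case.

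\textbf{The missing idea.} Avoid the bottleneck and test the total cost at a scale, so that the factor \(N\) is paid only once.
\begin{itemize}
\item If every character of the simplified strings has gap mass greater than \(R\), then an alignment of cost below \(R\) contains no gap edge. So \(\min\{\WED_w(X^{>r},Y^{>r}),R\}\) is a single positional scan (\cref{lem:capped-low-distance}).
\item For \(r\) drawn from \([R,2R)\), a value below \(R\) certifies \(d<3NR\). Restoring the at most \(N\) deleted characters, each of mass at most \(r<2R\), costs less than \(2NR\).
\item Conversely, if \(d<R/10\), such a value appears with probability above \(\tfrac12\), by Markov's inequality and the \(5d\) expectation bound (\cref{lem:coarse-scale-test}).
\end{itemize}
The paper binary-searches this test over the candidate scales \(\{g(a),g(a)/2\}\). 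If the final endpoints satisfy \(B\le2A\), it returns \([A/10,3NB]\). Otherwise no gap mass lies in \((A,2B)\), so filtering at \(A\) fixes the simplified strings. In that case \cref{lem:coarse-gap-bracket} computes an interval of multiplicative width at most \(15N\) deterministically.
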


Trying \(O(\log N)\) successively doubled budgets starting at \(L\) then proves the
main theorem.  Our proof of \cref{lem:aspect-free-coarse-bracket} adapts Kuszmaul's
\cite[Section~6]{Kuszmaul19} \(\Ohtilde(N^{2-\delta})\)-time
\(O(N^\delta)\)-approximation algorithm at the near-linear-time endpoint \(\delta=1\).
We repeat the necessary details to eliminate the assumption \(|X|=|Y|\)
and to correct a minor gap in the original arguments.

\subsection{Random simplification and capped distances}

We use the filter \(Z^{>r}\) defined in \cref{sec:model}:
it deletes every character of gap mass at most \(r\).
Kuszmaul's approximation relies on the following lemma,
whose original statement \cite[Lemma~6.5]{Kuszmaul19} includes an unnecessary assumption $|X|=|Y|$.
We reproduce the proof below for completeness.

\begin{lemma}[Random simplification]
\label{lem:kuszmaul-random-simplification}
Consider strings $X,Y\in \Sigma^*$ and a threshold \(R>0\).
If \(r\) is chosen uniformly from \([R,2R)\), then
\[
 \mathbb E_r\bigl[
   \WED_w(X^{>r},Y^{>r})
 \bigr]
 \le 5\WED_w(X,Y).
\]
\end{lemma}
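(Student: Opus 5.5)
Fix an optimal alignment \(\Gamma\) of \(X\) and \(Y\) of cost \(d=\WED_w(X,Y)\). For each \(r\), project \(\Gamma\) onto an alignment of \(X^{>r}\) with \(Y^{>r}\), and bound the expected cost of this projection edge by edge. Every edge of \(\Gamma\) is then charged a random amount, and linearity of expectation reduces the statement to a per-edge inequality.

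The projection keeps a gap edge exactly when its character survives, at unchanged cost. A diagonal edge pairing \(a\) with \(b\) has three possible fates:
\begin{itemize}
\item if both survive (\(g(a),g(b)>r\)), it stays a diagonal of cost \(w(a,b)\);
\item if neither survives, it vanishes at no cost;
\item if exactly one survives, say \(a\), it becomes a deletion of \(a\) of cost \(g(a)\), and symmetrically for \(b\).
\end{itemize}
The result is a valid alignment of the filtered strings, because the surviving characters keep their relative order. So \(\WED_w(X^{>r},Y^{>r})\) is at most the sum of these projected costs. Gap edges and surviving diagonals cost at most their original cost. The remaining task is to bound the expected cost of a diagonal whose two endpoints straddle the threshold.

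For a diagonal \((a,b)\), assume without loss of generality \(g(a)\ge g(b)\). A split occurs only when \(g(b)\le r<g(a)\), and then it costs \(g(a)\). Set \(\delta=g(a)-g(b)\); the reverse triangle inequality gives \(\delta\le w(a,b)\). Since \(r\) is uniform on \([R,2R)\), the split probability is at most \(\min\{1,\delta/R\}\). It is zero unless \(g(a)>R\) and \(g(b)<2R\). I then bound \(g(a)\) in two cases.
\begin{itemize}
\item If \(\delta\le g(b)\), then \(g(a)\le 2g(b)<4R\), so the expected split cost is at most \((\delta/R)\cdot 4R=4\delta\le 4w(a,b)\).
\item If \(\delta>g(b)\), then \(g(a)<2\delta\), and the expected cost is at most \(g(a)<2\delta\le 2w(a,b)\).
\end{itemize}
In both cases the expected cost is at most \(4w(a,b)\), or \(5w(a,b)\) if I am generous. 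This also covers the bound "original cost plus expected split cost," since a split diagonal no longer pays \(w(a,b)\).

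Summing over edges gives \(\mathbb E_r[\WED_w(X^{>r},Y^{>r})]\le 5d\). The main thing to check carefully is the constant in the straddling case, in particular that the bound \(g(a)\le\max\{2g(b),2\delta\}\) combines with \(g(b)<2R\) to give the factor \(4\). No step uses \(|X|=|Y|\), because the projection argument treats insertions and deletions symmetrically.
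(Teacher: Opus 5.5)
Your proof is correct and follows essentially the same route as the paper: project an optimal alignment, bound the straddling probability by \(\delta/R\le w(a,b)/R\) via the reverse triangle inequality, and treat one case where the heavier mass is below \(4R\) and one where the split cost is pointwise at most \(2w(a,b)\). The only difference is that you split cases on \(\delta\) versus the lighter mass rather than on the heavier mass versus \(4R\), which gives the same per-diagonal bound \(w(a,b)+4w(a,b)=5w(a,b)\).
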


\begin{proof}
Filter an optimal alignment as follows.  A diagonal becomes a
gap edge if exactly one of its characters survives; edges with no
surviving character are omitted.  This yields an alignment of
\(X^{>r}\) and \(Y^{>r}\).  Original gap edges do not increase
in cost.

Consider a diagonal pairing \(a\) with \(b\), and assume by symmetry that
\(g(a)\le g(b)\).  Its projected cost \(C_r\) is \(w(a,b)\)
if both characters survive, \(g(b)\) if only \(b\) survives,
and zero otherwise.  The reverse triangle inequality gives
\[
 \Pr[g(a)\le r<g(b)]
 \le\tfrac{g(b)-g(a)}{R}
 \le\tfrac{w(a,b)}{R}.
\]
If \(g(b)\le4R\), then
\[
 \mathbb E_r[C_r]
 \le w(a,b)+g(b)\Pr[g(a)\le r<g(b)]
 \le5w(a,b).
\]
If \(g(b)>4R\), then whenever only \(b\) survives,
\(g(a)\le r<2R<\frac12g(b)\), so
\(w(a,b)\ge g(b)-g(a)>\frac12g(b)\).
Thus, \(C_r\le2w(a,b)\) on every outcome in this case.

The filtered distance is at most the cost of the projected alignment.
Summing the expected costs of its edges gives
\[
 \mathbb E_r[\WED_w(X^{>r},Y^{>r})]\le5\WED_w(X,Y).\qedhere
\]
\end{proof}

An alignment cheaper than every gap cost can contain only
substitutions and matches.  This allows us to compute distances below a given
threshold exactly.  The next observation is the no-gap case of Kuszmaul's
low-distance observation~\cite[Observation~6.2]{Kuszmaul19}.

\begin{observation}[Distances below the minimum gap cost]
\label{lem:capped-low-distance}
Consider strings \(U,V\in \Sigma^*\) and a threshold \(R>0\).  If every character of
\(U\) and \(V\) has gap mass at least \(R\), then
\[
 \min\{\WED_w(U,V),R\}
\]
can be computed in \(O(|U|+|V|+1)\) time.
\end{observation}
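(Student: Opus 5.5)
The plan is to show that, under the hypothesis, an alignment of cost below \(R\) contains no gap edges, so it can exist only when \(|U|=|V|\), and then it must be the pure diagonal alignment. Every insertion or deletion edge costs the gap mass of a character, which is at least \(R\). Hence any alignment containing a gap edge costs at least \(R\), and its contribution to \(\min\{\WED_w(U,V),R\}\) is exactly captured by the cap \(R\). An alignment with no gap edges consists solely of diagonal edges. It exists only when \(|U|=|V|\), and then it is unique: it pairs \(U[j]\) with \(V[j]\) for every \(j\).

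The algorithm and the identity behind it follow from this. If \(|U|\ne|V|\), every alignment uses a gap edge, so \(\WED_w(U,V)\ge R\) and we return \(R\). If \(|U|=|V|\), we compute \(h:=\sum_{j}w(U[j],V[j])\) with \(|U|\) metric queries and additions, and return \(\min\{h,R\}\). To justify this, note that every alignment is either the diagonal one, of cost \(h\), or contains a gap edge, of cost at least \(R\). Therefore \(\WED_w(U,V)\ge\min\{h,R\}\). Conversely, \(\WED_w(U,V)\le h\) because the diagonal alignment exists, so
\[
\min\{\WED_w(U,V),R\}=\min\{h,R\}.
\]
The empty case \(U=V=\epsilon\) gives \(h=0\), which is consistent with this formula.

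For the running time, the length comparison and the sum take \(O(|U|+|V|+1)\) real-RAM operations, and the final comparison with \(R\) is constant time. There is no real obstacle here. The only point that needs care is the lower-bound direction, namely that every non-diagonal alignment costs at least \(R\). This holds because all edge costs are nonnegative and at least one gap edge of cost at least \(R\) is present. Nothing beyond the grid formulation from \cref{sec:weighted-alignments} is needed.
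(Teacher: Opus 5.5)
Your proposal is correct and matches the paper's proof: return \(R\) when \(|U|\ne|V|\), and otherwise return \(\min\{e,R\}\) for the diagonal cost \(e=\sum_i w(U[i],V[i])\), because any alignment cheaper than \(R\) contains no gap edges. Your explicit two-sided check that \(\min\{\WED_w(U,V),R\}=\min\{e,R\}\) spells out slightly more than the paper does, but the argument is the same.
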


\begin{proof}
If the strings have different lengths, every alignment contains a
gap operation of cost at least~\(R\), so return \(R\).
Otherwise, compute
\[
 e=\sum_{i=0}^{|U|-1}w(U[i],V[i])
\]
and return \(\min\{e,R\}\).  An alignment of cost below \(R\)
contains no gap operations, so its cost must be \(e\).
Thus, the returned value is \(\WED_w(U,V)\) when that distance is
below \(R\), and is \(R\) otherwise.  The computation is a single
scan.
\end{proof}

\subsection{Testing a distance scale}
By Markov's inequality, a sufficiently small distance remains below the
chosen scale after random simplification with constant probability.
\Cref{lem:capped-low-distance} lets us recognize this event in linear
time.  Repetition yields a scale test: a \textnormal{\textsc{Close}}
answer always certifies an upper bound,
while a \textnormal{\textsc{Far}} answer gives a lower bound with high probability.
This test adapts the gap test in Kuszmaul's proof of
Theorem~6.1~\cite[Appendix~D]{Kuszmaul19} to the near-linear-time endpoint.

\begin{lemma}[Testing a distance scale]
\label{lem:coarse-scale-test}
Given strings \(X,Y\) of total length \(N\) and a scale \(R>0\),
a randomized \(O(N\log N)\)-time algorithm returns either \textnormal{\textsc{Close}} or \textnormal{\textsc{Far}} with the
following guarantees:
\begin{enumerate}
\item On every outcome, if the algorithm returns \textnormal{\textsc{Close}}, then \(\WED_w(X,Y)<3NR\).
\item If \(\WED_w(X,Y)<R/10\), then the algorithm returns \textnormal{\textsc{Close}} with high probability.
\end{enumerate}
\end{lemma}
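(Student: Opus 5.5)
The test repeats one basic trial \(\Theta(\log N)\) times and answers \textsc{Close} if any trial succeeds. In a single trial, draw \(r\) uniformly from \([R,2R)\) (with a capped number of random bits, as in the earlier samplers) and form the filtered strings \(X^{>r}\) and \(Y^{>r}\) in linear time. Every surviving character has gap mass greater than \(r\ge R\). Apply \cref{lem:capped-low-distance} with threshold \(R\) to obtain \(\min\{\WED_w(X^{>r},Y^{>r}),R\}\). The trial succeeds exactly when this value is below \(R\). In that case the computation is exact, so we also learn \(e:=\WED_w(X^{>r},Y^{>r})<R\). Each trial takes \(O(N)\) time, so all trials together take \(O(N\log N)\).

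For the first guarantee, let a trial succeed with value \(e<R\). Start from an optimal alignment of the filtered strings, which costs \(e\). Restore every removed character of \(X\) as a deletion and every removed character of \(Y\) as an insertion. Each removed character has gap mass at most \(r<2R\), and at most \(N\) characters are removed. The result is an alignment of \(X\) and \(Y\) of cost at most \(e+2RN<R+2NR\le 3NR\), using \(N\ge1\). This bound holds on every outcome, so \textsc{Close} is always correct.

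For the second guarantee, suppose \(\WED_w(X,Y)<R/10\). By \cref{lem:kuszmaul-random-simplification}, \(\mathbb E_r[\WED_w(X^{>r},Y^{>r})]\le 5\WED_w(X,Y)<R/2\). Markov's inequality then gives \(\Pr[\WED_w(X^{>r},Y^{>r})\ge R]<1/2\), so each uncapped trial succeeds with probability greater than \(1/2\). With \(C\log N\) independent trials, all of them fail with probability at most \(N^{-C}\). The bit caps change the outcome only with inverse-polynomial probability, and a union bound absorbs that as in \cref{lem:sampled-centers}.

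The main obstacle is drawing \(r\) uniformly from a continuous interval on the real RAM without floor operations. I would sidestep it by noting that the proof needs only the probability bound \(\Pr[g(a)\le r<g(b)]\le (g(b)-g(a))/R\) from \cref{lem:kuszmaul-random-simplification}. Instead of a continuous draw, take \(r=R(1+j/2^{b})\) with \(j\) uniform among \(b=\Theta(\log N)\)-bit integers. This is computable with real arithmetic, and the bound then holds up to an additive \(2^{-b}\) term per diagonal. Since there are at most \(N\) diagonals, each contributing at most \(g(b)\), I would bound this extra error by a constant factor. If that bookkeeping turns out to be delicate, I would instead place an atom at each of the at most \(N\) character masses inside \([R,2R)\) and sample exactly over the induced intervals, which are proportional to their lengths.
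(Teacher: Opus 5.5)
Your proposal is correct and follows the paper's proof. The ingredients match: repeated random simplification checked with \cref{lem:capped-low-distance}, the restoration bound \(\WED_w(X,Y)<R+2NR\) for \textsc{Close}, and \cref{lem:kuszmaul-random-simplification} combined with Markov's inequality and \(\Theta(\log N)\) repetitions. Your fallback sampler, which samples exactly over the intervals of \([R,2R)\) cut out by the character masses, is precisely the paper's implementation; the paper caps each draw at \(\lceil\log_2N\rceil+2\) bits and accepts a per-trial success probability above \(1/4\).

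If you use the grid thresholds instead, the extra error is not a constant factor of \(\WED_w(X,Y)\). It is an additive \(4R2^{-b}\) for each diagonal whose heavier character has mass at most \(4R\); the other case of \cref{lem:kuszmaul-random-simplification} is deterministic once \(r<2R\), so those diagonals add nothing. Choosing \(2^{b}\ge16N\) therefore keeps \(\mathbb E[\WED_w(X^{>r},Y^{>r})]<3R/4\), and Markov's inequality still gives a constant success probability per trial.
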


\begin{proof}
\emph{The algorithm.}
Make
\(t=c\lceil\log_2N\rceil\) independent threshold draws, where
\(c\) is a sufficiently large integer constant chosen below.  For each drawn threshold
\(r\in[R,2R)\), form \(X^{>r},Y^{>r}\) and let \(d_r\) denote
their distance.  Use \cref{lem:capped-low-distance} to compute
\(h=\min\{d_r,R\}\).
It is applicable because every surviving character has gap mass
greater than \(r\ge R\).  The test returns \textnormal{\textsc{Close}} if some \(h<R\),
and returns \textnormal{\textsc{Far}} otherwise.

To implement a draw, partition \([R,2R)\) at the gap masses of input
characters.  This gives at most \(N+1\) half-open intervals on each
of which both simplified strings are fixed.  Use Han and Hoshi's
interval algorithm~\cite[Section~IV]{HanHoshi97} to sample an interval
with probability equal to its length divided by \(R\), and take
its left endpoint as the threshold.  The uncapped procedure produces
exactly the same distribution of simplified strings as a uniform
real threshold.  Cap each draw at \(q=\lceil\log_2N\rceil+2\)
random bits and use \(r=R\) on truncation.  The probability of
truncation is at most \(N2^{-q}\le\tfrac14\).

\emph{Correctness.}
Denote \(d=\WED_w(X,Y)\).
Every \textnormal{\textsc{Close}} answer certifies \(d<3NR\).  Indeed, a witness
\(h<R\) equals \(d_r\).  Restoring the deleted characters costs less
than \(2NR\), so the triangle inequality gives
\[
 d<h+2NR<R+2NR\le3NR.
\]

Now, suppose \(d<R/10\).  For an uncapped draw,
\cref{lem:kuszmaul-random-simplification} gives
\[
 \mathbb E[d_r]<\tfrac{R}{2}.
\]
Markov's inequality gives \(d_r<R\), and hence a witness for \textnormal{\textsc{Close}},
with probability greater than \(\tfrac12\).  The bounded draw produces
the same witness unless it is truncated, an event of probability
at most \(\frac14\).
Thus, each bounded draw produces a witness for \textnormal{\textsc{Close}} with probability
greater than \(\frac14\), and all \(t\) independent samples fail to produce
such a witness with probability at most \((3/4)^t\).  Choosing
\(c\) sufficiently large makes this failure probability an arbitrarily
small inverse polynomial in \(N\).

\emph{Running time.}
Sorting the gap masses and forming the sampling intervals costs
\(O(N\log N)\) time.  Binary search among the interval boundaries
implements each bounded draw in \(O(\log^2 N)\) time.  Forming the
simplified strings and applying \cref{lem:capped-low-distance} takes
\(O(N)\) time per draw.
The \(t=O(\log N)\) draws therefore take
\(O((N+\log^2 N)\log N)=O(N\log N)\) time in total.
\end{proof}

\subsection{Bounding the distance using character gap masses}
The algorithm behind \cref{lem:aspect-free-coarse-bracket} binary-searches
the input gap masses and their halves using the scale test of
\cref{lem:coarse-scale-test}.  Consecutive candidate scales can differ
by an arbitrarily large factor, so the resulting distance bounds need
not have multiplicative width \(O(N)\).  The following lemma, which
we believe to be missing in~\cite{Kuszmaul19}, handles this difficulty
by exploiting an interval of thresholds throughout which the
simplified strings remain unchanged.

\begin{lemma}[Distance bounds from a fixed simplification]
\label{lem:coarse-gap-bracket}
Given strings \(X,Y\) of total length \(N\) and thresholds \(0<A<B\) such
that no input character has gap mass in \((A,2B)\), a deterministic
\(O(N)\)-time algorithm returns an interval \([L,U]\) with
\(0<L\le U\) satisfying the following guarantees:
\begin{enumerate}
\item The multiplicative width \(U/L\) is at most \(15N\).
\item If \(A/10\le\WED_w(X,Y)<3NB\), then
      \(L\le\WED_w(X,Y)\le U\).
\end{enumerate}
\end{lemma}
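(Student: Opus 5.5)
The plan is to split every character into light (gap mass at most \(A\)) and heavy (gap mass at least \(2B\)); by hypothesis no other kind exists. Let \(X'=X^{>A}\) and \(Y'=Y^{>A}\) be the heavy subsequences, and let \(\lambda:=\mu(X)-\mu(X')+\mu(Y)-\mu(Y')\) be the total light mass. I would compute \(\lambda\), compare \(|X'|\) with \(|Y'|\), and compute \(e:=\sum_i w(X'[i],Y'[i])\) when the two lengths agree. All of this takes \(O(N)\) time.

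\emph{Upper bounds.} Deleting and inserting the light characters shows \(d:=\WED_w(X,Y)\le \WED_w(X',Y')+\lambda\), where \(\lambda\le NA\). For \(d\) to lie below \(3NB\), we need structural information about \(X',Y'\). Since every heavy character has mass at least \(2B\), any alignment of \(X'\) with \(Y'\) that uses a gap edge costs at least \(2B\). I would handle two cases.

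\textbf{Case 1: \(|X'|=|Y'|\) and \(e<B\).} Here I would show \(d\ge e-NA\) and \(d\le e+NA\). The upper bound is immediate. For the lower bound, take an optimal alignment of \(X\) and \(Y\) and project it onto the heavy characters as in \cref{obs:target-restoration}, applied on both sides. A removed diagonal pairing a heavy character with a light one increases the cost by at most twice the light mass. Hence \(\WED_w(X',Y')\le d+2\lambda\le d+2NA\). If \(\WED_w(X',Y')<2B\), the projected alignment uses no gaps, so \(\WED_w(X',Y')=e\). I would then output an interval built from \(e\) and \(A\), for example \(L=\max\{A/10,\,(e-2NA)\}\) suitably clamped and \(U=e+NA\). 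When \(e\) is small relative to \(NA\), I would instead use \(L=A/10\) and \(U=e+NA\le B+NA\). This is where the width must be controlled: \(U/L\) is \(O(N)\) when \(e=O(NA)\). When \(e\gg NA\), the interval \([e-2NA,\,e+NA]\) has width \(O(1)\).

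\textbf{Case 2: lengths differ or \(e\ge B\).} If \(d<2B-2NA\), the projection argument above yields a gap-free alignment of cost less than \(2B\), forcing equal lengths and \(\WED_w(X',Y')=e\le d+2NA\). In the remaining situation I expect \(d\) to be at least about \(B-2NA\). Combined with the hypothesis \(d<3NB\), this should give the interval \([\Theta(B),3NB]\) of width \(O(N)\). The delicate regime is \(B\approx NA\), where \(B-2NA\) may be nonpositive. There I would fall back to \(L=A/10\) and \(U=3NB\), whose width is \(O(NB/A)\). That width is \(O(N)\) only if \(B=O(A)\), so I must rule out the combination of large \(B/A\) with \(B\lesssim NA\).

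\emph{Main obstacle.} The hard step is this case analysis on the relative sizes of \(A\), \(B\), \(NA\), and \(e\), carried out so that every branch outputs width at most \(15N\). My first attempt would split on whether \(B\le 4NA\). If \(B\le 4NA\), I would not rely on \(B\) at all and would use only the light-mass bound, taking \(L=A/10\) and \(U\) of order \(NA\) plus \(e\) or plus gap costs. If \(B>4NA\), the heavy structure separates cleanly: either \(d\ge B/2\), giving \([B/2,\,3NB]\) of width \(6N\), or \(d=\Theta(e)\pm 2NA\) with \(e<B\). I would then tune the constants to reach \(15N\).
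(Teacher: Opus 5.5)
\textbf{There is a genuine gap.} You located it yourself, but it cannot be closed the way you hope. The regime where $B/A$ is large while $B\lesssim NA$ cannot be ruled out: the hypotheses allow, for example, $A=1$ and $B=N/2$. In the application, $A<B$ are merely consecutive candidate scales, and nothing relates $B$ to $NA$.

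Your argument compares $d=\WED_w(X,Y)$ with $\WED_w(X^{>A},Y^{>A})$ only additively, up to $\pm O(\lambda)\le\pm O(NA)$. Suppose the heavy subsequences have different lengths, so $\WED_w(X^{>A},Y^{>A})\ge 2B$. Then your lower bound $d\ge 2B-2NA$ is vacuous, and all you know is $A/10\le d<3NB$. That interval has width $\Theta(NB/A)$, which can be $\Theta(N^2)$.

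Your fallback for $B\le 4NA$, namely $L=A/10$ with $U$ of order $NA$ plus $e$ or plus gap costs, does not repair this. With $L=A/10$ you need $U\le 1.5NA$. Yet $d$ can be of order $NB\gg NA$ here, for instance when many heavy characters of $X$ have no counterpart in $Y$. Adding the heavy distance to $U$ is not an option either: once it exceeds the cap, it is not computable in $O(N)$ time.

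\textbf{The missing idea.} Projecting an optimal alignment onto the heavy characters loses only a constant \emph{factor}, not an additive $O(\lambda)$. Consider a diagonal pairing a light $a$ with a heavy $b$. Since $g(a)\le A<B\le g(b)/2$, we get $w(a,b)\ge g(b)-g(a)\ge g(b)/2$. Replacing this diagonal by the gap edge of $b$ therefore at most doubles its cost. Light--light diagonals and light gap edges simply disappear, and heavy--heavy edges are unchanged. Hence $\WED_w(X^{>A},Y^{>A})\le 2d$. This is exactly what the empty mass range $(A,2B)$ buys you. The paper obtains the weaker $\le 5d$ from \cref{lem:kuszmaul-random-simplification}, noting that every threshold $r\in[A,2A)$ yields the same filtered strings.

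\textbf{How the proof then goes.} No case analysis on $A$, $B$, $NA$, $e$ is needed:
\begin{enumerate}
\item Compute $e=\min\{\WED_w(X^{>A},Y^{>A}),B\}$ via \cref{lem:capped-low-distance}. This applies because every surviving character has mass at least $2B$.
\item Return $L=\max\{A/10,e/5\}$ and $U=15NL$.
\item $L\le d$ follows from the hypothesis $A/10\le d$ together with $e\le 5d$.
\item If $e=B$, then $d<3NB=3Ne\le 15NL$.
\item Otherwise, restoring the light characters gives $d\le e+NA\le 5L+10NL\le 15NL$.
\end{enumerate}

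Even in your easier branch $B>4NA$, the purely additive estimates would not reach the constant $15N$. At $e\approx\lambda\approx NA$ they only give $[A/10,2NA]$, of width $20N$. The multiplicative bound fixes this as well.
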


\begin{proof}
Form \(X^{>A}\) and \(Y^{>A}\).  Every surviving character has
gap mass at least \(2B\), so we can use \cref{lem:capped-low-distance}
to compute
\[
 e=\min\{\WED_w(X^{>A},Y^{>A}),B\}.
\]
Return \([L,U]\), where
\[
 L=\max\{A/10,e/5\}\qquad\text{and}\qquad U=15NL.
\]
The computation takes \(O(N)\) time, and the interval satisfies
\(0<L\le U\) and \(U/L=15N\).

For containment, suppose that
\(A/10\le d:= \WED_w(X,Y)<3NB\).  Every threshold \(r\in[A,2A)\) produces
the same simplified strings, so \cref{lem:kuszmaul-random-simplification}
gives \(e\le5d\).  Hence, \(L\le d\).
If \(e=B\), then \(d<3NB=3Ne\le15NL=U\).
Otherwise, \(e=\WED_w(X^{>A},Y^{>A})\), and restoring
the deleted characters costs at most \(NA\).  Thus,
\[
 d\le e+NA\le5L+10NL\le15NL=U.\qedhere
\]
\end{proof}

We now combine \cref{lem:coarse-gap-bracket} with the scale test of \cref{lem:coarse-scale-test}.
Binary search leaves two consecutive candidate scales.  If they differ
by at most a factor of two, the scale-test bounds suffice; otherwise,
the preceding lemma supplies an interval of multiplicative width \(O(N)\).

\coarsedistanceinterval*

\begin{proof}
\emph{The algorithm.}
Form and sort the following \(O(N)\) distinct candidate scales
\[
 \mathcal S:=\{g(a)/2,g(a):a\text{ occurs in }X\text{ or }Y\},
\]
and initialize \(A=\min\mathcal S\) and \(B=\max\mathcal S\).
Thus, \(2A\) and \(B\) are the smallest and largest input gap masses.
Use \cref{lem:capped-low-distance} to compute
\(e=\min\{\WED_w(X,Y),2A\}\).
If \(e<2A\), return \([e,e]\).

While \(A\) and \(B\) are not consecutive in \(\mathcal S\), choose a median
\(R\in \mathcal S\) of the candidate scales strictly between them and invoke
\cref{lem:coarse-scale-test} with fresh randomness.
Set \(A\coloneqq R\) if the answer is \textnormal{\textsc{Far}}, and set
\(B\coloneqq R\) if it is \textnormal{\textsc{Close}}.
When the search ends, return \([A/10,3NB]\) if \(B\le2A\).
Otherwise, return the interval supplied by
\cref{lem:coarse-gap-bracket} for these endpoints.

\emph{Correctness.}
Let \(d=\WED_w(X,Y)>0\).  If the initial check returns, its
interval is \([d,d]\).  Otherwise, the initial endpoints satisfy
\(2A\le d\le NB\).
For the analysis, fix independent random choices for the test at
each candidate scale, revealing them only when that scale is queried.
Every \textnormal{\textsc{Close}} answer deterministically satisfies \(d<3NR\).
By \cref{lem:coarse-scale-test} and a union bound over the
\(O(N)\) candidate scales, with high probability, every
\textnormal{\textsc{Far}} answer satisfies \(R/10\le d\).
On this event, the bounds
\[
 A/10\le d<3NB
\]
hold initially and are preserved by every update.

If the final endpoints satisfy \(B\le2A\), the returned interval
has multiplicative width at most \(60N\) and contains \(d\)
whenever these bounds hold.
Otherwise, there is no input gap mass in \((A,2B)\): a mass in
\((A,B)\) would itself be a candidate scale strictly between the
endpoints, and a mass in \([B,2B)\) would have its half strictly
between them.  Both contradict the termination condition.
Thus, \cref{lem:coarse-gap-bracket} applies and returns an interval
of multiplicative width at most \(15N\) that contains \(d\)
whenever the endpoint bounds hold.  Hence, the width is at most
\(60N\) on every outcome, and containment holds with high probability.

\emph{Running time.}
Sorting and the capped-distance computation take \(O(N\log N)\) time.
Each query halves the number of candidate scales strictly between
\(A\) and \(B\), so the search makes \(O(\log N)\) calls,
each taking \(O(N\log N)\) time.  The final application of
\cref{lem:coarse-gap-bracket}, if needed, takes \(O(N)\) time.
The total is \(O(N\log^2 N)\).
\end{proof}

\subsection{Dyadic budget selection}

We complete the proof of the main result by doubling budgets from the lower endpoint of the interval produced by \cref{lem:aspect-free-coarse-bracket}.
On the event that the interval contains the distance, one tested budget
lies between the distance and twice the distance.  This converts the
supplied-budget guarantee into the desired relative approximation.

\mainresult*

\begin{proof}
Handle empty or equal strings exactly.  Otherwise, apply
\cref{lem:aspect-free-coarse-bracket} to obtain an interval \([L,U]\) of multiplicative width $U/L = O(N)$, and construct a set of budget guesses
\[
 \mathcal D=\{L,2L,\ldots,2^JL\},
 \qquad\text{where}\qquad
 J=\min\{j\in\mathbb Z_{\ge0}:2^jL\ge U\}.
\]
There are \(O(1+\log(U/L))=O(\log N)\) guesses; repeated doubling constructs them in $O(\log N)$~time.

For each guess \(D\in\mathcal D\), invoke
\cref{thm:weighted-cdgks-upper-bound-primitive} on the original
\(X,Y,w\), with length bound \(N\), budget \(D\), fresh randomness,
and accuracy \(\xi=\varepsilon/3\).
Return \(Q=\min_{D\in\mathcal D}Q(D)\).  Every obtained answer is
a certified upper bound for the original instance, regardless of
its budget and random choices.  Thus, \(Q\) is a certified
upper bound even if the interval \([L,U]\) does not contain the distance.

On the event that \([L,U]\) contains the distance, put
\(d=\WED_w(X,Y)>0\) and let \(D_\star\) be the smallest member
of \(\mathcal D\) at least \(d\).  Then,
\[
 d\le D_\star<2d.
\]
Conditional on the computed interval, this budget is fixed before \cref{thm:weighted-cdgks-upper-bound-primitive} uses its fresh randomness.  Its approximation guarantee
therefore gives, with high probability,
\[
 Q\le Q(D_\star)
 \le3d+\xi(d+D_\star)
 \le(3+3\xi)d=(3+\varepsilon)d.
\]
Choose the failure exponents so that both guarantees hold with the
prescribed high probability.

The algorithm of \cref{lem:aspect-free-coarse-bracket} takes \(O(N\log^2 N)\) time.
Because \(\xi\) is an absolute constant times~\(\varepsilon\),
each application of \cref{thm:weighted-cdgks-upper-bound-primitive} takes
\[
 \Ohtilde(N^{8/5}/\varepsilon^{16/5})
\]
time.
Summing over the \(O(\log N)\) guesses proves the claimed total running time.
\end{proof}

\appendix
\section{Finite-precision implementation}
\label{sec:finite-precision}

In this appendix, we justify the word-RAM implementation stated in the introduction.
The machine supports multiplication and integer division on words.
Rational quantities are represented exactly by integer numerators
and denominators. No costs or distances are rounded.

\begin{lemma}[Bounded bit length]
\label{lem:bounded-precision}
Let \(0<\varepsilon\le1\) be an inverse power of two, and let
\(B\ge1\) be an integer.
If metric queries return integers in \([0\dd 2^B)\), the algorithm of
\cref{cor:weighted-cdgks-scale-selection} can be implemented using
integers of \(B+O(\log(N/\varepsilon))\) bits, with a constant number
of integer operations per real arithmetic operation.
\end{lemma}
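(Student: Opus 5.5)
The plan is to show that every real quantity the algorithm manipulates is an exact rational number whose numerator and denominator have $B+O(\log(N/\varepsilon))$ bits. Each real operation is then simulated by a constant number of integer operations on such pairs.

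First, I would list the sources of reals in the algorithm.
\begin{itemize}
\item[] \emph{Metric values} $w(a,b)$ and gap masses $g(a)$, which are integers below $2^B$.
\item[] \emph{Sums of at most $O(N)$ such values}: prefix masses, $\mu(\cdot)$, edit distances, and path costs in all edit grids, including Klein's structure and the shortcut sweep.
\item[] \emph{Derived parameters}: $\zeta=c_\zeta\varepsilon$ and $\eta$, $\Delta=\zeta FW/(N\Lambda)$, the offsets in $\mathcal R$ and the levels in $\mathcal L$, the thresholds $t=\zeta r/W$, and the radii $r(U)=Q/(1+\zeta)$.
\item[] \emph{Scale-search quantities}: the candidate scales $g(a)/2$, the sampled thresholds (left endpoints of intervals, hence gap masses or $R$), the budgets $2^jL$ with $L=\max\{A/10,e/5\}$, and the interval radii $Lt_u r/\zeta$.
\end{itemize}

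Next, I would fix a common denominator. The constants $c_\zeta$ and $c_\eta$ are chosen as rationals (powers of two). Since $\varepsilon$ is an inverse power of two, every derived quantity is an integer combination of metric values, divided by a product of $O(1)$ factors. Each factor is $1+\zeta$, $W$, $N$, $\Lambda$, $K$, $10$, $5$, or a power of two at most $\operatorname{poly}(N/\varepsilon)$. The numerators are at most $\operatorname{poly}(N/\varepsilon)\cdot 2^B$ in magnitude, and so fit in $B+O(\log(N/\varepsilon))$ bits.

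Comparisons between such rationals are done by cross-multiplication, which doubles the bit length only transiently. I would verify that no quantity is obtained by chaining unboundedly many divisions. Distances are only ever added, or compared with thresholds that carry one of these bounded denominators. In particular, the radius $r$ enters only through $t=\zeta r/W$ and through interval endpoints, never iteratively. The quantity $2^j L$ has $j=O(\log N)$, so it adds only $O(\log N)$ bits.

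The main obstacle is the random sampling. Han--Hoshi draws halve intervals within $[0,\sigma_v)$ or $[R,2R)$, and the capped draws use $O(\log(N/\varepsilon))$ bits. The dyadic subdivisions of a weight sum therefore have denominators $2^{O(\log(N/\varepsilon))}$, which stays within the budget. I would check each sampling procedure, namely the centers, rows by radius, and scale thresholds, against this bound. Finally, the floor and ceiling binary searches operate on integers, which completes the simulation.
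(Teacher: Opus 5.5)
Your overall plan matches the paper's. It keeps exact rationals with bounded numerators and fixed small denominators, compares by cross-multiplication, forms search endpoints afresh at each tree level, and samples dyadically against unnormalized weights. Two steps, however, do not yet prove the statement as written.

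First, the lemma permits only integers of $B+O(\log(N/\varepsilon))$ bits, so cross products may not ``double the bit length only transiently.'' Under your opening invariant, where numerator and denominator each have $B+O(\log(N/\varepsilon))$ bits, cross products would reach $2B+O(\log(N/\varepsilon))$ bits. The invariant you need, and the one the paper establishes, is stronger: numerators have $B+O(\log(N/\varepsilon))$ bits while denominators have only $O(\log(N/\varepsilon))$ bits. Then every cross product multiplies one numerator by one denominator. Your common-denominator paragraph essentially supplies this, but it obliges you to check that the implementation never divides by a data-dependent quantity. For instance, the batched gap oracle, read literally, chooses its construction by evaluating $|I|/t=W|I|/(\zeta r)$, whose denominator contains the $(B+O(\log N))$-bit certificate behind $r$. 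The paper instead compares $\zeta r(|T|+1)$ with $W(r+|I|)$. Similarly, the samplers must never store normalized probabilities such as $r(U_i)/\sigma_v$.

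Second, the planar oracle is not covered by ``path costs in all edit grids.'' Klein's preprocessing maintains slacks, each an edge length plus a difference of root distances. It changes the root by letting one boundary edge length move between its original value and a negative shortest-path distance, and it keeps lazily accumulated additive offsets in dynamic trees. In its standard presentation it also assumes unique shortest paths, typically enforced by perturbing lengths. You need to argue that all these values are integers of magnitude $2^BN^{O(1)}$. This uses three facts: the augmented grid has $N^{O(1)}$ edges, their integer lengths (including the reverse edges of cost $H$) are $O(N2^B)$, and there are only polynomially many updates. Paths in this graph are not sums of $O(N)$ metric values. You also need ties resolved combinatorially, as in the Klein--Mozes implementation: among minimum-slack edges, it chooses the one farthest from the outer face along the maintained dual-tree path. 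A numerical perturbation would break the bit bound.
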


\begin{proof}
We first bound the integral distance computations, then the rational
quantities used in thresholding and sampling.

\emph{Integral quantities.}
All strings used by the algorithm are subsequences of the input
strings.  Prefix masses and exact edit distances are therefore
integers of magnitude \(O(N2^B)\).  The same bound applies to the
restoration costs and direct certificates of
\cref{lem:short-string-approximation}, and to the fallback labels of
\cref{lem:sampled-centers}.  The sums and differences used by the
shortcut algorithm of \cref{prop:implicit-weighted-shortcut-consumer}
have magnitude at most \(2^BN^{O(1)}\).
All these quantities require \(B+O(\log N)\) bits.

\emph{Planar distance queries.}
We must also bound the temporary values used inside the planar
oracle of \cref{lem:one-row-mssp-scan}.  Its augmented edit grids
have \(N^{O(1)}\) vertices and integer edge lengths of magnitude
\(O(N2^B)\).  Klein's preprocessing~\cite{Klein05} changes the
root of a maintained shortest-path tree.  During each root change,
one edge length varies between a negative shortest-path distance
and its original length.  Thus, the current tree-path lengths and
\emph{slacks}, which are edge lengths plus differences of
root-to-vertex path lengths, have magnitude \(2^BN^{O(1)}\).
Their updates use addition, subtraction, and comparison
\cite[Sections~7.7 and~7.9]{KleinMozesDraft}.
The accumulated updates stored in the dynamic trees have the same
magnitude bound, since there are only polynomially many updates.
Ties require no numerical perturbation: among edges of minimum
slack, we can choose the one farthest from the outer face along the
maintained dual-tree path, whose vertices represent faces
\cite[Section~7.8]{KleinMozesDraft}.

\emph{Thresholds and radii.}
The internal accuracy parameters are fixed rational multiples of
\(\varepsilon\), so their numerators and denominators have
\(O(\log(2/\varepsilon))\) bits.
We show that the remaining rational quantities have numerators of
\(B+O(\log(N/\varepsilon))\) bits and denominators of
\(O(\log(N/\varepsilon))\) bits.
The coarse interval and dyadic budgets in \cref{sec:aspect-reduction}
use integer mass and distance sums, division by fixed integers,
and \(O(\log N)\) doublings, so they satisfy these bounds.

For the window construction of \cref{sec:short-pieces-shared-windows},
recall that \(F\) is the total gap mass of the local strings,
\(W\le N\) is the maximum row length, \(\zeta=\Theta(\varepsilon)\)
is the local accuracy parameter, and
\(\Lambda=O(\log(N/\varepsilon))\) is the positive integer normalization parameter.
The mass resolution is \(\Delta=\zeta FW/(N\Lambda)\).
Write \(\zeta=a/b\) with positive integers \(a\) and \(b\) of
\(O(\log(2/\varepsilon))\) bits, and put \(K=\lceil1/\zeta\rceil\).
At dyadic level \(i\), the offsets in the proof of
\cref{lem:shared-target-windows} have the form
\[
 2^i\Delta\left(1+\frac jK\right)
 =\frac{aFW\,2^i(K+j)}{bN\Lambda K},
 \qquad j\in[0\dd K).
\]
Here, \(K=O(1/\varepsilon)\) and
\(2^i=O(N\Lambda/(\zeta W))\).
Since \(F=O(N2^B)\), the displayed numerator and denominator
satisfy the stated bit bounds.  These offsets select window
boundaries; they never become edge lengths.

The radius \(r(U)\) assigned to a row \(U\) by
\cref{lem:sampled-centers} is an integer direct certificate divided
by \(1+\zeta\).  The filtering thresholds and search endpoints in
\cref{lem:batched-radius-test,lem:sparse-propagation}
use these radii scaled by \(\zeta\), \(1/\zeta\),
polynomially bounded integer factors, or \(1/W\).
The search intervals are centered at sums and differences of original
integer prefix masses.  Each level therefore forms its endpoints
anew, rather than scaling endpoints inherited from its parent.
Thus, denominators do not accumulate across levels, and these
quantities satisfy the same bit bounds.

For the target \(T\), radius \(r>0\), and starting interval \(I\)
of \cref{lem:batched-radius-test}, we choose between the two
preprocessing bounds by comparing \(\zeta r(|T|+1)\) with
\(W(r+|I|)\).  This avoids forming \(|I|/r\); both expressions
satisfy the same numerator and denominator bounds.

\emph{Sampling and integer parameters.}
In the radius-weighted sampling of \cref{lem:sparse-row-sampling},
we cancel the common factor \(1/(1+\zeta)\) and use integer direct
certificates as weights.  The interval sampler compares cumulative
weights with dyadic subintervals of the total weight.
It uses at most \(O(\log(N/\varepsilon))\) random bits per draw.
Clearing the resulting powers of two therefore gives comparisons
between \(B+O(\log(N/\varepsilon))\)-bit integers.
The threshold sampler in the proof of \cref{lem:coarse-scale-test}
is called at integer or half-integer scales by
\cref{lem:aspect-free-coarse-bracket}.  Its interval lengths are
therefore multiples of \(1/2\); multiplying them by two gives
integer weights.  The uniform center sampling of
\cref{lem:sampled-centers} uses unit weights.
None of these implementations stores normalized probabilities.

Rounded logarithms are computed by doubling; ceilings and integer
roots are obtained by integer searches, as discussed in \cref{sec:model}.
All counts, indices, and operation limits are polynomially bounded
in \(N/\varepsilon\) on every random outcome.
Fractions need not be reduced: comparisons use cross multiplication,
and scaling uses the parameter factors described above.
Each cross product contains only one numerator and one denominator,
so it still has \(B+O(\log(N/\varepsilon))\) bits.
Thus, each operation uses a constant number of integer operations
within the claimed bit bound.
\end{proof}

\begin{corollary}[Word-RAM implementation]
\label{cor:word-ram}
Let \(0<\varepsilon\le1\) be a rational occupying \(O(1)\) words
on a word RAM with \(O(\log(N/\varepsilon))\)-bit words.
Suppose each metric query takes constant time and returns a
nonnegative integer that fits in one machine word.
There is a randomized algorithm on this machine that has the
approximation and probability guarantees of
\cref{cor:weighted-cdgks-scale-selection} and runs in
\(\Ohtilde(N^{8/5}/\varepsilon^{16/5})\) time on every random outcome,
with hidden constants independent of \(\varepsilon\).
\end{corollary}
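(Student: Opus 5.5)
The plan is to reduce \cref{cor:word-ram} to \cref{lem:bounded-precision} together with the real-RAM result \cref{cor:weighted-cdgks-scale-selection}. The only gap is that \cref{lem:bounded-precision} assumes that \(\varepsilon\) is an inverse power of two and that costs lie in \([0\dd 2^B)\), while the corollary allows an arbitrary rational \(\varepsilon\) and word-size integer costs.

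First, replace \(\varepsilon\) by \(\varepsilon'=2^{-j}\), where \(j\) is the least integer with \(2^{-j}\le\varepsilon\). Since \(\varepsilon\) occupies \(O(1)\) words, \(j=O(\log(1/\varepsilon))=O(\log(N/\varepsilon))\), and \(j\) is found by repeated doubling with cross-multiplied comparisons. Then \(\varepsilon/2<\varepsilon'\le\varepsilon\). A \((3+\varepsilon')\)-approximation is also a \((3+\varepsilon)\)-approximation, and the running time \(\Ohtilde(N^{8/5}/\varepsilon'^{16/5})\) exceeds \(\Ohtilde(N^{8/5}/\varepsilon^{16/5})\) only by a constant factor, since \((\varepsilon/\varepsilon')^{16/5}\le 2^{16/5}\). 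Second, take \(B\) to be the word length, so that every metric answer lies in \([0\dd 2^B)\) with \(B=O(\log(N/\varepsilon))\). \Cref{lem:bounded-precision} then shows that every real register of the algorithm can be simulated by integers of \(B+O(\log(N/\varepsilon))=O(\log(N/\varepsilon))\) bits, that is, \(O(1)\) machine words. Each real operation then costs \(O(1)\) word operations. This includes comparisons by cross multiplication of \(O(1)\)-word numbers, which uses a constant number of word multiplications.

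Integer registers of the real RAM already have \(O(\log(N/\varepsilon))\) bits and map directly to \(O(1)\) words. The floors, ceilings, and searches described in \cref{sec:model} are carried out as stated. Random bits and metric queries take constant time by assumption. The simulation therefore multiplies the operation count of \cref{cor:weighted-cdgks-scale-selection} by a constant. It also preserves every outcome-wise guarantee, namely certification and the worst-case operation caps, and it preserves the probability distribution of the output. Hence the approximation and probability guarantees carry over unchanged, and the hidden constants stay independent of \(\varepsilon\).

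The main obstacle is to check that nothing in the algorithm silently relies on real-RAM features beyond those covered by \cref{lem:bounded-precision}. In particular, the multiplications within a cross product must stay within \(O(1)\) words, and the quantities derived from \(\varepsilon'\) must keep the rational form that the lemma assumes. I would handle this by citing the case analysis in the lemma's proof: denominators are bounded by \(O(\log(N/\varepsilon))\) bits and do not accumulate across levels, so a double-width product fits in two words.
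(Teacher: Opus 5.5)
Your proposal is correct and follows the paper's proof: you round \(\varepsilon\) down to an inverse power of two in \([\varepsilon/2,\varepsilon]\), apply \cref{lem:bounded-precision} with \(B\) equal to the word length \(\ell\), and note that this change of accuracy keeps the \((3+\varepsilon)\) guarantee while costing only a constant factor in time. You leave implicit one step that the paper states explicitly: \(\log(N/\varepsilon)=O(\ell)\), because the machine can address the input and \(\varepsilon\) occupies \(O(1)\) words, and this lower bound on the word length is what makes \(B+O(\log(N/\varepsilon))\) bits fit in \(O(1)\) words.
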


\begin{proof}
Let \(\ell\) be the word length.  The machine can address the input,
and \(\varepsilon\) has an \(O(1)\)-word rational representation,
so \(\log(N/\varepsilon)=O(\ell)\).
By repeated halving, choose an inverse power of two
\(\widehat\varepsilon\in[\varepsilon/2,\varepsilon]\).
Apply \cref{lem:bounded-precision} with accuracy
\(\widehat\varepsilon\) and \(B=\ell\).
Every integer in the resulting implementation has
\[
 \ell+O(\log(N/\widehat\varepsilon))=O(\ell)
\]
bits and hence occupies \(O(1)\) words.  Each arithmetic operation
therefore takes constant time in the stated model.
Since \(\widehat\varepsilon\le\varepsilon\), the approximation
factor is at most \(3+\varepsilon\).  Since
\(\widehat\varepsilon\ge\varepsilon/2\), the accuracy change affects
the running time by only a constant factor.
The time and probability guarantees now follow from
\cref{cor:weighted-cdgks-scale-selection}.
\end{proof}

\section*{Disclosure of AI assistance}

This work developed through extensive interaction with OpenAI's GPT-5.6
Sol and GPT-6 Astra.  Running-time exponents refer to input length for
fixed approximation parameters.

We asked Sol to develop a truly subquadratic constant-factor approximation
for metric weighted edit distance.  Following author guidance,
an overnight autonomous run produced an \(N^{40/21+o(1)}\)-time algorithm
that was, in retrospect, overcomplicated.  We then asked Sol to explain its
components and pointed out many inefficiencies.  Our most important
suggestion was to shorten the target $V$ of \(\WED_w(U,V)\) computations to \(O(|U|)\)
characters, with an additive correction for discarded mass
(\cref{lem:short-string-approximation}).  This helped yield an
\(\Ohtilde(N^{16/9})\)-time algorithm closely resembling~\cite{CDGKS20}.

The large number of target windows, and hence of sampled centers (called
pivots in~\cite{CDGKS20}), was the main bottleneck.  To address it, we
proposed organizing rows using few mass levels
(\cref{lem:local-source-partition}) and processing
independent rectangles (\cref{lem:alignment-rectangle-cover}).  These
suggestions led to analyses with exponents \(7/4\) and \(12/7\), initially
conditional on further lemmas.  The \(7/4\) analysis matched the window
count in~\cite{CDGKS20} but still used more centers, motivating the
rectangle decomposition.  The \(12/7\) route matched the exponent
in~\cite{CDGKS20}.  The rectangle approach and an adaptation
of Andoni's sparse scheme~\cite{Andoni20}, which we also requested, were
investigated in parallel.  We also asked Sol to replace the general planar
distance oracle of Mozes and Prigan~\cite{MozesPrigan26} with Klein's
multiple-source shortest-path data structure~\cite{Klein05}.
Combining these ideas, with further author
guidance on an early version of \cref{lem:batched-radius-test}, yielded
an \(\Ohtilde(N^{8/5})\)-time algorithm.
The approximation factor was still very large.  Sol then developed the
\((3+\varepsilon)\)-approximation analysis with limited further input
from us, including radius scales separated by factors of
\(1+\varepsilon\).  We also suggested adapting Kuszmaul's
approximation approach~\cite{Kuszmaul19} to remove dependence on the numerical aspect
ratio of the costs.  Sol supplied the detailed constructions and proofs
throughout these developments.

After switching to Astra, we reorganized the still complicated algorithm
into modules corresponding to the technical sections of the present paper.
We requested explicit section contracts and dependencies, guided their
simplification, and asked agents to seek further simplifications autonomously.
We challenged remaining complications and steered the
construction closer to Andoni's scheme~\cite{Andoni20}, including one target-window
family for all distance scales.  These iterations vastly simplified the interfaces
in~\cref{sec:short-pieces-shared-windows,sec:row-comparisons}. While reading the resulting draft, we
proposed many further local improvements, including helper lemmas separating distinct
proof steps.

After learning of the independent parallel work of Mader, Tavasoli, and
Wang~\cite{MaderTavasoliWang26}, we focused more on exposition.
Astra drafted the introduction from our outline and further revised the
technical overview, incorporating detailed feedback from us and reviewer agents.
Under our guidance, Astra also refined the RAM model assumptions and
precision analysis.  The final pass combined extensive manual editing with
AI-assisted refactoring, improving notation, language, and selected statements and proof
arguments.
We take full responsibility for the contents of this~manuscript.

\bibliographystyle{alphaurl}
\bibliography{weighted_cdgks}

\end{document}